\documentclass[runningheads]{llncs}
\usepackage[T1]{fontenc}
%
\usepackage{graphicx}









\bibliographystyle{plainurl}
\usepackage{enumitem}
\usepackage{subcaption}
\usepackage{amsmath}
\usepackage{amssymb}
\usepackage{mathtools}
\usepackage[dvipsnames,table,xcdraw]{xcolor}
\usepackage{complexity}
\usepackage{tikz} 
    \usetikzlibrary{automata, positioning, arrows, decorations.pathreplacing}
\usepackage{todonotes}
\usepackage{hyperref}
\usepackage[capitalise]{cleveref}

\newcommand{\tup}[1]{\langle #1 \rangle}

\newcommand{\cA}{\mathcal{A}}
\newcommand{\cB}{\mathcal{B}}
\newcommand{\cC}{\mathcal{C}}
\newcommand{\cD}{\mathcal{D}}

\newcommand{\cU}{\mathcal{U}}
\newcommand{\cV}{\mathcal{V}}
\renewcommand{\cP}{\mathcal{P}}

\newcommand{\bbN}{\mathbb{N}}
\newcommand{\bbZ}{\mathbb{Z}}

\renewcommand{\lang}{\mathcal{L}}
\newcommand{\effect}[1]{\textsf{eff}(#1)}

\definecolor{cbOne}{HTML}{DC3220} 
\definecolor{cbTwo}{HTML}{005AB5} 
\definecolor{cbThree}{HTML}{183E1C} 

\newcommand{\positive}{{\color{cbOne}>0}}
\newcommand{\nonnegative}{{\color{cbTwo} \geq 0}}
\newcommand{\negative}{{\color{cbThree} < 0}}

\newcommand{\posCol}[1]{{\color{cbOne}{#1}}}
\newcommand{\nonnegCol}[1]{{\color{cbTwo}{#1}}}

\newcommand{\newl}{L}
\newcommand{\newb}{B}




\renewcommand{\vec}[1]{\boldsymbol{#1}}

\renewcommand{\phi}{\varphi}


\newcommand{\set}[1]{\{#1\}}

\newcommand{\Uu}{\mathcal{U}}

\newcommand{\abs}[1]{|#1|}


\crefname{conjecture}{Conjecture}{Conjectures} 

\begin{document}

\title{Dimension-Minimality and Primality of Counter Nets\thanks{S. Almagor was supported by the ISRAEL SCIENCE FOUNDATION (grant No. 989/22), G. Avni was supported by the ISRAEL SCIENCE FOUNDATION (grant No. 1679/21), H. Sinclair-Banks was supported by EPSRC Standard Research Studentship (DTP), grant number EP/T5179X/1.}}
%
%
\author{Shaull Almagor\inst{1}\orcidID{0000-0001-9021-1175} \and Guy Avni\inst{2}\orcidID{0000-0001-5588-8287}  \and \\ Henry Sinclair-Banks\inst{3}\orcidID{0000-0003-1653-4069} \and Asaf Yeshurun\inst{1}}
\authorrunning{S. Almagor et al.}
%
\institute{Technion, Israel \\ \email{shaull@technion.ac.il,asafyeshurun@campus.technion.ac.il} 
\and Department of Computer Science, University of Haifa, Israel \\ \email{gavni@cs.haifa.ac.il} \and Centre for Discrete Mathematics and its Applications (DIMAP) \&\\ Department of Computer Science, University of Warwick, Coventry, UK\\ \email{h.sinclair-banks@warwick.ac.uk}
} 
\maketitle              

\begin{abstract}
    
A $k$-Counter Net ($k$-CN) is a finite-state automaton equipped with $k$ integer counters that are not allowed to become negative, but do not have explicit zero tests. 
This language-recognition model can be thought of as labelled vector addition systems with states, some of which are accepting.
Certain decision problems for $k$-CNs become easier, or indeed decidable, when the dimension $k$ is small.
Yet, little is known about the effect that the dimension $k$ has on the class of languages recognised by $k$-CNs. 
Specifically, it would be useful if we could simplify algorithmic reasoning by reducing the dimension of a given CN.

To this end, we introduce the notion of dimension-primality for $k$-CN, whereby a $k$-CN is prime if it recognises a language that cannot be decomposed into a finite intersection of languages recognised by $d$-CNs, for some $d<k$. 
We show that primality is undecidable.
We also study two related notions: dimension-minimality (where we seek a single language-equivalent $d$-CN of lower dimension) and language regularity.
Additionally, we explore the trade-offs in expressiveness between dimension and non-determinism for CN.

\end{abstract}

    
\section{Introduction}
\label{sec:introduction}
A \emph{$k$-dimensional Counter Net} ($k$-CN) is a finite-state automaton equipped with $k$ integer counters that are not allowed to become negative, but do not have explicit zero tests (see~\cref{subfig:2CN_intro_a} for an example). 
This language-recognition model can be thought of as an alphabet-labelled Vector Addition System with States (VASS), some of whose states are accepting~\cite{czerwinski2020universality}. 
A $k$-CN $\cA$ over alphabet $\Sigma$ \emph{accepts} a word $w\in \Sigma^*$ if there is a run of $\cA$ 
on $w$ that ends in an accepting state in which the counters stay non-negative. The \emph{language} of $\cA$ is the set $\lang(\cA)$ of words accepted by $\cA$.

Counter nets are a natural model of concurrency and are closely related --- and equivalent, in some senses --- to labelled Petri Nets. 
These models have received significant attention over the years~\cite{cabasino2013diagnosability,czerwinski2020universality,esparza2005decidability,figueira2019co,greibach1978remarks,hack1976petri,render2009rational}, with specific interest in the one-dimensional case, often referred to as one-counter nets~\cite{hofman2013decidability,hofman2014trace,almagor2020parametrized,almagor2022determinization}. 
Unfortunately, most decision problems for $k$-CNs are notoriously difficult and are often undecidable~\cite{almagor2020parametrized,almagor2022determinization}. In particular, $k$-CNs subsume VASS and Petri nets, for which many problems are known to be Ackermann-complete, for example see the recent breakthrough in the complexity of reachability in VASS~\cite{czerwinski2022reachability,leroux2022reachability}.

In many cases, the complexity of decision problems for VASS, sometimes with extensions, depends on the dimension, with low dimensions admitting more tractable solutions.~\cite{czerwinski2019new,czerwinski2020reachability,czerwinski2021improved,finkel2018reachability}. 
%
For example, reachability in dimensions one and two is \NP-complete~\cite{haase2009reachability} and \PSPACE-complete~\cite{blondin2021reachability}, respectively, when counter updates are encoded in binary.

A natural question, therefore, is whether we can \emph{decrease} the dimension of a given a $k$-CN whilst maintaining its language, to facilitate reasoning about it. More generally, the trade-off between expressiveness and the dimension of Counter Nets is poorly understood. 
We tackle this question in this work by introducing two approaches.
The first is straightforward \emph{dimension-minimality}: given a $k$-CN, does there exist a $d$-CN $\cB$ recognising the same language for some $d<k$? 

The second approach is \emph{primality}: given a $k$-CN, does there exist some $d<k$ and $d$-CNs $\cB_1, \ldots, \cB_n$ such that $L(\cA)=\bigcap_{i=1}^n \lang(\cB_i)$?
That is, we ask whether the language of $\cA$ can be decomposed as an intersection of languages recognised by several lower-dimension CNs. 
We also consider \emph{compositeness}, the dual of primality.
Intuitively, in a composite $k$-CN the usage of the counters can be ``split'' across several lower-dimension CNs, allowing for properties (such as universality) to be checked on each conjunct separately. 

\begin{example}
\label{xmp:composite 2CN}
We illustrate the model and the definition of compositionality. Consider the $2$-CN $\cA$ depicted in \cref{subfig:2CN_intro_a}, and consider a word $w = a^m \# b^n \# c^k$. We have that $\cA$ has an accepting run on $w$ iff $m \geq n$ and $m \geq k$. Indeed, if $m < n$, the first counter drops below $0$ while cycling in the second state and so the run is ``stuck'', and similarly if $m<k$. It is not hard to show that there is no $1$-CN that recognizes the languages of $\cA$. However, \cref{subfig:2CN_intro_b} shows two $1$-CNs $\cB_1$ and $\cB_2$ such that $\lang(\cB) = \lang(\cB_1) \cap \lang(\cB_2)$. Indeed, a word $w = a^m \# b^n \# c^k \in \lang(\cB_1)$ iff $m \geq n$, and $w \in \lang(\cB_2)$ iff $m \geq k$. 
\end{example}

\begin{figure}[ht]
    \centering
     \begin{subfigure}{0.35\textwidth}     
        \begin{tikzpicture}[auto,node distance=1.5cm,scale=1]
        \footnotesize
            \node (q0) [initial, state, initial text = {},inner sep=3pt, minimum size=5pt] {};
            \node (q1) [state,inner sep=3pt, minimum size=3pt] at (1.6,0) {};
            \node (q2) [accepting, state,inner sep=3pt, minimum size=5pt] at (3.2,0) {};
            \path [-stealth]
            (q0) edge [bend left=0] node[below] {$\#,(0,0)$} (q1)
            (q1) edge [bend left=0] node[below] {$\#,(0,0)$} (q2)
            (q0) edge [loop above] node {$a,(1,1)$} (q0)
            (q1) edge [loop above] node {$b,(-1,0)$} (q1)
            (q2) edge [loop above] node {$c,(0,-1)$} (q2);
        \end{tikzpicture}
        \caption{A composite $2$-CN.}
        \label{subfig:2CN_intro_a}
        \end{subfigure}
     \hfill
     \begin{subfigure}{0.64\textwidth}
    \centering
    \begin{tikzpicture}[auto,node distance=1.5cm,scale=1]
        \footnotesize
            \node (q0) [initial, state, initial text = {},inner sep=3pt, minimum size=5pt] {};
            \node (q1) [state,inner sep=3pt, minimum size=3pt] at (1.5,0) {};
            \node (q2) [accepting, state,inner sep=3pt, minimum size=5pt] at (3,0) {};
            \path [-stealth]
            (q0) edge [bend left=0] node[below] {$\#,(0)$} (q1)
            (q1) edge [bend left=0] node[below] {$\#,(0)$} (q2)
            (q0) edge [loop above] node {$a,(1)$} (q0)
            (q1) edge [loop above] node {$b,(-1)$} (q1)
            (q2) edge [loop above] node {$c,(0)$} (q2);
        \end{tikzpicture}
        \begin{tikzpicture}[auto,node distance=1.5cm,scale=1]
        \footnotesize
            \node (q0) [initial, state, initial text = {},inner sep=3pt, minimum size=5pt] {};
            \node (q1) [state,inner sep=3pt, minimum size=3pt] at (1.5,0) {};
            \node (q2) [accepting, state,inner sep=3pt, minimum size=5pt] at (3,0) {};
            \path [-stealth]
            (q0) edge [bend left=0] node[below] {$\#,(0)$} (q1)
            (q1) edge [bend left=0] node[below] {$\#,(0)$} (q2)
            (q0) edge [loop above] node {$a,(1)$} (q0)
            (q1) edge [loop above] node {$b,(0)$} (q1)
            (q2) edge [loop above] node {$c,(-1)$} (q2);
        \end{tikzpicture}
        \caption{Two $1$-CNs showing compositeness of the $2$-CN.}
        \label{subfig:2CN_intro_b}
     \end{subfigure}
     \caption{A composite $2$-CN whose language is $\{a^m\#b^n\#c^k\mid m\ge n\wedge m\ge k\}$ and its decomposition into two $1$-CNs recognising the languages $\{a^m\#b^n\#c^k\mid m\ge n\}$ and $\{a^m\#b^n\#c^k\mid m\ge k\}$.\\
     }
    \label{fig:composite_2CN_intro}
\end{figure}

Note that the decomposition in \cref{xmp:composite 2CN} is obtained by ``splitting'' the counters between the two $1$-CNs. This raises the question of whether such splittings are always possible. 
As we show in~\cref{prop:DCN_Projection}, for deterministic $k$-CNs ($k$-DCNs) this is indeed the case. In general, however, it is not hard to find examples where a $k$-CN cannot simply be split to an intersection by projecting on each counter. This however, does not rule out that other decompositions are possible.
Our main result,~\cref{thm:2CN_prime}, gives an example of a prime $2$-CN. That is, a $2$-CN whose language cannot be expressed as an intersection of $1$-CNs.

The notion of primality has been studied for regular languages in~\cite{kupferman2015prime,jecker2020unary,jecker2021decomposing}, the exact complexity of deciding primality is still open. 
There, an automaton is composite if it can be written as an intersection of finite automata with fewer \emph{states}. 
In this work we introduce primality for CNs.
We focus on \emph{dimension} as a measure of size, a notion which does not exist for regular languages. 
Thus, unlike regular languages, the differences between prime and composite CNs is not only in succinctness, but actually in expressiveness, as we later demonstrate. 

We parameterise primality and compositeness by the dimension $d$ and the number $n$ of lower-dimension factors. 
Thus, a $k$-CN $\cA$ is \emph{$(d,n)$-composite} if it can be written as the intersection above. 
Then, $\cA$ is \emph{composite} if it is $(d,n)$-composite for some $d<k$ and $n\in \bbN$. 
Under this view, dimension-minimality is a special case of compositeness, namely $\cA$ is dimension-minimal if it is not $(k-1,1)$-composite. 
Another particular problem captured by compositeness is \emph{regularity}.
Indeed, $\lang(\cA)$ is regular if and only if $\cA$ is $(0,1)$-composite, since $0$-CNs are just NFAs.
Since regularity is already undecidable for $1$-CNs~\cite{almagor2022determinization,Valk81}, it follows that deciding whether a $k$-CN is $(d,n)$-composite is undecidable. 
Moreover, it follows that both primality and dimension-minimality are undecidable for $1$-CNs. 

The undecidability of the above problems is not surprising, as the huge difference in expressive power between $1$-CNs and regular languages is well understood. 
In contrast, even the expressive power difference between $1$-CNs and $2$-CNs is poorly understood, let alone what effect the dimension has on the expressive power beyond regular languages.
Already, 1-VASS and 2-VASS are known to have \emph{flat} equivalents with respect to reachability~\cite{leroux2004flatness,blondin2021reachability}, but the complexity differs greatly.

Our goal in this work is to shed light on these differences. 
In~\cref{sec:prime-2vass}, we give a concrete example of a prime $2$-CN, which turns out to be technically challenging. 
This example is the heart of our technical contribution, and we emphasise that we \emph{do not} know have a proved example of a prime $3$-CN, let alone for general $k$-CN (although we conjecture a candidate for such languages). We consider this an interesting open problem, as it highlights the type of pumping machinery that is currently missing from the VASS/CN reasoning arsenal.
The technical intricacy in proving our example suggests that generalising it is highly nontrivial. Indeed, proving this claim would require intricate pumping arguments, which are notoriously difficult even for low-dimensional CNs~\cite{czerwinski2019new}.

Using our example, we obtain in~\cref{sec:primality_undecidable}, the undecidability of primality and of dimension-minimality for $2$-CNs. 
To complement this, we show in~\cref{thm:DCN_regularity_decidable}, that regularity of $k$-DCNs is decidable.
In~\cref{sec:dimension_vs_nondet}, we explore trade-offs in expressiveness of CNs with increasing dimension and with nondeterminism. In particular, we show that there is a strict hierarchy of expressiveness with respect to the dimension. 
We conclude with a discussion in~\cref{sec:discussion}.
For brevity, some proofs appear in the appendix.

\section{Preliminaries}
\label{sec:preliminaries}
We denote the non-negative integers $\{0,1,\ldots\}$ by $\bbN$. 
We write vectors in bold, e.g., $\vec{e}\in \bbZ^k$, and $\vec{e}[i]$ is the $i$-th coordinate. 
We use $[k]=\{1,\ldots,k\}$ for $k\ge 1$. 
We use $\Sigma^*$ to denote the set of all words over an alphabet $\Sigma$, and $|w|$ is the length of $w\in \Sigma^*$.

A \emph{$k$-dimensional Counter Net} ($k$-CN) $\cA$ is a quintuple $\cA=\tup{\Sigma, Q, Q_0, \delta, F}$ where  $\Sigma$ is a finite alphabet, $Q$ is a finite set of states, $Q_0 \subseteq Q$ is the set of initial states, $\delta \subseteq Q \times \Sigma \times \bbZ^k \times Q$ is a set of transitions, and $F \subseteq Q$ are the accepting states. 
A $k$-CN is \emph{deterministic}, denoted $k$-DCN, if $|Q_0| = 1$, and for every $p \in Q$ and $\sigma \in \Sigma$ there is at most one transition of the form $(p, \sigma, \vec{v}, q) \in \delta$. 
For a transition $(p, \sigma, \vec{v}, q)\in\delta$, we refer to $\vec{v} \in \bbZ^k$ as its \emph{effect}. 

An $\bbN$-configuration (resp. $\bbZ$-configuration) of a $k$-CN $\cA$ is a pair $(q,\vec{v})\in Q\times \bbN^k$ (resp. $(q,\vec{v})\in Q\times \bbZ^k$) representing the current state and values of the counters.
A transition $(p, \sigma, \vec{e}, q) \in \delta$ is \emph{valid} from $\bbN$-configuration $(q, \vec{v})$ if $\vec{v} + \vec{e} \in \bbN^k$, i.e., if all $k$ counters remain non-negative after the transition. 
A \emph{$\bbZ$-run} $\rho$ of $\cA$ on $w$ is a sequence of $\bbZ$-configurations $\rho=(q_0, \vec{v}_0), (q_1, \vec{v}_1), \ldots, (q_n, \vec{v}_n)$ such that $(q_i, \sigma_i, \vec{v}_{i+1} - \vec{v}_i, q_{i+1}) \in \delta$ for every $0 \leq i \leq n-1$, we may also say that $\rho$ \emph{reads} $w = \sigma_0 \sigma_1 \cdots \sigma_n$.
An \emph{$\bbN$-run} is a $\bbZ$-run that visits only $\bbN$-configuration.
Note that all the transitions in an $\bbN$-run are valid. 
We may omit $\bbN$ or $\bbZ$ from the run when it does not matter.
For a run $\rho = (q_0, \vec{v}_0), (q_1, \vec{v}_1), \ldots, (q_n, \vec{v}_n)$ of $\cA$, we denote $(q_0,\vec{v}_0)\stackrel{\rho}{\to}(q_n, \vec{v}_n)$. 
We define the \emph{effect} of $\rho$ to be $\effect{\rho}=\vec{v}_n - \vec{v}_0$. 

An $\bbN$-run $\rho$ is \emph{accepting} if $q_0 \in Q_0$, $\vec{v}_0 = \vec{0}$, and $q_n \in F$. 
We say that $\cA$ \emph{accepts} $w$ if there is an accepting $\bbN$-run of $\cA$ on $w$. 
The \emph{language} of $\cA$ is $\lang(\cA)=\{w\in \Sigma^* \mid \cA \text{ accepts } w\}$.   

An infix $\pi = (q_k, \vec{v}_k), (q_{k+1}, \vec{v}_{k+1}), \ldots, (q_{k+n}, \vec{v}_{k+n})$ of a run $\rho$ is a \emph{cycle} if $q_k = q_{k+n}$ and is a \emph{simple cycle} if it does not contain a cycle as a proper infix.  
When discussing an infix $\pi$ of a $1$-CN -- we write that $\pi$ is $\positive$, $\nonnegative$, or $\negative$ if $\effect{\pi} > 0$, $\effect{\pi} \geq 0$, or $\effect{\pi} < 0$, respectively. 

\section{Primality and Compositeness}
\label{sec:primality}

We begin by presenting our main definitions, followed by some introductory properties.

\begin{definition}[Compositeness, Primality, and Dimension-Minimality]
    \label{def:primality}
    Consider a $k$-CN $\cA$, and let $d,n\in \bbN$. We say that $\cA$ is \emph{$(d,n)$-composite} if there exist $d$-CNs $\cB_1,\ldots,\cB_n$ such that $\lang(\cA)=\bigcap_{i=1}^n\lang(\cB_i)$.	
If $\cA$ is $(d,n)$-composite for some $d<k$ and $n\in \bbN$, we say $\cA$ is \emph{composite}. Otherwise,  $\cA$ is \emph{prime}.	
If $\cA$ is not $(k-1,1)$-composite, we say that $\cA$ is \emph{dimension-minimal}
\end{definition}
\begin{remark}
    \label{rmk:compositeness_generalizes_reg}
    Note that the special case where $\cA$ is $(0,n)$-composite coincides with the regularity of $\lang(\cA)$, and hence also with being $(0,1)$-composite.
\end{remark}

Observe that in~\cref{fig:composite_2CN_intro} we in fact show a composite $2$-DCN. 
We now show that every $k$-DCN is $(1,k)$-composite, by projecting to each of the counters separately. 
In particular, a $k$-DCN is prime only when $k=1$ and it recognises a non-regular language, or when $k=0$.
Formally, consider a $k$-DCN $\cD=\tup{\Sigma,Q,Q_0,\delta,F}$ and let $1\le i\le k$. 
We define the \emph{$i$-projection} to be the 1-DCN $\cD|_i=\tup{\Sigma,Q,Q_0,\delta|_i,F}$ where $\delta|_i = \set{ (p, \sigma, \vec{v}[i], q) \mid (p, \sigma, \vec{v}, q) \in \delta }$. 

\begin{proposition}
    \label{prop:DCN_Projection}
    Every $k$-DCN $\cD$ is $(1,k)$-composite. 
    Moreover, $\lang(\cD)=\bigcap_{i=1}^k\lang(\cD|_i)$.
\end{proposition}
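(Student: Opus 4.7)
The plan is a direct double-inclusion argument that exploits determinism together with the fact that $\cD|_i$ shares the exact same underlying state graph as $\cD$, so runs of $\cD$ and of the projections correspond bijectively and differ only in which counter coordinate is tracked.

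For the inclusion $\lang(\cD) \subseteq \bigcap_{i=1}^k \lang(\cD|_i)$, I would start from an accepting $\bbN$-run $\rho = (q_0, \vec{v}_0), \ldots, (q_n, \vec{v}_n)$ of $\cD$ on $w$, and define its coordinate projection $\rho|_i := (q_0, \vec{v}_0[i]), \ldots, (q_n, \vec{v}_n[i])$. By the definition of $\delta|_i$, each consecutive pair of configurations of $\rho|_i$ is connected by a transition of $\cD|_i$. Since $\vec{v}_j \in \bbN^k$ for every $j$, in particular $\vec{v}_j[i] \geq 0$, so $\rho|_i$ is an $\bbN$-run; it starts in $(q_0, 0)$ and ends in $q_n \in F$, so $w \in \lang(\cD|_i)$.

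For the converse, suppose $w \in \bigcap_{i=1}^k \lang(\cD|_i)$. Determinism of the shared underlying state graph means that if any $\cD|_i$ has a $\bbZ$-run on $w$ from the initial state, then so does $\cD$, and moreover there is a unique such $\bbZ$-run $\rho = (q_0,\vec{v}_0), \ldots, (q_n,\vec{v}_n)$ of $\cD$, whose coordinate projection $\rho|_i$ must coincide with the unique $\bbZ$-run of $\cD|_i$ on $w$. Since each $\rho|_i$ is assumed to be accepting, we have $\vec{v}_j[i] \geq 0$ for every $i$ and $j$, and $q_n \in F$. Taking the conjunction over $i \in [k]$ gives $\vec{v}_j \in \bbN^k$ for every $j$, so $\rho$ is an accepting $\bbN$-run of $\cD$ on $w$, whence $w \in \lang(\cD)$.

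No real obstacle arises: determinism eliminates any ambiguity about which state-path a word induces, so what is being distributed across the $k$ projections is precisely the coordinate-wise non-negativity constraint, and $\vec{v} \in \bbN^k$ is literally the conjunction of $\vec{v}[i] \geq 0$ over $i \in [k]$. This is also why the proof is fragile under relaxations: nondeterminism would allow different $\cD|_i$ to witness $w$ via different state sequences, breaking the reconstruction step, which foreshadows why the result is stated for DCNs rather than general CNs.
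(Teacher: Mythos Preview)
Your proof is correct and follows essentially the same approach as the paper: project the accepting run for the forward inclusion, and use determinism to conclude that the accepting runs of the $\cD|_i$ share a common state sequence, which reassembles into an accepting $\bbN$-run of $\cD$ for the converse. The paper's write-up is terser but the argument is identical, including the observation that only the converse direction relies on determinism.
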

\begin{proof}
Let $w\in \lang(\cD)$ and let $\rho$ be the accepting run of $\cD$ on $w$, then the projection of $\rho$ on counter $i$ induces an accepting run of $\cD|_i$ on $w$, thus $w\in \bigcap_{i=1}^k\lang(\cD|_i)$. 
Note that this direction does not use the determinism of $\cD$.

Conversely, let $w\in \bigcap_{i=1}^k\lang(\cD|_i)$, then each $\cD|_i$ has an accepting run $\rho_i$ on $w$. 
Since the structure of all the $\cD|_i$ is identical to that of $\cD$, all the runs $\rho_i$ have identical state sequences, and therefore are also a $\bbZ$-run of $\cD$ on $w$. 
Moreover, due to this being a single $\bbN$-run in each $\cD|_i$, it follows that all counter values remain non-negative in the corresponding run of $\cD$ on $w$. 
Hence, this is an accepting $\bbN$-run of $\cD$ on $w$, so $w\in \lang(\cD)$.
\hfill\qed\end{proof}

\begin{remark}[Unambiguous Counter Nets are Composite]
\label{rmk:unambiguous}
The proof of~\cref{prop:DCN_Projection} applies also to \emph{structurally unambiguous} CNs, i.e. CNs whose underlying automaton, disregarding the counters, is unambiguous.
Thus, every unambiguous CN is $(1, k)$-composite.
\end{remark}

Consider $k$-CNs $\cB_1,\ldots,\cB_n$. 
By taking their product, we can construct a $kn$-CN $\cA$ such that $\lang(\cA) = \bigcap_{i=1}^n \lang(\cB_i)$. 
In particular, if each $\cB_i$ is a $1$-DCN, then $\cA$ is an $n$-DCN. 
Combining this with~\cref{prop:DCN_Projection}, we can deduce the following (proof in~\cref{apx:DCN_dimension_minimal_iff_composite}).
\begin{proposition}
\label{prop:DCN_dimension_minimal_iff_composite}
    A $k$-DCN is dimension-minimal if and only if it is not $(1,k-1)$-composite.
\end{proposition}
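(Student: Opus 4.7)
The plan is to establish the two implications separately, relying on the product construction recalled in the paragraph preceding the proposition and on~\cref{prop:DCN_Projection}.

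For the direction dimension-minimal $\Rightarrow$ not $(1,k-1)$-composite, I argue by contrapositive. If $\cA$ is $(1,k-1)$-composite with $1$-CN witnesses $\cB_1,\ldots,\cB_{k-1}$, then the product $\cB_1\times\cdots\times\cB_{k-1}$ is a $(k-1)$-CN whose language is $\bigcap_{i=1}^{k-1}\lang(\cB_i)=\lang(\cA)$; hence $\cA$ is $(k-1,1)$-composite, i.e., not dimension-minimal.

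For the converse, I again argue by contrapositive: suppose $\cA$ is not dimension-minimal, so there is a $(k-1)$-CN $\cB$ with $\lang(\cB)=\lang(\cA)$, and I aim to show that $\cA$ is $(1,k-1)$-composite. The starting point is~\cref{prop:DCN_Projection} applied to $\cA$, which yields the $(1,k)$-decomposition $\lang(\cA)=\bigcap_{i=1}^k\lang(\cA|_i)$ via $1$-DCNs. The plan is to reduce from $k$ projections to $k-1$ by using the existence of the lower-dimensional recognizer $\cB$ to certify that one of the projections $\cA|_i$ is subsumed by the conjunction of the others, so that the remaining $k-1$ projections already have intersection equal to $\lang(\cA)$.

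The main obstacle is the potential nondeterminism of $\cB$, which blocks a direct application of~\cref{prop:DCN_Projection} to $\cB$ itself. The intended workaround leverages the determinism of $\cA$: because each word has a unique $\bbZ$-run in $\cA$, the projections $\cA|_i$ trace synchronized trajectories, coupling the $k$ counter values together. Using this rigidity, I expect the argument to proceed by contradiction -- inspecting candidate words that would simultaneously witness non-redundancy of every projection and showing this contradicts $\lang(\cA)=\lang(\cB)$ -- thereby forcing at least one projection to be subsumed by the others. Formalizing this subsumption step, which is where a pumping-style analysis seems unavoidable, is where I expect the technical weight of the proof to lie.
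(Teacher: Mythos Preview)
Your first direction is correct and matches the paper exactly.

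For the second direction, the target of your workaround --- that one of the projections $\cA|_i$ is subsumed by the intersection of the remaining ones --- is simply false. Take the $3$-DCN $\cD$ for $L_3$ in~\cref{fig:k_DCN_with_1CN_but_no_k-1_DCN}. Since $L_3$ is recognised by a $1$-CN (\cref{fig:1_CN_for_k_DCN_with_1CN_but_no_k-1_DCN}), it is in particular recognised by a $2$-CN, so $\cD$ is not dimension-minimal. Yet for each $i\in\{1,2,3\}$ the word $a_1 a_2 a_3\, b_i\, c^{100}$ lies in $\bigcap_{j\neq i}\lang(\cD|_j)$ (the projection $\cD|_j$ with $j\neq i$ never decrements its counter after reading $b_i$) but not in $\lang(\cD|_i)$. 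Hence no projection of $\cD$ can be dropped, and no pumping argument on $\cA$ can rescue the plan: the conclusion you are trying to force does not hold.

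The paper does not work on the projections of $\cA$ at all; it takes precisely the route you dismissed, applying the projection decomposition directly to the witnessing $(k-1)$-CN $\cB$ and writing $\lang(\cD)=\lang(\cB)=\bigcap_{i=1}^{k-1}\lang(\cB|_i)$. Your observation that~\cref{prop:DCN_Projection} is stated only for deterministic CNs is well taken --- the paper covers this step with a ``w.l.o.g.'' rather than an explicit argument --- but the substantive point is that the paper projects $\cB$, whereas your alternative of pruning a projection of $\cA$ cannot succeed.
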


\section{A Prime Two-Counter Net}
\label{sec:prime-2vass}
In this section we present our main technical contribution, namely an example of a prime $2$-CN. 
The technical difficulty arises from the need to prove that this example cannot be decomposed as an \emph{intersection} of \emph{nondeterministic} $1$-CNs. 
Since intersection has a ``universal flavour'', and nondeterminism has an ``existential flavour'', we have a sort of ``quantifier alternation'' which is often a source of difficulty.

The importance of this example is threefold.
First, it enables us to show in that primality is undecidable in~\cref{sec:primality_undecidable}. 
Second, it offers intuition on what makes a language prime. 
Third, we suspect that the techniques developed here will be useful in other settings when reasoning about nondeterministic automata, perhaps with counters.

We start by presenting the prime $2$-CN, followed by an overview of the proof, before delving into the details.
\begin{example}
\label{xmp:prime2CN}
Consider the $2$-CN $\cP$ over alphabet $\Sigma=\{a,b,c,\#\}$ depicted in~\cref{fig:prime-2vass}.
Intuitively, $\cP$ starts by reading segments of the form $a^m\#$, where in each segment it nondeterministically chooses whether to increase the first or second counter by $m$. 
Then, it reads $b^{m_b}c^{m_c}$ and accepts if the value of the first and second counter is at least $m_b$ and $m_c$, respectively.
Thus, $\cP$ accepts a word if its $a^m\#$ segments can be partitioned into two sets $I$ and $\overline{I}$ so that the combined lengths of the segments in $I$ (resp. $\overline{I}$) is at least the length of the $b$ segment (resp. $c$ segment).		
For example, $a^{10}\#a^{20}\#a^{15}\#b^{15}c^{30}\in \lang(\cP)$, since segments 1 and 2 have length $30$, matching $c^{30}$ and segment $3$ matches $b^{15}$. However, $a^{10}\#a^{20}\#a^{15}\#b^{21}c^{21}\notin \lang(\cP)$, since in any partition of $\{10,20,15\}$, one set will have sum lower than $21$.		
More precisely, we have the following:	
\begin{equation*}
    \lang(\cP)=\{a^{m_1}\#a^{m_2}\#\cdots\#a^{m_{t}}\#b^{m_b}c^{m_c} \mid \exists I\subseteq [t] \text{ s.t. }\sum_{i\in I}m_i\ge m_b \wedge \sum_{i\notin I}m_i\ge m_c\}
\end{equation*}
\begin{figure}[ht!]
    \centering
    \begin{tikzpicture}[auto,node distance=1.5cm,scale=1]
        \footnotesize
        \thinmuskip=0mu
            \clip (-4,-1.3) rectangle (7, 1.3);  
            \node (p) [initial, state, initial text = {},inner sep=3pt, minimum size=5pt] at (0,1) {};
            \node (q) [initial, state, initial text = {}, inner sep=3pt, minimum size=3pt] at (0,-1) {};
            \node (r) [accepting, state,inner sep=3pt, minimum size=5pt] at (3,0) {};
            \node (f) [accepting, state,inner sep=3pt, minimum size=5pt] at (5.5,0) {};
            \path [-stealth]
            (p) edge [bend left=20] node[right] {$\#,(0,0)$} (q)
            (q) edge [bend left=20] node[left] {$\#,(0,0)$} (p)
            (p) edge [loop, out=220, in=150,distance = 15mm,] node[left=-2mm] {$\begin{array}{r}
                   a,(1,0)\\ \#,(0,0)
            \end{array}$} (p)
            (q) edge [loop, out=210, in=140,distance = 15mm] node[left=-2mm] {$\begin{array}{r}
                   a,(0,1)\\ \#,(0,0)
            \end{array}$} (q)
            (p) edge [bend left=20] node[below, sloped, pos = 0.4] {$\#,(0,0)$} (r)
            (q) edge [bend right=20] node[above, sloped, pos = 0.4] {$\#,(0,0)$} (r)
            (r) edge [loop above,distance=10mm,out=120,in=60] node {$b,(-1,0)$} (r)
            (f) edge [loop above,distance=10mm,out=120,in=60] node {$c,(0,-1)$} (f)
            (r) edge [bend left=0] node {$c,(0,-1)$} (f);
\end{tikzpicture}
    \caption{The prime $2$-CN $\cP$ for~\cref{xmp:prime2CN} and~\cref{thm:2CN_prime}.}
    \label{fig:prime-2vass}
\end{figure}
\end{example}

\begin{theorem}
\label{thm:2CN_prime}
$\cP$ is prime. 
\end{theorem}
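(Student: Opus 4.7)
The plan is to argue by contradiction: assume $\lang(\cP) = \bigcap_{i=1}^n \lang(\cB_i)$ for 1-CNs $\cB_i$. Let $K$ be an upper bound on $|Q_i|$ and let $W$ bound the absolute counter effects of any transition in any $\cB_i$. I would restrict attention to the ``uniform'' subfamily of inputs $w(t,N,m_b,m_c) = (a^N\#)^t \, b^{m_b}\, c^{m_c}$, on which membership in $\lang(\cP)$ simplifies to the existence of $j \in \{0, \ldots, t\}$ with $jN \geq m_b$ and $(t-j)N \geq m_c$, equivalently $\lceil m_b/N \rceil + \lceil m_c/N \rceil \leq t$. I would pick $N$ and $t$ both much larger than $K$, $W$, and $n$, so that pigeonhole and pumping become available both inside each $a^N\#$ block and across the $t$ repeated blocks.

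First I would analyse each $\cB_i$ in isolation. Since $\cB_i$ must accept every $w(t,N,jN,(t-j)N) \in \lang(\cP)$, for each $j$ there is a corresponding accepting run. Pigeonhole on the pair of states (entering and leaving) of each of the $t$ blocks, combined with pumping within a block at a state that repeats, yields a per-block ``summary'' consisting of a state pair and a net counter effect. For $t$ large enough, each accepting run is essentially encoded by a vector recording how many blocks of each summary type it uses, so the counter value at the end of the $a$-section and the state entering the $b$-section take few distinct values as $j$ varies, with piecewise-linear dependence on the chosen summary vector.

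The main obstacle, and the technical heart of the argument, is to use these rigid per-$\cB_i$ descriptions to construct a single input $w^* \notin \lang(\cP)$ that every $\cB_i$ still accepts. I would take $m_b^* = m_c^* = \lfloor tN/2 \rfloor + 1$, so that $m_b^* + m_c^* > tN$ and hence $w^* \notin \lang(\cP)$. For each $\cB_i$, I would start from its accepting run on a witness $w(t,N, j_i N, (t - j_i) N)$ for some $j_i$ near $t/2$, and then pump the common block summaries to boost the counter enough for $\cB_i$ to survive the enlarged $b$- and $c$-sections of $w^*$. With only $n$ machines but arbitrarily many blocks, a Ramsey-style coordination should yield a single pumping schedule simultaneously compatible with all $\cB_i$. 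The delicate point is precisely the ``quantifier alternation'' highlighted in the text: a single counter can enforce the sum demand on the $b$-side \emph{or} on the $c$-side but not both, so each $\cB_i$ must commit to one side nondeterministically, and one must show that no such family of commitments, one per $\cB_i$, can jointly certify the sharp inequality $m_b + m_c \le tN$ that characterises $\lang(\cP)$ on this subfamily.
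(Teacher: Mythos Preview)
Your proposal has a genuine gap at the decisive step. You fix the target word $w^* = (a^N\#)^t b^{m_b^*} c^{m_c^*}$ with $m_b^* + m_c^* > tN$, and then, for each $\cB_i$, propose to start from an accepting run on a nearby accepted word $w(t,N,j_iN,(t-j_i)N)$ and ``pump the common block summaries to boost the counter.'' But any such pumping---whether iterating a cycle inside some $a^N$ block or repeating whole blocks---changes the $a$-prefix, so the resulting run is not a run on $w^*$. You therefore never establish that $\cB_i$ accepts $w^*$ itself. Without a further idea there is no reason a $1$-CN accepting $w(t,N,j_iN,(t-j_i)N)$ must also accept the word with both $b$- and $c$-counts increased: the accepting run may reach the end of the $b$-section, or the end of the $c$-section, with counter value~$0$. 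Your final paragraph correctly names the difficulty (``quantifier alternation'') but does not resolve it; the Ramsey-style coordination you invoke would at best synchronise the \emph{pumped} words across the $\cB_i$, and those words have strictly more $a$'s than $w^*$, hence lie back in $\lang(\cP)$.

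The paper's argument avoids this trap in two essential ways. First, it does not fix the target word in advance. It works with $k{+}1$ segments of \emph{different}, inductively chosen lengths, so that in every accepting run of every $\cV_j$ each segment is long enough to force a $\nonnegative$ cycle (\cref{lem:largeenoughwordnonnegativecycle} and the construction in \cref{lem:two_segments_one_bad}). Second, it proves that each $\cV_j$ has at most one ``good'' segment (\cref{lem:two_segments_one_bad}); with $k{+}1$ segments and $k$ machines, pigeonhole yields a segment $\ell$ that is bad in \emph{every} $\cV_j$. The contradiction then comes from pumping segment $\ell$ together with $m_b$ and $m_c$ simultaneously in all $\cV_j$ (\cref{lem:bad_segements_are_bad}), while the other $k$ segments remain \emph{bounded}. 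Non-membership in $\lang(\cP)$ follows because segment $\ell$ can pay for only one of $b,c$, and the bounded remaining segments eventually cannot pay for the other. Your uniform restriction $(a^N\#)^t$ erases exactly the asymmetry---one unbounded segment against boundedly many bounded ones---on which this works; with all blocks identical, any pumping that helps the $\cB_i$ also grows the total $a$-budget and keeps the word inside $\lang(\cP)$.
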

The high-level intuition behind~\cref{thm:2CN_prime} is that any $1$-CN can either guess a subset of segments that covers $m_b$ or $m_c$, but not both, and in order to make sure the choices between two $1$-CNs form a partition, we need to fix the partition in advance. This is only possible if the number of segments is a priori fixed, which is not true (c.f.,~\cref{rmk:unbounded_compositeness}). This intuition, however, is far from a proof.

\subsection{Overview of the Proof of~\cref{thm:2CN_prime}}
Assume by way of contradiction that $\cP$ is not a prime $2$-CN. 
Thus, there exist 1-CNs $\cV_1,\ldots \cV_k$ such that $\lang(\cP)=\bigcap_{1 \leq j \leq k}\lang(\cV_j)$.

Throughout the proof, we focus on words of the form $a^{m_1} \# a^{m_2} \# \cdots \# a^{m_{k+1}} \# b^{m_b} c^{m_c}$ for integers $\left\{ m_i \right\}_{i=1}^{k+1}, m_b, m_c \in \bbN$. 
We index the $a^{m_i}$ segments of these words, so $a^{m_i}$ is the $i$-th segment.
Note that we focus on words with $k+1$ many $a$ segments, one more than the number of $\cV_j$ factors in the intersection. 
It is useful to think about each segment as ``paying'' for either $b$ or $c$. 
Then, a word is accepted if there is a way to choose for each segment whether it pays for $b$ or $c$, such that there is sufficient budget for both.

Let $i\in [k+1]$ and $j\in [k]$. 
We say that the $i$-th segment is \emph{bad} in $\cV_j$ if, 
 intuitively, we can pump the length $m_i$ of segment $i$ whilst pumping both $m_b$ and $m_c$ to unbounded lengths, such that the resulting words are accepted by $\cV_j$ (see~\cref{def:good_bad} for the formal definition).
For example, consider the word $a^{10} \# a^{10} \# a^{10} \# b^{20} c^{10} \in \lang(\cP)$. 
If the second segment is bad for $\cV_j$ then there exist $x, y, z > 0$ such that for every $t, t_b, t_c \in \bbN$ it holds that $a^{10} \# a^{10+tx} \# a^{10} \# b^{20+t_b y} c^{10+t_c z} \in \lang(\cV_j)$.
Observe that such behaviour is undesirable, since for large enough $t, t_b, t_c$, the resulting word is not in $\lang(\cP)$. 
Note, however, that the existence of such a bad segment is not a contradiction by itself, since the resulting pumped words might not be accepted by some other 1-CN $\cV_{j'}$. 

In order to reach a contradiction, we need to show the existence of a segment $i$ that is bad for \emph{every} $\cV_j$. 
Moreover, we must also show that arbitrarily increasing $m_i, m_b, m_c$ can be simultaneously achieved in all the $\cV_j$ together (i.e., the above $x, y, z > 0$ are the same for all $V_j$).
This would create a contradiction since all the $\cV_j$ accept a word that is not in $\lang(\cP)$.
Our goal is therefore to establish a robust and precise definition of a ``bad'' segment, then find a word $w$ comprising $k+1$ segments where one of the segments is bad for every $\cV_j$, and pumping the words in each segment can be done synchronously. 

\subsection{Pumping Arguments in One-Counter Nets}
\label{sec:1vass-properties}
In this section we establish some pumping results for 1-CN which will be used in the proof of~\cref{thm:2CN_prime}. 
Throughout this section, we consider a 1-CN $\cV=\tup{\Sigma,Q,Q_0,\delta,F}$.

Our first lemma states the intuitive fact that without $\positive$ cycles, the counter value of a run is bounded (proof in~\cref{apx:lem:maxcounternopositivecycles}).
\begin{lemma}\label{lem:maxcounternopositivecycles}
    Let $(q,n)$ be a configuration of $\cV$, let $W$ be the maximal positive update in $\cV$, $\sigma \in \Sigma$, and $N\in\bbN$. If an 
    $\bbN$-run $\rho$ of $\cV$ on $\sigma^N$ from configuration $(q,n)$ does not traverse any $\positive$ cycle, then the maximal possible counter value anywhere along $\rho$ is $n+W|Q|$. 
\end{lemma}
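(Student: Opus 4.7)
The plan is to bound the counter via a cycle-decomposition of the maximizing prefix. Fix a position $t$ along $\rho$ at which the counter attains its maximum value $M$, and let $\rho'$ be the prefix of $\rho$ from $(q,n)$ up to the configuration at position $t$. Since $\effect{\rho'} = M - n$, it suffices to prove that $\effect{\rho'} \le W |Q|$.

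Next I would iteratively remove simple cycles from the state sequence underlying $\rho'$. As long as some state repeats in this sequence, one can find indices $i < j$ with $q_i = q_j$ and with $q_i, q_{i+1}, \ldots, q_{j-1}$ pairwise distinct; the infix from position $i$ to position $j$ is then a simple cycle that is traversed by $\rho$ itself (since $\rho'$ is a prefix of $\rho$). Excising it yields a strictly shorter walk of $\cV$ which, because every transition of $\rho$ carries the same label $\sigma$, still reads a prefix of $\sigma^N$ and remains a legitimate $\bbZ$-walk of $\cV$. Iterating terminates and leaves a residual walk whose state sequence has no repeats, so it uses at most $|Q| - 1$ transitions.

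Combining two ingredients closes the argument. First, by hypothesis $\rho$ traverses no $\positive$ cycle, so each excised simple cycle has effect $\le 0$. The effect of $\rho'$ decomposes as the effect of the residual walk plus the sum of the effects of the excised cycles, so the residual walk has effect $\ge M - n$. Second, the residual walk consists of at most $|Q| - 1$ transitions, each of whose effect is at most $W$ (the maximal positive update), so its total effect is at most $W(|Q| - 1) \le W|Q|$. Chaining these two inequalities gives $M - n \le W|Q|$, as desired.

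The only point needing care is to justify that each excised infix really is a cycle traversed by $\rho$, so that the positive-cycle-freeness hypothesis applies. This is immediate from the construction: each excised infix is a contiguous sub-run of $\rho$ whose endpoints share a state, which matches the definition of a cycle in $\rho$ from the preliminaries. No further subtlety arises, and in particular the non-negativity of the counters along $\rho$ plays no role in the bound.
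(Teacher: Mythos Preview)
There is a genuine gap in the iterative excision step. You assert that ``each excised infix is a contiguous sub-run of $\rho$'', but this is only guaranteed for the \emph{first} excision: after one removal the remaining walk is a concatenation of two pieces of $\rho'$, and the next simple cycle you locate may straddle the seam and hence fail to be a contiguous infix of $\rho$. Concretely, take the state sequence $p,q,r,q,s,p,u$ on $\sigma^6$ with transition effects $+1,-2,-2,+1,+1,+10$ (from some $(p,n)$ with $n\ge 3$). The only contiguous cycles are $q,r,q$ (effect $-4$) and $p,q,r,q,s,p$ (effect $-1$), so $\rho$ traverses no $\positive$ cycle, and the maximum counter value occurs at the final configuration, so $\rho'=\rho$. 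The only simple cycle available for the first excision is $q,r,q$; after removing it you are left with $p,q,s,p,u$, whose only simple cycle $p,q,s,p$ has effect $+3>0$ and is \emph{not} an infix of $\rho$. Hence the hypothesis does not apply to it, and your key claim ``each excised simple cycle has effect $\le 0$'' fails.

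The strategy can be salvaged, but not with the justification you give: one has to argue that at every stage the \emph{union} of all transitions removed so far forms a disjoint family of contiguous infixes of $\rho'$, each of which is a cycle (an excision that straddles a seam merges with the adjacent previously-removed chunk into one larger contiguous cycle). This yields $\sum_i c_i\le 0$ even though individual $c_i$ may be positive. The paper sidesteps this entirely via a record-value argument: look at positions where the counter strictly exceeds all previous values; consecutive records differ by at most $W$, so exceeding $n+W|Q|$ forces more than $|Q|$ record positions, whence two share a state, producing a $\positive$ contiguous cycle directly.
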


The next lemma shows that long-enough runs must contain $\nonnegative$ cycles.
\begin{lemma}\label{lem:largeenoughwordnonnegativecycle}
    Let $\sigma\in \Sigma$ and $(q,n)$ be an $\bbN$-configuration of $\cV$. Then there exists $N\in \bbN$ such that for all $N' \geq N$, every $\bbN$-run of $\cV$ on $\sigma^{N'}$ from $(q,n)$ traverses a $\nonnegative$ cycle. 
\end{lemma}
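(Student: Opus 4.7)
The plan is to prove the contrapositive via a pigeonhole argument on the configuration space of $\cV$. Specifically, I aim to show that if an $\bbN$-run $\rho$ on $\sigma^{N'}$ from $(q,n)$ traverses no $\nonnegative$ cycle, then $N'$ must be bounded, and so any sufficiently large $N$ forces every run to contain a $\nonnegative$ cycle.

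First, I will observe that the absence of any $\nonnegative$ cycle in $\rho$ implies, in particular, the absence of any $\positive$ cycle. Hence \cref{lem:maxcounternopositivecycles} applies and bounds every counter value along $\rho$ by $n + W|Q|$, where $W$ is the maximal positive update in $\cV$. Consequently, each configuration $(q_i, v_i)$ visited by $\rho$ lies in the finite set $Q \times \{0, 1, \ldots, n + W|Q|\}$, whose cardinality is $|Q|\cdot(n + W|Q| + 1)$.

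Next, I would set $N := |Q|\cdot(n + W|Q| + 1)$ and let $N' \geq N$. Since $\rho$ visits $N'+1 > |Q|\cdot(n + W|Q| + 1)$ configurations, pigeonhole forces two of them, at positions $i < j$, to coincide. The infix of $\rho$ from position $i$ to position $j$ is then a cycle of effect $v_j - v_i = 0$, which is $\nonnegative$ — contradicting the standing assumption on $\rho$.

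This argument is short and direct; the work is essentially done by \cref{lem:maxcounternopositivecycles}. The only point worth emphasising is the strength of the contrapositive framing: assuming no $\nonnegative$ cycle gives, for free, the absence of $\positive$ cycles needed to invoke \cref{lem:maxcounternopositivecycles} and obtain the crucial counter bound that renders the configuration space finite.
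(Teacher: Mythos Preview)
Your proof is correct. Both your argument and the paper's start from the same contrapositive assumption (no $\nonnegative$ cycle) and both ultimately rely on the counter bound $n+W|Q|$ from \cref{lem:maxcounternopositivecycles}, but you deploy it differently. You pigeonhole on \emph{configurations} in $Q\times\{0,\ldots,n+W|Q|\}$ and immediately obtain a repeated configuration, hence a zero-effect (in particular $\nonnegative$) cycle, which contradicts the assumption outright. The paper instead pigeonholes on \emph{states} to find a state $p$ visited $m>N/|Q|$ times, observes that consecutive $p$-to-$p$ segments are $\negative$ cycles forcing a strictly decreasing chain of counter values at the $p$-visits, and then bounds the first such value by $n+|Q|W$ (using essentially the reasoning of \cref{lem:maxcounternopositivecycles}) to squeeze out a numerical contradiction. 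Your route is shorter and more transparent, and it makes the dependence on \cref{lem:maxcounternopositivecycles} explicit rather than reproving a piece of it inline; the paper's route avoids appealing to the full configuration space but is a little more bookkeeping-heavy. The resulting bounds on $N$ differ only by an additive $|Q|$.
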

\begin{proof}
Let $W$ be the maximal positive transition update in $\cV$, we show that $N=|Q|(n+|Q| \cdot W)$ satisfies the requirements.
Assume by way of contradiction that $\cV$ can read $\sigma^{N}$ via an $\bbN$-run $\rho=(q_0,n_0=n)\stackrel{\rho}{\to} (q_N,n_N)$ that only traverses $\negative$ cycles. 

Since $\rho$ visits $N+1$ states, then by the Pigeonhole Principle, there exists a state $p \in Q$ that is visited $m \geq (N+1)/\abs{Q} > N/\abs{Q}$ many times in $\rho$.

Consider all the indices $0 \leq i_1 < i_2 < \ldots < i_m \leq N$ such that $p = q_{i_1} = \ldots = q_{i_N}$.	
Each run segment $(q_{i_1}, n_{i_1}) \rightarrow (q_{i_2}, n_{i_2}), \ldots, (q_{i_{m-1}}, n_{i_{m-1}}) \rightarrow (q_{i_m}, n_{i_m})$ is a cycle in $\rho$, and therefore must have negative effect.	
Thus $n_{i_1} > n_{i_2} > \ldots > n_{i_m} \geq 0$, so in particular $n_{i_1} \geq n_{i_m} + m \geq 0$ (as each cycle has effect at most $-1$).	
Moreover, $n_{i_1} < n + \abs{Q}\cdot W$ since the prefix $(q_0, n) \rightarrow (q_{i_1}, n_{i_1})$ cannot contain a non-negative cycle.	
However, since $m >N/\abs{Q} = n + \abs{Q}\cdot W$ and $n_{i_1} \geq n_{i_m} + m \geq m \geq n + \abs{Q}\cdot W$, we get $n+|Q|\cdot W<n+ |Q| \cdot W$ which is a contradiction.

\hfill\qed\end{proof}

Next, we show that runs with $\nonnegative$ and $\positive$ cycles have ``pumpable'' infixes.
\begin{lemma}
\label{lem:cyclethensimplecycle}
Let $\sigma\in \Sigma$ and consider a $\positive$ (resp. $\nonnegative$) cycle $\pi=(q_0,c_0)\stackrel{\sigma}{\to} (q_1,c_1) \stackrel{\sigma}{\to} \ldots (q_n=q_0,c_n)$ on $\sigma^n$ that induces an $\bbN$-run. Then, there is a sequence of (not necessarily contiguous) indices $0 \leq i_1 \leq \ldots \leq i_k \leq n$ such that $q_{i_1}\stackrel{\sigma}{\to} q_{i_2}\stackrel{\sigma}{\to}\cdots q_{i_k}$ is a simple $\positive$ (resp. $\nonnegative$) cycle with some effect $e>0$ (resp. $e \geq 0$).	
In addition, this simple cycle is ``pumpable'' from the first occurrence of $q_{i_1}$ in $\pi$; namely, for all $m \in \bbN$ there is a run $\pi_m$ obtained from $\pi$ by traversing the cycle $m$ times so that	
$\effect{\pi_m}=\effect{\pi}+em$.	
\end{lemma}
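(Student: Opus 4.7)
My plan is to prove the lemma in two stages: first extract a simple cycle of the desired sign from $\pi$, and then realize the pumping by inserting extra copies of this cycle back into $\pi$.

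For the extraction, I use an iterative cycle decomposition of $\pi$. Scanning $q_0, q_1, \ldots, q_n$, I find the first pair $i < j$ with $q_i = q_j$ and no repetition strictly between them; the contiguous infix $q_i \xrightarrow{\sigma} \cdots \xrightarrow{\sigma} q_j$ is a simple cycle by minimality of $j$, and I excise it by splicing the prefix ending at $q_i$ with the suffix starting at $q_{j+1}$. This splice is legitimate because $q_i = q_j$, so the $\sigma$-transition $q_j \xrightarrow{\sigma} q_{j+1}$ of $\cV$ furnishes the required transition $q_i \xrightarrow{\sigma} q_{j+1}$ in the shortened path. Iterating on the shortened path yields simple cycles $C_1, \ldots, C_r$ whose effects sum to $\effect{\pi}$, so at least one $C$ among them has effect $e > 0$ (resp.\ $e \geq 0$) in the $\positive$ (resp.\ $\nonnegative$) case. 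Indexing the states of $C$ by their positions in the original $\pi$ provides the required (possibly non-contiguous) indices $0 \leq i_1 \leq \ldots \leq i_k \leq n$ with $q_{i_1} = q_{i_k}$.

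For pumping, let $j_0$ be the first occurrence of $q_{i_1}$ in $\pi$, and define $\pi_m$ by following $\pi$ up to index $j_0$, inserting $m$ consecutive traversals of $C$, and then continuing with $\pi$ from index $j_0 + 1$ onwards. Every inserted transition is a transition of $\cV$, so $\pi_m$ is a run of $\cV$, and by linearity $\effect{\pi_m} = \effect{\pi} + m e$.

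The main obstacle is ensuring $\pi_m$ can be taken to be an $\bbN$-run, which is what the subsequent pumping arguments in this paper rely on. Counters outside the insertion are shifted up by $m e \geq 0$ from position $j_0$ onwards, so non-negativity is automatically preserved there. Within each inserted traversal of $C$, the starting counter is at least $c_{j_0}$, and each subsequent traversal begins from a value weakly larger than the previous since $e \geq 0$, so by induction on the traversal index it suffices to verify $\bbN$-validity of a single traversal of $C$ starting from counter $c_{j_0}$. The plan is to arrange this by selecting $C$ within the decomposition so that $j_0$ coincides with the extraction site of $C$ inside $\pi$; with this alignment the first inserted traversal replays the very counter trajectory that $C$ already had in $\pi$, which is $\bbN$-valid by hypothesis. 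Establishing that the decomposition order can always be chosen so that some $C$ of the correct sign satisfies this alignment property is the most delicate part of the argument, and is where the bulk of the technical work will lie.
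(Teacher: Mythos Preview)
Your high-level plan matches the paper's: decompose $\pi$ into simple cycles, select one of the required sign, and pump it back into $\pi$. You also correctly locate the crux---$\bbN$-validity of the inserted copies---but your proposal stops precisely there. The ``alignment'' plan is unjustified and, as stated, does not work: even if $j_0$ coincides with the extraction site of $C$, the counter trajectory of $C$ at extraction time lives in a \emph{partially decomposed} path, not in $\pi$; your left-to-right scan may already have removed earlier cycles whose effects shift the counters at and after $j_0$. In particular, if an earlier-removed cycle was positive, the remaining path need not even be an $\bbN$-run, so ``the trajectory $C$ already had'' is not available as a witness. You acknowledge this is ``the most delicate part'' but give no argument for why such an alignment can always be achieved.

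The paper's resolution is an induction on the length of $\pi$ that peels off the \emph{rightmost} simple cycle $\tau$ (take $j_1$ maximal with $q_{j_1}=q_{j_2}$). If $\tau$ has the required sign it is the answer, pumpable at $(q_{j_1},c_{j_1})$ since $\tau$ is a contiguous infix of the original $\bbN$-run $\pi$. If $\tau$ is $\negative$, removing it yields $\pi'$ which is still an $\bbN$-run (deleting a negative cycle only raises later counters), and $\pi$ and $\pi'$ coincide on the entire prefix up to $j_1$; because $j_1$ was maximal, the inductive call on $\pi'$ returns a simple cycle pumpable at some $i_1 \le j_1$, so the pump position and its counter value are identical in $\pi$ and $\pi'$, and the pump transfers to $\pi$. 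The rightmost-first order is exactly what makes the invariant ``the prefix up to the pump site is untouched'' go through---your leftmost-first scan lacks this property. One further remark: do not fixate on inserting at the \emph{first} occurrence of $q_{i_1}$; what the corollary afterwards actually uses, and what the paper's argument delivers, is pumpability at the position $i_1$ itself.
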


    
\begin{proof}
    We prove the $\nonnegative$ case, the $\positive$ case can be proved mutatis mutandis. 
    
    We define $\pi_m=(q_0,c_0)\stackrel{\sigma}{\to} \ldots (q_{i_1},c_{i_1}) \stackrel{\sigma}{\to} \ldots (q_{i_1},c_{i_1}+km)\stackrel{\sigma}{\to} \ldots (q_n,c_n+km)$. The proof is now by induction on the length of $\pi$.
    
    The base of the induction is a cyclic $\bbN$-run of length $2$. In this case $\pi=(q_0,c_0) \stackrel{\sigma}{\to} (q_1=q_0, c_1)$ is itself a $\nonnegative$ simple cycle that is infinitely pumpable from $(q_0,c_0)$.

    	We now assume correctness for length $n$, and discuss $\pi=(q_0,c_0)\stackrel{\sigma}{\to} (q_1,c_1) \stackrel{\sigma}{\to} \ldots (q_n=q_0,c_n)$ of length $n+1$. Let $0\leq j_1<j_2\leq n$ be indices such that $q_{j_1}=q_{j_2}$, for a maximal $j_1$. Note that the cycle $\tau=(q_{j_1},c_{j_1})\stackrel{\sigma}{\to} \ldots (q_{j_2},c_{j_2})$ must be simple. 
        If $j_1=0$ and $j_2=n$, then $\pi$ itself is a simple $\nonnegative$ cycle, and the pumping argument is straightforward. 
     Otherwise $\tau$ is nested.
    We now split into two cases, based on whether $\effect{\tau} \geq 0$.
    \begin{enumerate}
        \item $\tau$ is $\nonnegative$: then the induction hypothesis applies on $\tau$. 
        We take the guaranteed constants $j_1 \leq i_1 \leq \ldots \leq i_k \leq j_2$, which apply to $\pi$ as well.
        \item $\tau$ is $\negative$: then we remove $\tau$ from $\pi$ to obtain $\pi'=(q_0,c_0)\stackrel{\sigma}{\to} \ldots (q_{j_1},c_{j_1}) \stackrel{\sigma}{\to} (q_{j_2+1},c'_{j_2+1}) \stackrel{\sigma}{\to} \ldots (q_n,c'_n)$, such that $c'_i \geq c_i$ for all $j_2+1 \leq i \leq n$. The induction hypothesis applies on $\pi'$, so let $i_1, \ldots, i_k$ be the guaranteed constants. Note that $i_1 \leq j_1$, since the cycle removed when obtaining $\pi'$ from $\pi$ is the last occurrence of a repetition of states in $\pi$. We therefore know that $q_{i_1}\stackrel{\sigma}{\to} q_{i_2}\stackrel{\sigma}{\to}\cdots q_{i_k}$ is a simple $\nonnegative$ cycle in $\pi'$ -- which applies to $\pi$ as well. In addition, it is infinitely pumpable from $\bbN$-configuration $(q_{i_1},c_{i_1})$ in $\pi'$ for $i_1 \leq j_1$. Indeed, since $\pi$ and $\pi'$ coincide up to and including $(q_{j_1},c_{j_1})$ between $\pi$ and $\pi'$ - this cycle is infinitely pumpable in $\pi$ as well.
        \qed
    \end{enumerate}
\end{proof}

The simple cycle in~\cref{lem:cyclethensimplecycle} has length $k<|Q|$. By pumping it $\frac{|Q|!}{k}$ times we obtain a pumpable cycle of length $|Q|!$, allowing us to conclude with the following. 
\begin{corollary} \label{obs:nonnegativepumptofactorial}
    Let $\rho$ be an $\bbN$-run of $\cV$ on $\sigma^n$ that traverses a $\nonnegative$ cycle. For every $m\in \bbN$, we can construct an $\bbN$-run $\rho'$ of $\cV$ on $\sigma^{n+m|Q|!}$ such that $\effect{\rho'} \geq \effect{\rho}$ by pumping a $\nonnegative$ simple cycle in $\rho$.
\end{corollary}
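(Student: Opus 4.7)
The plan is to apply~\cref{lem:cyclethensimplecycle} directly and then argue that the length of the pumpable simple cycle divides $|Q|!$, which allows us to pump by exactly $m|Q|!$ many symbols.

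More concretely, since $\rho$ traverses a $\nonnegative$ cycle, we may invoke~\cref{lem:cyclethensimplecycle} on that cycle to obtain a simple $\nonnegative$ cycle whose length we denote by $k$ and whose effect $e \geq 0$, and which is pumpable from its first occurrence in $\rho$. Since the cycle is simple (it visits each state of $\cV$ at most once), we have $1 \leq k \leq |Q|$, and therefore $k$ divides $|Q|!$. Setting $t = m \cdot \frac{|Q|!}{k} \in \bbN$, the pumping guarantee of~\cref{lem:cyclethensimplecycle} yields an $\bbN$-run $\rho'$ that reads $\sigma^{n + t \cdot k} = \sigma^{n + m|Q|!}$ and has effect
\[
    \effect{\rho'} = \effect{\rho} + e \cdot t = \effect{\rho} + e \cdot m \cdot \frac{|Q|!}{k} \geq \effect{\rho},
\]
where the inequality uses $e \geq 0$. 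This is exactly the claimed run.

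The only substantive content is the divisibility observation $k \mid |Q|!$, which holds since $k \leq |Q|$; the $\bbN$-validity and the effect bound are inherited directly from the pumping clause of~\cref{lem:cyclethensimplecycle}, so there is no real obstacle to overcome here. The corollary is essentially a bookkeeping consequence of the preceding lemma that normalises the pumping ``quantum'' from $k$ to the uniform value $|Q|!$, making it convenient for downstream use where a pumping length independent of the particular simple cycle is required.
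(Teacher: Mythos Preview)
Your argument is correct and matches the paper's own justification, which is the single sentence preceding the corollary: the simple cycle from \cref{lem:cyclethensimplecycle} has length $k\le |Q|$, so $k\mid |Q|!$ and pumping it $m\cdot\frac{|Q|!}{k}$ times yields the desired run. There is nothing to add.
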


\subsection{Good and Bad Segments}
\label{sec:bad_segments}
We lift the colour scheme\footnote{The colours were chosen as accessible for the colourblind. For a greyscale-friendly version, see \cref{apx:grey}.} of $\positive$ and $\nonnegative$ to words and runs as follows. 
For a word $w=uv$ and a run $\rho$, we write e.g., $\posCol{u}\nonnegCol{v}$ to denote that $\rho$ traverses a $\positive$ cycle when reading $u$, then a $\nonnegative$ cycle when reading $v$. 
Note that this does not preclude other cycles, e.g., there could also be negative cycles in the $u$ part, etc. That is, the colouring is not unique, but represents elements of the run.

Recall our assumption that $\lang(\cP) = \bigcap_{1 \leq j \leq k}\lang(\cV_j)$, and denote $\cV_j = \tup{ \Sigma, Q_j, I_j, \delta_j, F_j}$ for all $j \in [k]$. 
Let $Q_{\max} = \max\set{|Q_j|}_{j=1}^k$ and denote $\alpha=Q_{\max}!$. 
Further recall that we focus on words of the form $a^{m_1} \# a^{m_2} \# \cdots \# a^{m_{k+1}} \# b^{m_b} c^{m_c}$ for integers $\left\{ m_i \right\}_{i=1}^{k+1}, m_b, m_c \in \bbN$, and that we refer to the infix $a^{m_i}$ as the $i$-th segment, for $1 \leq i \leq k$.
We proceed to formally define good and bad segments.
\begin{definition}[Good and Bad Segments]
\label{def:good_bad}
The $i$-th segment is \emph{bad} in $\cV_j$ if there exist constants $\left\{m_i\right\}_{i=1}^{k+1},m_b,m_c \in \bbN$ such that the following hold.
\begin{enumerate}[label=(\alph*)]
    \item $\left\{ m_i \right\}_{i=1}^{k+1}, m_b,m_c$ are multiples of $\alpha$, and
    \item there is an accepting $\bbN$-run $\rho$ of $\cV_j$ on $w = a^{m_1} \# a^{m_2} \# \cdots \# a^{m_{k+1}} \# b^{m_b} c^{m_c}$ that adheres to one of the three forms:
    \begin{enumerate}[label=(\roman*)]
        \item $\nonnegCol{a^{m_1}}\#\nonnegCol{a^{m_2}}\#\cdots \nonnegCol{a^{m_{i-1}}}\#\posCol{a^{m_i}}\#a^{m_{i+1}}\#\cdots\#a^{m_{k+1}}\#b^{m_b}c^{m_c}$,
        \item $\nonnegCol{a^{m_1}}\#\nonnegCol{a^{m_2}}\#\cdots \nonnegCol{a^{m_{i-1}}}\#\nonnegCol{a^{m_i}}\#\nonnegCol{a^{m_{i+1}}}\#\cdots\#\nonnegCol{a^{m_{k+1}}}\#\nonnegCol{b^{m_b}}\nonnegCol{c^{m_c}}$, or
        \item $\nonnegCol{a^{m_1}}\#\nonnegCol{a^{m_2}}\#\cdots \nonnegCol{a^{m_{i-1}}}\#\nonnegCol{a^{m_i}}\#\nonnegCol{a^{m_{i+1}}}\#\cdots\#\nonnegCol{a^{m_{k+1}}}\#\posCol{b^{m_b}}c^{m_c}$.
    \end{enumerate}
\end{enumerate}
The $i$-th segment is \emph{good} in $\cV_j$ if it is not bad in $\cV_j$.
\end{definition}

\cref{lem:bad_segements_are_bad} formalises the intuition that a bad segment can be pumped simultaneously with both the $b$ and $c$ segments, giving rise to a word accepted by $\cV_j$ but rejected by $\cP$.

Intuitively, Forms (ii) and (iii) indicate that all segments are bad. 
Indeed, the $i$-th segment has a $\nonnegative$ cycle, so it can be pumped safely, and in Form (ii) both $b$ and $c$ can be pumped using $\nonnegative$ cycles.
Whereas in Form (iii) we can pump $b$ using a $\positive$ cycle, and can use it to compensate for pumping $c$, even if the latter requires iterating a negative cycle.

Form (i) is the interesting case, where we use a $\positive$ cycle in the $i$-th segment to compensate for pumping both $b$ and $c$. 
The requirement that all segments up to the $i$-th are $\nonnegative$ is at the core of our proof and is explained in~\cref{sec:proof_of_2Prime}.

\begin{lemma}
    \label{lem:bad_segements_are_bad}
    Suppose the $l$-th segment is bad in $\cV_j$, then there exist $x, y, z \in \bbN$, that are  multiples of $\alpha$, such that for every $n \in \bbN$ the following word $w$ is accepted by $\cV_j$.
    \begin{equation*}
        w_n = a^{m_1} \# a^{m_2} \# \cdots \# a^{m_{l-1}} \# a^{m_l+xn} \# a^{m_{l+1}} \# \cdots \# a^{m_{k+1}} \# b^{m_b+yn} c^{m_c+zn}
    \end{equation*}
\end{lemma}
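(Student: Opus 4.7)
The plan is to treat the three forms of \cref{def:good_bad} separately. In each case we extract pumpable simple cycles of length $\alpha$ via \cref{lem:cyclethensimplecycle} and \cref{obs:nonnegativepumptofactorial} in segment $l$, in $b^{m_b}$, and in $c^{m_c}$, and pump them by multiples of $n$; the existence of such cycles in segments whose sign is \emph{not} prescribed by the form will follow from the pigeonhole argument used in \cref{lem:largeenoughwordnonnegativecycle}, since $m_b, m_c$ are multiples of $\alpha \geq |Q_j|$. The case $n=0$ is trivial, as $w_0 = w$ is accepted by $\cV_j$ via $\rho$ by hypothesis.

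Form (ii) is immediate: all three target segments carry a $\nonnegative$ cycle by hypothesis, so pumping each $n$ times can only raise the counter at every configuration of the new run relative to $\rho$, and $\bbN$-validity is preserved. Setting $x=y=z=\alpha$ suffices. Form (iii) is essentially the same, but $c^{m_c}$ is not prescribed: we take any simple cycle there of length $\alpha$ with effect $e_c \in \bbZ$, and if $e_c < 0$ we also pump the prescribed $\positive$ cycle in $b^{m_b}$ by $Tn$ times, where $T$ is chosen large enough that the counter surplus entering $c^{m_c}$ dominates the deficit $n|e_c|$ together with the $c$-cycle nadir. Setting $x = \alpha$, $y = T\alpha$, $z = \alpha$ works.

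Form (i) is the main case and the heart of the argument. We pump segment $l$'s prescribed $\positive$ cycle $Tn$ times, injecting a counter surplus of at least $Tn$ into every configuration from the cycle's first occurrence onwards (since the extracted positive simple cycle has effect $\geq 1$). We then pump simple cycles of length $\alpha$ in $b^{m_b}$ and $c^{m_c}$, with effects $e_b, e_c \in \bbZ$ possibly negative, each by $n$ times; the worst-case counter deficit they introduce is linear in $n$ with coefficient $|e_b|+|e_c|$, plus a constant absorbing the two cycle nadirs. Choosing $T$ to exceed $|e_b| + |e_c|$ plus these nadirs ensures that the surplus $Tn$ absorbs the deficit at every configuration for every $n \geq 1$. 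We set $x = T\alpha$, $y = z = \alpha$.

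The main obstacle is the configuration-by-configuration bookkeeping in Form (i): one must verify non-negativity not only at the clean boundaries between segments but also at the counter nadirs \emph{inside} each pumped cycle, in particular inside the $b$- and $c$-cycles, where $n$ repetitions of a $\negative$ simple cycle drop the counter by roughly $n|e_b|$ or $n|e_c|$. Handling this amounts to a position-by-position comparison of the pumped run against $\rho$: configurations at and after segment $l$'s pumped $\positive$ cycle inherit a $+Tn$ surplus which, for sufficiently large $T$, uniformly dominates the $O(n)$ deficits introduced later. Once this is checked, the pumped run is a valid accepting $\bbN$-run of $\cV_j$ on $w_n$, establishing the lemma.
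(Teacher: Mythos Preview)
Your proposal is correct and follows essentially the same approach as the paper: extract pumpable simple cycles via \cref{lem:cyclethensimplecycle} and \cref{obs:nonnegativepumptofactorial}, pump the prescribed $\positive$ cycle enough times to generate a surplus that dominates the (possibly negative) deficit from pumping arbitrary cycles in the unprescribed $b$- and $c$-segments, and choose the multiplier large enough to absorb the nadirs. The only organisational difference is that you fix form-specific triples $(x,y,z)$, whereas the paper selects a single triple by choosing $z=\alpha$ first, then $y$ large enough for Form~(iii), then $x$ large enough for Form~(i), so that the same $(x,y,z)$ works regardless of which form witnesses badness; since the lemma concerns a single $\cV_j$ with a fixed witnessing form, your per-form choice is equally valid.
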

\begin{proof}
    We can choose $z=\alpha$, then take $y$ to be large enough so that Form (iii) runs can compensate for negative cycles in $c^z$ using $\positive$ cycles in $b^y$, whilst not decreasing the counters in Form (ii) runs.
    We can indeed find such a $y \in \bbN$ that is a multiple of $\alpha$, since $\alpha$ is divisible by all lengths of simple cycles. 
    Finally, we choose $x$ so that Form (i) runs can compensate for $c^z$ and $b^y$ using $\positive$ cycles on $a^x$ in the $l$-th segment, again whilst not decreasing the counters in Forms (ii) and (iii).
\hfill\qed\end{proof}

Recall that our goal is to show that there is a segment $l \in [k+1]$ that is bad in \emph{every} $V_j$, for $j \in [k]$. 
In~\cref{lem:two_segments_one_bad}, We show that each $V_j$ has at most one good segment. 
Therefore, there are at most $k$ good segments in total, leaving at least one segment that is bad in every $\cV_j$, as desired.

\begin{lemma}
\label{lem:two_segments_one_bad}
    Let $j \in [k]$ and $0 \leq r < s \leq k+1$.
    Then the $r$-th or $s$-th segment is bad in $\cV_j$.
\end{lemma}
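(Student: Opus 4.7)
The plan is to proceed by contradiction: assume both the $r$-th and $s$-th segments are good in $\cV_j$, and exhibit a word $w \in \lang(\cP)$ together with an accepting run of $\cV_j$ on $w$ matching one of the forms of \cref{def:good_bad} for $i = r$ or $i = s$. I would take the symmetric word $w = a^M \# a^M \# \cdots \# a^M \# b^{m_b} c^{m_c}$ with all $k+1$ $a$-segments of identical large length $M$ (a multiple of $\alpha$, large enough that any $\bbN$-run of $\cV_j$ on $a^M$ whose entry counter is bounded by $W|Q_j|$ contains a $\nonnegative$ cycle, as per \cref{lem:largeenoughwordnonnegativecycle}), and $m_b = m_c = \lfloor (k+1)/2 \rfloor \cdot M$. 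Then $w \in \lang(\cP)$ via the balanced partition, and since $\lang(\cP) \subseteq \lang(\cV_j)$ there exists an accepting run $\rho$ of $\cV_j$ on $w$.

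Next, I would case-split on the position of the first $\positive$ cycle of $\rho$. If $\rho$ has no $\positive$ cycle, \cref{lem:maxcounternopositivecycles} bounds its counter and \cref{lem:largeenoughwordnonnegativecycle} places a $\nonnegative$ cycle in every $a$-segment and in each of $b^{m_b}$, $c^{m_c}$, so $\rho$ witnesses Form (ii), contradicting the goodness of $r$. If the first $\positive$ cycle lies in $b^{m_b}$ (resp.\ in $c^{m_c}$), no $\positive$ cycle occurs before it, the counter is bounded over all $a$-segments (resp.\ over all $a$-segments and over $b^{m_b}$), and each such block contains a $\nonnegative$ cycle, yielding Form (iii) (resp.\ Form (ii)), a contradiction. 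If the first $\positive$ cycle lies in an $a$-segment $p \in \{r, s\}$, the counter is bounded before $p$, so segments $1, \ldots, p-1$ each contain a $\nonnegative$ cycle and $\rho$ witnesses Form (i) for $p$, again a contradiction.

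The main obstacle is the remaining case, where the first $\positive$ cycle of $\rho$ lies in some $a$-segment $p \in [k+1] \setminus \{r, s\}$. Here Form (i) is witnessed for $p$, but this does not refute the goodness of $r$ or $s$, and neither Form (ii) nor (iii) is available. To resolve this case I would exploit the symmetry of the $a$-segments in $w$: since every $a$-segment is $a^M$ and $\cV_j$'s behaviour on $a$ depends only on the current state and counter, the $\positive$ cycle $\tau$ taken in segment $p$ can in principle be executed in any other $a$-segment whose run enters in a compatible state. The idea is to construct an alternative accepting run $\rho'$ that replaces the use of $\tau$ in segment $p$ by a $\nonnegative$ cycle (available by \cref{obs:nonnegativepumptofactorial} since $M$ is a large multiple of $\alpha$), pumps $\nonnegative$ cycles in intervening segments to synchronise the entry state of segment $r$ (or $s$), and then executes $\tau$ in segment $r$ (or $s$) before continuing with a tail compatible with the originally reached accepting configuration.

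This relocation step is the technical crux of the lemma. Its correctness requires verifying that the modified run remains in $\bbN$ throughout, and that the tail continuing through $b^{m_b} c^{m_c}$ still reaches an accepting configuration; this would rely on a pigeon-hole argument over the states entering each $a$-segment (taking $M$ sufficiently larger than $|Q_j|!$ so that boundary-state repetitions are guaranteed), combined with careful $\nonnegative$-cycle pumping to restore counter alignment. This is in the spirit of the flatness-style pumping arguments that the authors flag as notoriously intricate for low-dimensional CN reasoning.
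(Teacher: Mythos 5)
Your treatment of the easy cases (no positive cycle anywhere; first positive cycle in $b$, in $c$, or in segment $r$ or $s$) follows the same lines as the paper. The genuine gap is exactly the case you flag as the ``main obstacle'': the first $\positive$ cycle lying in a segment $p \notin \{r,s\}$. The relocation argument you sketch does not work as stated. The accepting run $\rho$ need never visit the base state of the cycle $\tau$ anywhere inside segment $r$ or $s$, so there is no way to ``execute $\tau$'' there; pumping $\nonnegative$ cycles does not change which states the run visits in a given segment, and a pigeonhole over entry states only yields a repetition among \emph{some} pair of segments, not one involving $r$ or $s$ (and $k+1$ need not exceed $|Q_j|$ anyway). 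A second, compounding problem with the symmetric word $a^M\#\cdots\#a^M\#b^{m_b}c^{m_c}$ is that once a $\positive$ cycle occurs in an early segment, the counter entering later segments can grow like $MW$, which exceeds the threshold for which $M$ was chosen in \cref{lem:largeenoughwordnonnegativecycle}; so you also lose the guarantee of $\nonnegative$ cycles in the segments after $p$, which the Forms require.

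The paper's proof is engineered precisely so that this case never has to be confronted. Two ingredients do the work. First, the segment lengths $n_1,\ldots,n_{k+1}$ are chosen \emph{inductively and increasing}: $n_l$ is large enough to force a $\nonnegative$ cycle from \emph{any} configuration whose counter is bounded by $|u|\cdot W$ for the entire preceding prefix $u$, so every $a$-segment unconditionally contains a $\nonnegative$ cycle. Second, and crucially, only segments $r$, $s$, $b$ and $c$ are pumped by the parameter $x\alpha$; all other segments keep fixed length. Consequently, a $\positive$ cycle in a segment $p\notin\{r,s\}$ can raise the counter only by a bounded amount (the bound $M_b$ is independent of $x$), so if neither $r$ nor $s$ has a $\positive$ cycle, taking $x$ large still forces $\nonnegative$ cycles in the $b$ and then the $c$ segment, yielding Form (iii) or (ii). The resulting trichotomy --- $\positive$ cycle in $r$ (Form (i) for $r$), else in $s$ (Form (i) for $s$), else Forms (ii)/(iii) --- is exhaustive, and no cycle ever needs to be relocated. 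To repair your proof you would need to replace the symmetric word by this asymmetric, selectively pumped one.
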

\begin{proof}
    Since $j$ is fixed, denote $\cV_j=\tup{\Sigma,Q,Q_0,\delta,F}$. 
    We inductively define constants $\left\{n_i\right\}_{i=1}^{k+1},n_b,n_c \in \bbN$ as follows. 
    Suppose that $n_1$ is a large-enough multiple of $\alpha$ so that~\cref{lem:largeenoughwordnonnegativecycle} guarantees a $\nonnegative$ cycle in any accepting run of $\cV_j$ on $a^{n_1}$ from some $(q_0,0)$ with $q_0\in Q_0$. 
    Now, assume that we have defined $n_1, \ldots n_{l-1}$, and consider the word $u = a^{n_1} \# a^{n_2} \# \cdots \# a^{n_{l-1}} \#$. 
    Define $n=|u| \cdot W$ where $W$ is the maximal update of any transition of $\cV_j$.
    Since $u$ consists of $\frac{n}{W}$ letters, $n+1$ is greater than any counter value that can be observed in any run of $\cV_j$ on $u$. 
    We define $n_l$ to be a multiple of $\alpha$ large enough so that~\cref{lem:largeenoughwordnonnegativecycle} guarantees a $\nonnegative$ cycle when reading $a^{n_l}$ from any configuration of the form $\{(q,n') \mid q \in Q,\ n' \leq n+1\}$.
    We set $n_b=n_c=\alpha$, the choice of $n_b,n_c$ is somewhat arbitrary. 
    Finally, we set $w = a^{n_1} \# \cdots \# a^{n_{k+1}} \# b^{n_b} c^{n_c}$.

    Now, for every $x\in \bbN$, we obtain from $w$ a word $w_x$ by pumping $x\alpha$ many $a$'s in the $r$-th and $s$-th segments and pumping $x\alpha$ many $b$'s and $c$'s in their segments.
    That is, let $n'_i = n_i + x\alpha$ for $i \in \set{r, s}$ and $n'_i=n_i$ for $i\notin \{r,s\}$, and let $n'_b=n_b+x \alpha$ and $n'_c=n_c+x \alpha$, then $w_x = a^{n'_1} \# \cdots \# a^{n'_{k+1}} \# b^{n'_b} c^{n'_c}$.
    Observe that $w_x\in \lang(\cP)$. 
    Indeed, since $n_{r} \geq n_b = \alpha$ and $n_{s} \geq n_c=\alpha$ we have that $n_{r}+x\alpha \geq n_b+x\alpha$ and $n_{s}+x\alpha \geq n_c+x\alpha$, so the $r$-th and $s$-th segments can already pay for the $b$'s and $c$'s, respectively.
    In particular, $w_x\in \lang(\cV_j)$ via some accepting $\bbN$-run $\rho_x$.

    We choose a particular value of $x$, as follows. 
    Consider $x$ and suppose some accepting $\bbN$-run $\rho_x$ as above does not traverse a $\positive$ cycle neither in $r$-th nor $s$-th segment.
    By~\cref{lem:maxcounternopositivecycles}, the maximal possible counter value of $\rho_x$ after reading
    \begin{equation*}
        a^{n_1} \# \cdots\# a^{n_{r}+x\alpha} \# \cdots \# a^{n_{s}+x\alpha} \# \cdots \# a^{n_{k+1}} \#
    \end{equation*} 
    is $M_b = (k+1+\sum_{z \in [k+1]\setminus\set{r,s}}n_z) \cdot W + 2|Q| \cdot W$. 
    Crucially, this value does not depend on $x$. 
    Further, if there is no $\positive$ cycle in the segment of $b$'s as well, again the maximal counter value of $\rho$ up to the $c$ segment is bounded by $M_c = (k+2 + \sum_{z\in [k+1]\setminus\set{r,s}}n_z) \cdot W + 3|Q| \cdot W$, that is independent of $x$ and $M_b$. 
    By~\cref{lem:largeenoughwordnonnegativecycle}, we can now choose $x$ large enough to satisfy that for every accepting $\bbN$-run $\rho_x$ on $w_x$:
    \begin{enumerate}[label=\arabic*.]
        \item If $\rho_x$ does not traverse any $\positive$ cycle in the $r$-th or $s$-th segments, then $\rho_x$ has a $\nonnegative$ cycle reading $b^{(n_b+x\alpha)}$ from any configuration in $\set{ (q,M') \mid q\in Q,\ M' \leq M_b }$.
        \item If $\rho_x$ does not traverse any $\positive$ cycle in the $r$-th or $s$-th segment, nor in the $b$ segment, then $\rho_x$ has a $\nonnegative$ cycle reading $c^{(n_c+x\alpha)}$ from any configuration in $\set{ (q,M') | q\in Q,\ M'\le M_c }$. 
    \end{enumerate}
    
    Having fixed $x$, we claim that for the constants of $w_x$, one of the $r$-th or $s$-th segment is bad in $\cV_j$.
    By construction,~\cref{lem:largeenoughwordnonnegativecycle} guarantees that $\rho_x$ has $\nonnegative$ cycles in segments $1,\ldots r-1$. 
    If $\rho_x$ has a $\positive$ cycle in segment $r$, then $\rho_x$ is of Form (i):
    \begin{equation*}
        \nonnegCol{a^{n_1}} \# \nonnegCol{a^{n_2}} \# \cdots \# \nonnegCol{a^{n_{r-1}}} \# \posCol{a^{n_{r}+x\alpha}} \# \cdots \# a^{n_{s}+x\alpha} \# \cdots \#  a^{n_{k+1}} \# b^{n_b+x\alpha} c^{n_c+x\alpha}
    \end{equation*}
    and so the $r$-th segment must be bad in $\cV_j$.

    Otherwise, if $\rho_x$ does not have a $\positive$ cycle in the $r$-th segment, then the construction in~\cref{lem:largeenoughwordnonnegativecycle} guarantees $\nonnegative$ cycles in segments indexed $r, r+1, \ldots, s-1$. 
    Indeed, for the $r$-th segment, we are guaranteed a $\nonnegative$ cycle reading $a^{n_{r}}$, all the more for $a^{n_{r}+x\alpha}$.
    As for segments indexed $r+1, \ldots s-1$, if $\rho_x$ does not have a $\positive$ cycle in the $r$-th segment, then the maximal effect of segment $r$ is $|Q| \cdot W$. 
    However, $n_{r+1}$ was constructed to guarantee a $\nonnegative$ cycle even in case the effect of segment $r$ is $Wn_{r} \geq W\alpha \geq W|Q|$. 
    
    If there is a $\positive$ cycle in segment $s$, then $\rho_x$ is again of Form (i): 
    \begin{equation*}
        \nonnegCol{a^{n_1}} \# \nonnegCol{a^{n_2}} \# \cdots \# \nonnegCol{a^{n_{s-1}}} \# \posCol{a^{n_{s}+x\alpha}} \# a^{n_{s+1}} \# \cdots \# a^{n_{k+1}} \# b^{n_b+x\alpha} c^{m_c+x\alpha}
    \end{equation*}
    and so the $s$-th segment must be bad in $\cV_j$.

    Otherwise, using the same arguments as for the $r$-th segment, we have that segments indexed $s+1,\ldots,i_{k+1}$ each contain a $\nonnegative$ cycle. 
    In this case we are left with the $b$ and $c$ segments. 
    The choice of $x$ guarantees a $\nonnegative$ cycle in the $b$ segment.
    If $\rho_x$ traverses a $\positive$ cycle in the $b$ segment, then $w_x$ is of Form (iii).
    \begin{equation*}
        \nonnegCol{a^{n_1}} \# \nonnegCol{a^{n_2}} \# \cdots \# \nonnegCol{a^{n_{k+1}}} \# \posCol{b^{n_b+x\alpha}} c^{n_c+x\alpha}
    \end{equation*}
    Finally, if there are no $\positive$ cycles in the $b$ segment, then the choice of $x$ again guarantees a $\nonnegative$ cycle in the $c$ segment, so $w_x$ is of Form (ii). 
    \begin{equation*}
    \nonnegCol{a^{n_1}} \# \nonnegCol{a^{n_2}} \# \cdots \# \nonnegCol{a^{n_{k+1}}} \# \nonnegCol{b^{n_b+x\alpha}} \nonnegCol{c^{n_c+x\alpha}}
    \end{equation*}
    In the two latter cases, both the $r$-th and the $s$-th segments are bad in $\cV_j$.
\hfill\qed\end{proof}

\subsection{Proof of~\cref{thm:2CN_prime}}
\label{sec:proof_of_2Prime}
Given~\cref{lem:two_segments_one_bad}, we now know that each $\cV_j$ has at most one good segment.
Therefore, all $1$-CNs $\cV_1,\ldots,\cV_k$ together have at most $k$ good segments. 
Recall that the words we focus on have $k+1$ segments, and therefore there is at least one segment, say the $l$-th segment, that is bad in every $\cV_j$. 
Note, however, that this segment may correspond to different constants in each $\cV_j$. 
That is, there exists constants $\set{ m_i^j, m_b^j, m_c^j \mid i \in [k+1], j \in [k] }$ witnessing that the $l$-th segment is bad for each $\cV_j$. 
We group the $\cV_j$ according to the form of their accepting runs $\rho_j$ (see Definition~\ref{def:good_bad}):
\begin{enumerate}[label=(\roman*)]
    \item $\nonnegCol{a^{m^j_1}} \# \nonnegCol{a^{m^j_2}} \# \cdots \# \posCol{a^{m^j_l}} \# a^{m^j_{l+1}} \# \cdots \# a^{m^j_{k+1}} \# b^{m^j_b}c^{m^j_c}$, 
    \item $\nonnegCol{a^{m^j_1}} \# \nonnegCol{a^{m^j_2}} \# \cdots \# \nonnegCol{a^{m^j_l}} \# \nonnegCol{a^{m^j_{l+1}}} \# \cdots \# \nonnegCol{a^{m^j_{k+1}}} \# \nonnegCol{b^{m^j_b}} \nonnegCol{c^{m^j_c}}$, or
    \item  $\nonnegCol{a^{m^j_1}} \# \nonnegCol{a^{m^j_2}} \# \cdots \# \nonnegCol{a^{m^j_l}} \# \nonnegCol{a^{m^j_{l+1}}} \# \cdots \# \nonnegCol{a^{m^j_{k+1}}} \# \posCol{b^{m^j_b}}c^{m^j_c}$. 
\end{enumerate}

{
    \setlength{\abovedisplayskip}{0pt}
    \setlength{\belowdisplayskip}{0pt}
    We now find constants resulting in a single word for which the $l$-th segment is bad in every $\cV_j$.
    First, for $i \in [k+1] \setminus\set{l}$, we define $M_i = \max\set{m^j_i \mid j \in [k]}$, note that these values are still multiples of $\alpha$. 
    Similarly, we define $M_c = \max\set{m^j_c \mid j \in [k]}$. 
    It remains to fix new constants $\newl$ and $\newb$, which we do in phases in the following.
    The resulting word is then
    \begin{equation*}
        w = a^{M_1} \# \cdots \# a^{M_{l-1}} \# a^{{\newl}} \# a^{M_{l+1}} \# \cdots \# a^{M_{k+1}} \# b^{{\newb}} c^{M_c}.    
    \end{equation*}
    
    Most steps in the analysis below are based on~\cref{lem:cyclethensimplecycle,obs:nonnegativepumptofactorial}. 
    We first, partially, handle Form (iii) runs. 
    For such $\cV_j$, there is an accepting $\bbN$-run $\rho_j$ on 
    \begin{equation*}
        \nonnegCol{a^{m^j_1}} \# \cdots \# \nonnegCol{a^{m^j_{l-1}}} \# \nonnegCol{a^{m^j_l}} \# \nonnegCol{a^{m^j_{l+1}}} \# \cdots \# \nonnegCol{a^{m^j_{k+1}}} \# \posCol{b^{m^j_b}} c^{m^j_c}
    \end{equation*}
    By pumping $\nonnegative$ cycles as per~\cref{obs:nonnegativepumptofactorial} in all segments except $l$ we obtain an accepting $\bbN$-run $\rho'_j$ on 
    \begin{equation*}
        \nonnegCol{a^{M_1}} \# \cdots \# \nonnegCol{a^{M_{l-1}}} \# \nonnegCol{a^{m^j_l}} \# \nonnegCol{a^{M_{l+1}}} \# \cdots \# \nonnegCol{a^{M_{k+1}}} \# \posCol{b^{m^j_b}} c^{m^j_c}.
    \end{equation*}
    We now pump arbitrary cycles in the $c$ segment to construct a $\bbZ$-run $\rho''_j$ on \begin{equation*}
        \nonnegCol{a^{M_1}} \# \cdots \# \nonnegCol{a^{M_{l-1}}} \# \nonnegCol{a^{m^j_l}} \# \nonnegCol{a^{M_{l+1}}} \# \cdots \# \nonnegCol{a^{M_{k+1}}} \# \posCol{b^{m^j_b}} c^{M_c}.
    \end{equation*} 
    Next, we compensate for possible negative cycles in the $c$ segment by pumping a $\positive$ cycle in the $b$ segment. 
    Thus, we construct an $\bbN$-run $\rho'''_j$ on 
    \begin{equation*}
        \nonnegCol{a^{M_1}} \# \cdots \# \nonnegCol{a^{M_{l-1}}} \# \nonnegCol{a^{m^j_l}} \# \nonnegCol{a^{M_{l+1}}} \# \cdots \# \nonnegCol{a^{M_{k+1}}} \# \posCol{b^{\newb}} c^{M_c},
    \end{equation*} 
    where $\newb$ is chosen to be large enough such that $\rho'''_j$ is an $\bbN$-run for all $\cV_j$, $j \in [k]$.
    Note that it remains to fix $\newl$.
    
   We now turn to Form (i) with a similar process
    We start with an accepting $\bbN$-run $\rho_j$ on 
    \begin{equation*}
        \nonnegCol{a^{m^j_1}} \# \cdots \# \nonnegCol{a^{m^j_{l-1}}} \# \posCol{a^{m^j_l}} \# a^{m^j_{l+1}} \# \cdots \# a^{m^j_{k+1}} \# b^{m^j_b} c^{m^j_c}.
    \end{equation*}
    Pump $\nonnegative$ cycles in segments indexed $1, \ldots, l-1$ to obtain an accepting $\bbN$-run $\rho'_j$ on  
    \begin{equation*}
        \nonnegCol{a^{M_1}} \# \cdots \# \nonnegCol{a^{M_{l-1}}} \# \posCol{a^{m^j_l}} \# a^{m^j_{l+1}} \# \cdots \# a^{m^j_{k+1}} \# b^{m^j_b}c^{m^j_c}.
    \end{equation*}
    Now, obtain a $\bbZ$-run $\rho''_j$ by pumping arbitrary cycles in the remaining segments, including the $b$ segment.
    \begin{equation*}
        \nonnegCol{a^{M_1}} \# \cdots \# \nonnegCol{a^{M_{l-1}}} \# \posCol{a^{m^j_l}} \# a^{M_{l+1}} \# \cdots \# a^{M_{k+1}} \# b^{\newb} c^{M_c}
    \end{equation*}
    Again, compensate for negative cycles by taking $\newl$ large enough so that pumping $\positive$ cycles in the $l$-th segment yields an accepting $\bbN$-run $\rho'''_j$ on
    \begin{equation*}
        \nonnegCol{a^{M_1}} \# \cdots \# \nonnegCol{a^{M_{l-1}}} \# \posCol{a^{\newl}} \# a^{M_{l+1}} \# \cdots \# a^{M_{k+1}} \# b^{\newb} c^{M_c}.
    \end{equation*}
    
    We now return to Form (iii) and fix the $l$-th segment by pumping $\nonnegative$ cycles to construct an accepting $\bbN$-run on 
    \begin{equation*}
        \nonnegCol{a^{M_1}} \# \cdots \# \nonnegCol{a^{M_{l-1}}} \# \nonnegCol{a^{\newl}} \# \nonnegCol{a^{M_{l+1}}} \# \cdots \# \nonnegCol{a^{M_{k+1}}} \# \posCol{b^{\newb}} c^{M_c}.
    \end{equation*}
    
    We are left with Form (ii), which are the most straightforward to handle. 
    We simply pump $\nonnegative$ cycles in all segments to construct an accepting $\bbN$-run $\rho'_j$ on
    \begin{equation*}
        \nonnegCol{a^{M_1}} \# \cdots \# \nonnegCol{a^{M_{l-1}}} \# \nonnegCol{a^{\newl}} \# \nonnegCol{a^{M_{l+1}}} \# \cdots \# \nonnegCol{a^{M_{k+1}}} \# \nonnegCol{b^{\newb}} \nonnegCol{c^{M_c}}.
    \end{equation*}
}
Note that the requirement for all segments before the $l$-th to be $\nonnegative$ is crucial, otherwise we won't be able to pump all the cycles in all forms simultaneously.

We now have that $w$ is accepted by every $\cV_j$, and the $l$-th segment is bad for all $\cV_j$.
By applying~\cref{lem:bad_segements_are_bad} for each of the $\cV_j$ and taking global constants to be the products of the respective constants $x, y, z > 0$ for each $\cV_j$, we now obtain $X, Y, Z \in \bbN$, multiples of $\alpha$, such that for every $n\in \bbN$ the word 
\begin{equation*}
    w_n = a^{M_1} \# \cdots \# a^{M_{l-1}} \# a^{{\newl+Xn}} \# a^{M_{l+1}} \# \cdots \# a^{M_{k+1}} \# b^{{\newb+Yn}} c^{M_c+Zn} \in \lang(\cV_j) \;\; \forall j \in [k].   
\end{equation*}
is accepted by every $\cV_j$.

Finally, we choose $n$ large enough to satisfy $\sum_{i\in [k+1] \setminus\set{l}} M_i < \min\{\newb+Yn, M_c+Zn\}$, so that $w_n\notin \lang(\cP)$.
This is possible because, w.l.o.g, the $l$-th segment can only pay for $b$, and the remaining segments $[k+1] \setminus\set{l}$ cannot pay for $c$.
This contradicts the assumption that $\lang(\cP) = \bigcap_{j\in [k]}\lang(\cV_j)$, concluding the proof of~\cref{thm:2CN_prime}.
\hfill\qed

\begin{remark}[Unbounded Compositeness]
\label{rmk:unbounded_compositeness}
The proof of~\cref{thm:2CN_prime} shows that if words with $k+1$ segments are allowed, then the language is not $(1,k)$-composite, we use this to establish primality. 
By intersecting $\lang(\cP)$ with words that allow at most $k+1$ segments, we obtain a language that is not $(1,k)$-composite, but it is not hard to show that it is $(1,2^{k+1})$-composite. 
This demonstrates that a $2$-CN can be composite, but may require unboundedly many factors.
\end{remark}

The intuition behind~\cref{thm:2CN_prime} is that separate counters are needed to keep track of the elements that ``cover'' $b^{m_b}$ and $c^{m_c}$. Extending this idea to $k$-CN, we require that the $a$ segments are partitioned to $k$ different sets that cover $k$ ``targets''.
\begin{conjecture}
\label{conj:prime kCN}
The following language is the language of a prime $k$-CN:
\begin{align*}
  L_k=&\{a^{m_1}\# a^{m_2}\# \cdots \#a^{m_t}\#b_1^{n_1}\#b_2^{n_2}\cdots \# b_k^{n_k} \mid\\
 &\exists I_1,\ldots,I_k\subseteq [t]\ \forall i\in [k],\ \sum_{j\in I_i}m_j\ge n_i\ \wedge \forall i\neq j,\ I_i\cap I_j=\emptyset \}  
\end{align*}
\end{conjecture}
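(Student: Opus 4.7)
The plan is to generalize the argument for Theorem~4. First I would build a $k$-CN $\cP_k$ recognising $L_k$: from $k$ parallel initial branches (one per counter) the machine reads each segment $a^{m_i}\#$ while incrementing exactly one chosen counter, then transitions on $\#$ into a chain of $k$ accepting states $r_1, r_2, \ldots, r_k$, where state $r_i$ loops on $b_i$ decrementing counter $i$ and moves to $r_{i+1}$ on reading $\#$. Acceptance is possible precisely when the nondeterministic assignment of $a$-segments to counters yields disjoint sets $I_1,\ldots,I_k$ satisfying the defining conditions of $L_k$. The hard direction is primality.

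For primality, assume by contradiction that $\lang(\cP_k) = \bigcap_{j=1}^N \lang(\cV_j)$ with each $\cV_j$ a $(k-1)$-CN, and set $\alpha = (\max_j |Q_j|)!$. I would focus on words of the shape $a^{m_1}\#\cdots\#a^{m_t}\# b_1^{n_1}\#\cdots\# b_k^{n_k}$, with $t$ depending on $N$ and $k$ and all $m_i, n_l$ large multiples of $\alpha$ chosen inductively as in the proof of Lemma~14. Lemmas~7--9 and Corollary~10 apply coordinate-wise to each individual counter of each $\cV_j$, so sufficiently long segments force $\nonnegative$ cycles per counter. Generalising Definition~11, say that the $i$-th segment is \emph{bad} in $\cV_j$ if there is an accepting run whose cycles can be combined (via Lemma~13) so as to simultaneously pump the $i$-th $a$-segment and all $k$ of the $b_l^{n_l}$ segments to unbounded length. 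A segment $l$ that is bad in every $\cV_j$ can then be synchronously pumped, as in Section~4.4, by taking common multiples of the per-factor pumping witnesses, to produce words accepted by every $\cV_j$ but outside $L_k$ (since any single segment can plausibly pay for only one $b_l$ while the remaining segments fail to cover the other $k-1$ targets).

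The main obstacle, and likely the reason this remains open, is generalising Lemma~14, which caps the number of good segments per factor at one when $k=2$. The analogous statement I would aim to prove is that each $(k-1)$-CN has at most $f(k)$ good segments for some $f(k)$ polynomial in $k$; taking $t = N\cdot f(k) + 1$ then yields a segment bad in every $\cV_j$ by pigeonhole. The natural route is a dimension argument: cycle effects live in $\bbZ^{k-1}$, so simultaneously feeding $k$ independent ``positivity demands'' (one per $b_l$-segment) from earlier pumping sources must incur a linear dependency, forcing at least one $b_l$ to share its pumping resource with another and thereby limiting how many segments can be independently boosted. I would attempt to formalise this by lifting the colour scheme of Section~4.3 to coordinate-wise sign patterns in $\set{>0,\geq 0,<0}^{k-1}$, classifying each cycle by its pattern, and then using a Carathéodory-style argument to rule out consistent assignments of patterns to more than $f(k)$ segments, likely proceeding by induction on $k$ with Theorem~4 as the base case. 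This is precisely the kind of multi-dimensional pumping argument the authors flag as notoriously difficult for VASS/CNs, and I expect a genuinely new combinatorial or linear-algebraic idea to be required here; this step is where the bulk of the technical effort will lie.
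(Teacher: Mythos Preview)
This statement is a \emph{conjecture} in the paper, not a theorem: the authors explicitly write that proving $L_k$ prime ``does not seem to succumb to our techniques, and we leave it as an important open problem.'' So there is no proof in the paper to compare against, and your proposal should be read as a research plan rather than a proof attempt.

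As a plan, your outline tracks the paper's framing closely: the $k$-CN construction you sketch is the ``simple extension'' the authors allude to, and you correctly isolate the generalisation of Lemma~\ref{lem:two_segments_one_bad} as the crux. You are also honest that the proposed Carath\'eodory/sign-pattern argument is speculative. That said, one point deserves more caution than you give it: Lemmas~\ref{lem:maxcounternopositivecycles}--\ref{lem:cyclethensimplecycle} and Corollary~\ref{obs:nonnegativepumptofactorial} are stated and proved for $1$-CNs, where a cycle has a scalar effect and the order on $\bbZ$ is total. They do \emph{not} apply ``coordinate-wise'' to a $(k-1)$-CN, because a single cycle may be positive in some coordinates and negative in others, and the inductive removal argument in Lemma~\ref{lem:cyclethensimplecycle} relies on comparing cycles in a total order. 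The paper's only genuinely multi-dimensional pumping (in the proof of Theorem~\ref{thm:dimension hierarchy language}) goes through Dickson's lemma to force a componentwise-nonnegative cycle, which is a much weaker guarantee than what drives the $k=2$ proof. Lifting the colour scheme to $\{>0,\ge 0,<0\}^{k-1}$ is a natural first move, but the case analysis it induces explodes, and it is not clear that a bound $f(k)$ on good segments per factor exists at all---the ``linear dependency'' intuition you invoke does not obviously survive the interaction between nondeterminism in the $\cV_j$ and the partial order on $\bbZ^{k-1}$.

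In short: your proposal is not a proof and does not claim to be one; it is a reasonable sketch of where the difficulty lies, consistent with the paper's own assessment that this is open.
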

While constructing a $k$-CN for $L_k$ is a simple extension of~\cref{xmp:prime2CN}, proving that it is indeed prime does not seem to succumb to our techniques, and we leave it as an important open problem (see~\cref{sec:discussion}).

\section{Primality of Counter Nets is Undecidable}\label{sec:primality_undecidable}
In this section we consider the \emph{primality} and \emph{dimension-minimality} decision problems: given a $k$-CN $\cA$, decide whether $\cA$ is prime and whether $\cA$ is dimension-minimal, respectively.

We use our prime $2$-CN from~\cref{xmp:prime2CN} and the results of~\cref{sec:prime-2vass} to show that both problems are undecidable. 
	Our proof is by reduction from the containment problem\footnote{Actually, the complement thereof.} for $1$-CN: given two $1$-CN $\cA,\cB$ over alphabet $\Sigma$, decide whether $\lang(\cA)\subseteq \lang(\cB)$. This problem was shown to be undecidable in~\cite{hofman2013decidability}.

We begin by describing the reduction that applies to both problems.
Consider an instance of 1-CN containment with two 1-CNs $\cA$ and $\cB$ over the alphabet $\Sigma$.
We construct a 2-CN $\cC$ as follows. 
Let $\Lambda$ be the alphabet of the $2$-CN from~\cref{xmp:prime2CN} and Theorem~\ref{thm:2CN_prime}, and let $\$ \notin \Sigma \cup \Lambda$ be a fresh symbol.
Intuitively, $\cC$ accepts words of the form $u\$v$ when either $u\in \lang(\cA)$ and $v$ is accepted by $\cP$ starting from the maximal counter $\cA$ ended with on $u$, or when $u\in \lang(\cB)$ and $v\in \Lambda^*$.


Formally, we convert $\cA$ and $\cB$ to 2-CNs $\cA'$ and $\cB'$ by adding a counter and never modifying its value, so a transition $(p, \sigma, v, q)$ in $\cA$ becomes $(p, \sigma, (v,0), q))$ in $\cA'$, for example. 
We construct a 2-CN $\cC$ as follows (see~\cref{fig:primalityReduction}). 
We take $\cA'$, $\cB'$, and $\cP$, and for every accepting state $q$ of $\cA'$ we introduce a transition $(q, \$, \vec{0}, p_0)$ where $p_0$ is an initial state of $\cP$. 
We then add a new accepting state $q_\top$ and add the transitions $(q_\top, \lambda, \vec{0}, q_\top)$ for every letter $\lambda \in \Lambda$, in other words $q_\top$ is an accepting sink for $\Lambda$.
We also add transitions $(s, \$, \vec{0}, q_\top)$ from every accepting state $s$ of $\cB'$.
The initial states are those of $\cA'$ and $\cB'$, and the accepting states are those of $\cP$ and $q_\top$. 

\begin{figure}[ht]
\begin{tikzpicture}[->, node distance=2cm] 
\node[state,initial, label={[label distance=-25pt]left:$\cA'$}, initial text={}, rectangle, minimum height=25pt, minimum width=50pt](A) at (0,0) {};
\node[state, rectangle, minimum height=25pt, minimum width=25pt] (B) at (3,0) {$\cP$};
\node[state, initial, label={[label distance=-25pt]left:$\cB'$}, initial text={}, rectangle, minimum height=25pt, minimum width=50](C) at (6,0) {};
\node[state, accepting,inner sep=3pt, minimum size=3pt](D) at (9,0) {$q_\top$};
\node[state, accepting, dashed, minimum size=15pt](E) at (0.5,0) {};
\node[state, accepting, dashed, minimum size=15pt](F) at (6.5,0) {};

\draw
    (E) edge[above] node{$\$, \vec{0}$} (B)
    (F) edge[above] node{$\$, \vec{0}$} (D)
    (D) edge[loop right] node{$\Lambda, \vec{0}$} (D);

\end{tikzpicture}
\caption{The reduction from $1$-CN non-containment to $2$-CN primality and dimension-minimality. The dashed accepting states are those of $\cA'$ and $\cB'$, and are not accepting in the resulting construction.}

\label{fig:primalityReduction}
\end{figure}
\begin{theorem}
\label{thm:primality and minimality are undecidable}
Primality and dimension-minimality are undecidable, already for $2$-CN.
\end{theorem}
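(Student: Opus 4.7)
The plan is to reduce from the (complement of the) containment problem for $1$-CNs: given $1$-CNs $\cA, \cB$, decide whether $\lang(\cA) \not\subseteq \lang(\cB)$; this problem is undecidable by~\cite{hofman2013decidability}. We claim that the constructed $2$-CN $\cC$ is prime (and, under this reduction, equivalently dimension-minimal) if and only if $\lang(\cA) \not\subseteq \lang(\cB)$.

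\emph{Easy direction (composite).} Suppose $\lang(\cA) \subseteq \lang(\cB)$. Then every word accepted via the $\cA'$-to-$\cP$ branch of $\cC$ is already accepted via the $\cB'$-to-$q_\top$ branch, since the latter accepts any suffix in $\Lambda^*$. Hence $\lang(\cC) = \lang(\cB) \cdot \$ \cdot \Lambda^*$, which is a $1$-CN language: run $\cB$ on the prefix, then on $\$$ move to an accepting sink for $\Lambda$. Thus $\cC$ is $(1,1)$-composite, so neither prime nor dimension-minimal.

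\emph{Hard direction (prime).} Suppose $\lang(\cA) \not\subseteq \lang(\cB)$ and fix some $u_0 \in \lang(\cA) \setminus \lang(\cB)$. Let $c^*$ be the maximum end-counter value over accepting runs of $\cA$ on $u_0$; this quantity is finite, bounded by $|u_0|\cdot W$ where $W$ is the largest transition update in $\cA$. Since giving $\cP$ a larger starting first-counter value only enables more acceptances, the slice $L^* := \{v \in \Lambda^* : u_0 \$ v \in \lang(\cC)\}$ equals the language accepted by $\cP$ starting from $(p_0, c^*, 0)$. Now assume for contradiction that $\lang(\cC) = \bigcap_{j=1}^n \lang(\cV_j)$ with each $\cV_j$ a $1$-CN. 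For each $j$, the slice $\tilde L_j := \{v : u_0 \$ v \in \lang(\cV_j)\}$ is itself a $1$-CN language: the set of $\cV_j$-configurations reachable after reading $u_0 \$$ is finite, the language of a $1$-CN started from any fixed configuration is again a $1$-CN language, and $1$-CN languages are closed under finite union. Hence $L^* = \bigcap_j \tilde L_j$ would be expressible as an intersection of $1$-CN languages.

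The main obstacle, and the heart of the argument, is to show that $L^*$ is in fact prime as a $2$-CN language, yielding a contradiction. The plan is to replay the proof of~\cref{thm:2CN_prime} with $L^*$ in place of $\lang(\cP)$. The only semantic difference is a fixed additive bonus of $c^*$ on the first counter, which weakens the $b$-side constraint by $c^*$ but leaves the $c$-side untouched. Since $c^*$ is a constant while the pumping constants $M_i, \newl, \newb, M_c, X, Y, Z$ can be chosen arbitrarily large, the final word $w_n$ constructed in~\cref{sec:proof_of_2Prime} still fails to lie in $L^*$ once $n$ is taken so that both $\sum_{i \in [k+1]\setminus\{l\}} M_i + c^* < \newb + Yn$ and $\sum_{i \in [k+1]\setminus\{l\}} M_i < M_c + Zn$. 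The former kills partitions with $l \notin I$ (insufficient budget for $b$ even with the $c^*$ bonus), and the latter kills partitions with $l \in I$ (insufficient budget for $c$, which $c^*$ cannot help). This contradicts $w_n \in L^*$, so $\cC$ is prime and, in particular, dimension-minimal. Undecidability of both primality and dimension-minimality for $2$-CNs follows.
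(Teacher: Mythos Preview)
Your proof is correct and follows essentially the same approach as the paper's: reduce from $1$-CN non-containment, argue the easy direction by observing the $\cA'$--$\cP$ branch is redundant, and for the hard direction fix a witness $u_0\in\lang(\cA)\setminus\lang(\cB)$ and re-run the primality argument of~\cref{thm:2CN_prime} on the resulting $\cP$-with-initial-counter language. The one cosmetic difference is that you first slice each $\cV_j$ by $u_0\$$ to obtain bona fide $1$-CNs $\tilde L_j$ over $\Lambda$ (via the finite-union-over-reachable-configurations observation) and then invoke the analysis of~\cref{sec:prime-2vass} on $\Lambda^*$, whereas the paper applies that analysis directly to words of the form $u_0\$v$; your framing is a bit more modular but the content is identical, including the final adjustment absorbing the constant $c^*$ into the choice of $n$.
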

\begin{proof}
We prove the theorem by establishing that $\cC$ is not prime if and only if $\lang(\cA)\subseteq \lang(\cB)$, and $\cC$ is not dimension-minimal if and only if $\lang(\cA)\subseteq \lang(\cB)$.

Assume that $\lang(\cA) \subseteq \lang(\cB)$, then the component of $\cC$ containing $\cA'$ and $\cP$ (\cref{fig:primalityReduction} left) becomes redundant.
Since the component containing $\cB'$ and $q_\top$ only makes use of one counter, $\cC$ is composite.
Formally, we claim that $\lang(\cC) =\set{ u\$v \mid u \in \lang(\cB) \wedge v \in \Lambda^*}$. 
Indeed, if $w\in \lang(\cC)$ then $w=u\$v$ so either $u \in \lang(\cA') = \lang(\cA)$ or $u \in \lang(\cB)$, but since $\lang(\cA)\subseteq \lang(\cB)$, this is equivalent to $u \in \lang(\cB)$, and in this case there is simply no condition on $v \in \Lambda^*$.
Since the second counter is not used in component containing $\cB'$ and $q_\top$ (\cref{fig:primalityReduction} right), we can construct a $1$-CN equivalent to $\cC$ by projecting on the first counter and just deleting the component containing $\cA'$ and $\cP$ completely.
It follows that in this case $\cC$ is not dimension-minimal, and therefore is not prime either.

For the converse, assume that $\lang(\cA) \not\subseteq \lang(\cB)$, and let $u \in \lang(\cA) \setminus \lang(\cB)$. 
Denote $m = \max\set{ \effect{\rho} \mid \rho \text{ is an accepting run of } \cA \text{ on } u }$. 
Thus, for a word $v \in \Lambda^*$ we have that $u\$v \in \lang(\cC)$ if and only if $v$ is accepted in $\cP$ with initial counter $m$. 
Assume by way of contradiction that $\cC$ is not prime, then we can write $\lang(\cC)$ as an intersection of languages of $1$-CNs. 
Loosely speaking, this will create a contradiction as we will be able to argue that $\cP$ is not prime. 
More precisely, take $v = a^{m_1} \# a^{m_2} \# \cdots \# a^{m_{k+1}} \# b^{m_b} c^{m_c}$ for integers $\left\{ m_i \right\}_{i=1}^{k+1}, m_b, m_c \in \bbN$ and consider words of the form $u\$v$. 
Our analysis from~\cref{sec:prime-2vass}---specifically the arguments used in the proof~\cref{lem:two_segments_one_bad}---on $u\$v$ can show, mutatis mutandis, that the language of $\cP$ is not composite regardless of any fixed initial counter value (an analogue of~\cref{thm:2CN_prime}).

We thus have that $\cC$ is prime, and in particular $\cC$ is dimension-minimal, concluding the correctness of the reduction.
\hfill\qed\end{proof}

To contrast the undecidability of primality in nondeterministic CNs, we turn our attention to a decidable fragment of primality, for which we focus on deterministic CNs.
Recall that by~\cref{prop:DCN_Projection}, a $k$-DCN is dimension minimal if and only if it is not $(1, k-1)$-composite.
Thus, dimension-minimality ``captures'' primality.
We show that regularity, which is equivalent to being $(0, 1)$-composite, is decidable for $k$-DCNs for every dimension $k$.

For dimension one, regularity is already known to be decidable in \class{EXPSPACE}, even for history-deterministic 1-CNs~{\cite[Theorem 19]{bose2023history}}.
History-determinism is a restricted form of nondeterminism; history-deterministic CNs are less expressive than nondeterministic CNs but more expressive than DCNs.
However, already for $k\geq 2$, regularity is undecidable for history-deterministic $k$-CNs~{\cite[Theorem 20]{bose2023history}}.

\begin{theorem}
    \label{thm:DCN_regularity_decidable}
    Regularity of $k$-DCN is decidable and is in \class{EXPSPACE}.
\end{theorem}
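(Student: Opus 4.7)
The plan is to combine~\cref{prop:DCN_Projection} with the EXPSPACE algorithm for regularity of 1-CNs~\cite{bose2023history}. By~\cref{prop:DCN_Projection}, a $k$-DCN $\cD$ decomposes as $\lang(\cD) = \bigcap_{i=1}^k \lang(\cD|_i)$, where each projection $\cD|_i$ is a 1-DCN sharing the underlying state structure of $\cD$. Since 1-DCNs are a subclass of history-deterministic 1-CNs, one can apply~\cite[Theorem 19]{bose2023history} to decide regularity of each $\lang(\cD|_i)$ in \class{EXPSPACE}, obtaining an equivalent DFA $\cA_i$ whenever $\lang(\cD|_i)$ is regular. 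If all $k$ projections are regular, then $\lang(\cD) = \bigcap_{i=1}^k \lang(\cA_i)$ is regular as a finite intersection of regular languages.

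The converse direction is the subtle part: $\lang(\cD)$ may be regular even when some projection is not, since one counter's dynamics can restrict the overall language (for instance, to a finite set) and thus mask the non-regularity of another projection. To handle this, I would characterize regularity via a bounded-counter criterion: $\lang(\cD)$ is regular iff there is a computable bound $B$ such that any two reachable $\bbN$-configurations $(q,\vec{v}_1)$ and $(q,\vec{v}_2)$ with $\vec{v}_1[i],\vec{v}_2[i] > B$ agree on the set of accepted suffixes. Since $\cD$ is deterministic, each configuration admits a unique continuation run, so this future-equivalence check reduces to a reachability/coverability analysis on a product of $\cD$ with itself, which can be performed in \class{EXPSPACE} using semilinear representations of reachable sets.

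Concretely, I would iterate over counters: for each $i$, first test regularity of $\cD|_i$ to obtain a candidate threshold $B_i$ (taking $B_i=\infty$ when the projection is non-regular), then verify on the product of $\cD$ with a counter-capping automaton that capping counter $i$ at $B_i$ preserves $\lang(\cD)$. A fixed point of these capping steps either yields an equivalent DCN with all counters bounded---from which an explicit DFA is immediate---or produces a distinguishing pair of configurations witnessing non-regularity. Determinism is essential here: it prevents the Ackermann-type blowups of general VASS analysis, since unique runs allow all witnessing reachability queries to be expressed via a product of bounded-dimension DCNs.

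The main obstacle will be formalising the joint saturation step: a counter's irrelevance threshold may depend on the others, so naively combining per-counter bounds is insufficient. I expect this to be resolved by a simultaneous fixed-point computation over all $k$ counters, exploiting the fact that determinism makes the set of reachable configurations semilinear and computable in \class{EXPSPACE}, matching the 1-CN case.
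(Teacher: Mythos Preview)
Your proposal has a genuine gap in the converse direction. You correctly identify that regularity of all projections $\cD|_i$ is sufficient but not necessary for regularity of $\lang(\cD)$, and you attempt to repair this via a bounded-counter criterion and an iterative capping procedure. However, the repair rests on an unjustified, and in fact false, claim: you assert that ``determinism makes the set of reachable configurations semilinear and computable in \class{EXPSPACE}.'' Determinism here refers only to the labelling (at most one transition per state/letter pair) and imposes no restriction on the underlying unlabelled VASS. Any VASS can be made into a DCN by giving each transition a distinct letter, and reachability sets of VASS are in general not semilinear already in low fixed dimension; moreover, VASS reachability is Ackermann-complete, so even where the set is semilinear you cannot expect to compute it in \class{EXPSPACE}. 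Without this, the product/coverability analysis you sketch does not yield a decision procedure, let alone one in \class{EXPSPACE}.

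Beyond this, the bounded-counter criterion (``$\lang(\cD)$ is regular iff configurations agreeing above some bound $B$ accept the same suffixes'') is asserted rather than proved, and the ``simultaneous fixed-point computation over all $k$ counters'' is left at the level of an expectation (``I expect this to be resolved'') rather than an argument. As written, this is a plan with a plausible first step whose completion is not established.

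The paper avoids all of this by a direct reduction: it relabels $\cD$ so that every transition carries a distinct letter (preserving regularity), applies the Hopcroft--Pansiot encoding of states into three extra counters to obtain a single-state $(k{+}3)$-DCN, i.e.\ a VAS (again preserving regularity via a simple letter-tripling correspondence), and then invokes the known \class{EXPSPACE} decidability of VAS trace-language regularity from~\cite{BlockeletS11}. No analysis of the reachable configurations of $\cD$ is needed, and the \class{EXPSPACE} bound is inherited directly.
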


We provide further details, including a proof of~\cref{thm:DCN_regularity_decidable},  in~\cref{apx:DCN regularity}.
In short, we translate our $k$-DCN into a regularity preserving Vector Addition System (VAS) and use results on VAS regularity from~\cite[Theorem 4.5]{BlockeletS11}. 
We remark that an alternative approach may be taken by adapting the results of~\cite{Demri13} on regularity of VASS, although this seems more technically challenging because CNs have accepting states. 


\section{Expressiveness Trade-Offs between Dimensions and Nondeterminism}
\label{sec:dimension_vs_nondet}
\cref{thm:2CN_prime} implies that 2-CNs are more expressive than 1-CNs, and that nondeterministic models are more expressive than deterministic ones. 
In particular, a $k$-DCN can be decomposed by projection (\cref{prop:DCN_Projection}), and have decidable regularity (\cref{thm:DCN_regularity_decidable}).
It is therefore interesting to study the interplay between increasing the dimension and introducing nondeterminism. 
In this section we present two results: first, we show that dimension and nondeterminism, are incomparable notions, in a sense. Second, we show that increasing the dimension strictly increases expressiveness, for both CNs and DCNs. We remark that the latter may seem like an intuitive and simple claim. 
However, to the best of our knowledge it has never been proved, and moreover, it requires a nontrivial approach to pumping with several counters.

We start by showing that nondeterminism can sometimes compensate for low dimension.
Let $k \in \bbN$ and $\Sigma = \set{ a_1, \ldots, a_k, b_1, \ldots, b_k, c }$; consider the language $L_k =\set{ a_1^{n_1} a_2^{n_2} \cdots a_k^{n_k} b_i c^m \mid i \in [k] \wedge n_i \geq m }$.
It is easy to construct a $k$-DCN as well as a $1$-CN for $L_k$, as depicted by~\cref{fig:k_DCN_with_1CN_but_no_k-1_DCN,fig:1_CN_for_k_DCN_with_1CN_but_no_k-1_DCN} for $k=3$. 
To construct a 1-CN we guess which $b_i$ will be later read, and verify the guess using the single counter in the $a_i^{n_i}$ part. 

\begin{figure}[ht]
    \centering
    \begin{tikzpicture}[auto,node distance=1.5cm,scale=1]
        \footnotesize
        \thinmuskip=0mu
            \node (q1) [initial, state, initial text = {},inner sep=3pt, minimum size=5pt] {};
            \node (q2) [state,inner sep=3pt, minimum size=3pt] at (1.6,0) {};
            \node (q3) [state,inner sep=3pt, minimum size=5pt] at (3.2,0) {};
            \node (s1) [accepting, state,inner sep=3pt, minimum size=5pt] at (4.8,0.7) {};
            \node (s2) [accepting, state,inner sep=3pt, minimum size=5pt] at (4.8,0.1) {};
            \node (s3) [accepting, state,inner sep=3pt, minimum size=5pt] at (4.8,-0.3) {};
            \path [-stealth]
            (q1) edge [bend left=0] node[below] {$a_2,(0,1,0)$} (q2)
            (q2) edge [bend left=0] node[below] {$a_3,(0,0,1)$} (q3)
            (q3) edge [bend left=0] node[above=-3pt,sloped,pos=0.7] {$b_1,\vec{0}$} (s1)
            (q3) edge [bend left=0] node[above=-3pt,pos=0.7] {$b_2,\vec{0}$} (s2)
            (q3) edge [bend left=0] node[below=-3pt,sloped,pos=0.7] {$b_3,\vec{0}$} (s3)
            (q1) edge [loop above] node {$a_1,(1,0,0)$} (q1)
            (q2) edge [loop above] node {$a_2,(0,1,0)$} (q2)
            (q3) edge [loop above] node {$a_3,(0,0,1)$} (q3)
            (s1) edge [loop right] node {$c,(-1,0,0)$} (s1)
            (s2) edge [loop right] node {$c,(0,-1,0)$} (s2)
            (s3) edge [loop right] node {$c,(0,0,-1)$} (s3);
        \end{tikzpicture}
    \caption{A $3$-DCN for $L_3=\{a_1^{n_1}a_2^{n_2}a_3^{n_3}b_ic^m\mid i\in [3]\wedge n_i\ge m\}$. Intuitively, the $3$-DCN counts the number of occurrences of each letter, and decreases the appropriate counter once the letter $b_i$ selects it.}
    \label{fig:k_DCN_with_1CN_but_no_k-1_DCN}  
\end{figure}

\begin{figure}[ht]
    \centering
    \begin{tikzpicture}[auto,node distance=1.5cm,scale=1]
        \footnotesize
        \thinmuskip=0mu
            \node (q11) [initial, state, initial text = {},inner sep=3pt, minimum size=5pt] {};
            \node (q12) [state,inner sep=3pt, minimum size=3pt] at (1.6,0) {};
            \node (q13) [state,inner sep=3pt, minimum size=5pt] at (3.2,0) {};
            \node (q21) [initial, state, initial text = {},inner sep=3pt, minimum size=5pt] at (0,-1.6) {};
            \node (q22) [state,inner sep=3pt, minimum size=3pt] at (1.6,-1.6) {};
            \node (q23) [state,inner sep=3pt, minimum size=5pt] at (3.2,-1.6) {};
            \node (q31) [initial, state, initial text = {},inner sep=3pt, minimum size=5pt] at (0,-3.2) {};
            \node (q32) [state,inner sep=3pt, minimum size=3pt] at (1.6,-3.2) {};
            \node (q33) [state,inner sep=3pt, minimum size=5pt] at (3.2,-3.2) {};
            \node (s) [accepting, state,inner sep=3pt, minimum size=5pt] at (4.8,-1.6) {};
            \path [-stealth]
            (q11) edge [bend left=0] node[below] {$a_2,0$} (q12)
            (q12) edge [bend left=0] node[below] {$a_3,0$} (q13)
            (q21) edge [bend left=0] node[below] {$a_2,1$} (q22)
            (q22) edge [bend left=0] node[below] {$a_3,0$} (q23)
            (q31) edge [bend left=0] node[below] {$a_2,0$} (q32)
            (q32) edge [bend left=0] node[below] {$a_3,1$} (q33)
            (q11) edge [loop above] node {$a_1,1$} (q11)
            (q12) edge [loop above] node {$a_2,0$} (q12)
            (q13) edge [loop above] node {$a_3,0$} (q13)
            (q21) edge [loop above] node {$a_1,0$} (q21)
            (q22) edge [loop above] node {$a_2,1$} (q22)
            (q23) edge [loop above] node {$a_3,0$} (q23)
            (q31) edge [loop above] node {$a_1,0$} (q31)
            (q32) edge [loop above] node {$a_2,0$} (q32)
            (q33) edge [loop above] node {$a_3,1$} (q33)
            (q13) edge [bend left=0] node[above=-3pt,sloped] {$b_1,0$} (s)
            (q23) edge [bend left=0] node[above=-3pt,sloped] {$b_2,0$} (s)
            (q33) edge [bend left=0] node[above=-3pt,sloped] {$b_3,0$} (s)
            (s) edge [loop right] node {$c,-1$} (s);
        \end{tikzpicture}
    \caption{A $1$-CN for $L_3=\{a_1^{n_1}a_2^{n_2}a_3^{n_3}b_ic^m\mid i\in [3]\wedge n_i\ge m\}$. Intuitively, the $1$-CN guesses which $b_i$ will be seen, and counts the respective occurrences of the letter $a_i$. Then, once $b_i$ is seen, the counter is decreased on $c$.}
    \label{fig:1_CN_for_k_DCN_with_1CN_but_no_k-1_DCN}  
\end{figure}
However, we show in~\cref{apx:k_DCN_no_k-1_DCN} that the dimension cannot be minimised whilst maintaining determinism.
\begin{proposition}
\label{prop:k_DCN_no_k-1_DCN}
$L_k$ is not recognisable by a $(k-1)$-DCN.
\end{proposition}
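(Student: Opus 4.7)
The plan is to argue by contradiction. Assume $\cD$ is a $(k-1)$-DCN with $\lang(\cD) = L_k$. By \cref{prop:DCN_Projection}, $\lang(\cD) = \bigcap_{j \in [k-1]} \lang(\cD|_j)$ where each $\cD|_j$ is a $1$-DCN. For every $i \in [k]$ and every $N \in \bbN$, the word $w_i(N) = a_1^N \cdots a_k^N b_i c^{N+1}$ lies outside $L_k$, so it is rejected by some projection $\cD|_{j(i,N)}$. A first pigeonhole in $N$ provides, for each $i$, an index $j(i) \in [k-1]$ such that $\cD|_{j(i)}$ rejects $w_i(N)$ for infinitely many $N$; a second pigeonhole on $[k]$ versus $[k-1]$ yields distinct $i, i' \in [k]$ with $j(i) = j(i') = j^*$. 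The rest of the argument takes place in the single $1$-DCN $\cD|_{j^*}$, which accepts every word of $L_k$ yet rejects $w_i(N)$ and $w_{i'}(N)$ for infinitely many $N$.

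I next analyse the counter of $\cD|_{j^*}$ by an affine/pumping argument. Fix a period $P$ so that, from every state, the deterministic trajectory of $\cD|_{j^*}$ on $a_l^{*}$ becomes state-periodic with period dividing $P$ after a bounded transient; let $\alpha_l \in \tfrac{1}{P}\bbN$ be the per-symbol counter rate in the periodic part. Since $\cD|_{j^*}$ must admit arbitrarily long valid prefixes $a_1^{n_1} \cdots a_k^{n_k}$, each $\alpha_l \geq 0$: a negative rate would eventually crash the deterministic $a_l^{*}$-loop. Restricting to $\vec{n} \in \vec{n}^0 + P\bbN^k$ with $\vec{n}^0$ chosen past all transients, the counter after the $a$-prefix is the affine function $u(\vec{n}) = u_0 + \sum_{l=1}^{k} \alpha_l (n_l - n_l^0)$, and after reading $b_i$ the configuration is $(q_i, u(\vec{n}) + \beta_i)$ for a fixed state $q_i$ and a constant $\beta_i$.

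From $(q_i, u(\vec{n}) + \beta_i)$, the $c^{*}$-trace of $\cD|_{j^*}$ is ultimately periodic in its state sequence. Since $c^m$ must be accepted for every $m \in [0, n_i]$ with $n_i$ arbitrarily large, every state on the periodic part of the trace must be accepting; hence the only mechanism that can reject $c^{n_i + 1}$ is a counter crash. This forces the $c$-cycle at $q_i$ to have negative per-cycle effect $\gamma_i < 0$, and the crash time to be an affine function of the initial counter equal to $n_i + 1$ up to $O(1)$. Treating this identity as a polynomial in $\vec{k}$ with $n_l = n_l^0 + P k_l$, and matching the coefficient of each $k_l$, we obtain $\alpha_i > 0$ and $\alpha_l = 0$ for all $l \neq i$. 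The same analysis applied at $i'$ gives $\alpha_{i'} > 0$ and $\alpha_l = 0$ for all $l \neq i'$; in particular $\alpha_i = 0$, contradicting $\alpha_i > 0$.

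The main technical obstacle is making the affine/crash-time analysis rigorous, i.e.\ controlling the $O(1)$ error terms contributed by pre-periodic transients in both the $a_l^{*}$ and $c^{*}$ traces, partial final cycles, and constant offsets from the $\#$ and $b_i$ transitions. Determinism keeps this manageable: each trace is uniquely determined and eventually periodic, so each contribution is bounded by a constant depending only on the size of $\cD$ and its maximal update. Once this is established, the matching of linear coefficients is elementary, and the double pigeonhole from the first paragraph then closes the proof.
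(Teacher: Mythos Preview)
Your approach is genuinely different from the paper's. The paper works directly in the $(k-1)$-DCN: it fixes $n>|Q|$, looks at the unique run on $a_1^n\cdots a_k^n$, picks a cycle in each $a_j$-segment (necessarily with non-negative effect on every counter), and then pigeonholes $k-1$ counters against $k$ segments to find a segment (w.l.o.g.\ $a_k$) whose cycle's strictly-positive coordinates are all already hit by some earlier segment's cycle. Pumping $a_1,\ldots,a_{k-1}$ can then substitute for pumping $a_k$, and together with pumping the $c$-block this produces an accepted word with $n_k<m$. Your route via \cref{prop:DCN_Projection} and the double pigeonhole is more indirect but cleanly reduces the problem to a single-counter rate analysis; that is a legitimate and rather different line.

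There is, however, a gap in your coefficient-matching step. The equality ``crash time $= n_i+1$ up to $O(1)$'' is only established on the \emph{diagonal} $n_1=\cdots=n_k=N$, since that is the only place where you know $\cD|_{j^*}$ rejects. On the diagonal the variables $k_l$ coincide, so you cannot separate their coefficients; all you extract is $\sum_l\alpha_l/|\gamma_i|=1$, not $\alpha_l=0$ for $l\ne i$. The repair is to argue with two inequalities rather than one identity. From $L_k\subseteq\lang(\cD|_{j^*})$ you get crash time $> n_i$ for \emph{all} $\vec{n}$ in the progression, hence $\alpha_i/|\gamma_i|\ge 1$ by sending $n_i\to\infty$ with the other coordinates fixed; from rejection on the diagonal you get crash time $\le N+1$ for infinitely many $N$, hence $\sum_l\alpha_l/|\gamma_i|\le 1$. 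Together with $\alpha_l\ge 0$ these force $\alpha_l=0$ for $l\ne i$ and $\alpha_i=|\gamma_i|$, which is exactly what you need for the final contradiction with $i'$. (Minor aside: there are no $\#$ letters in $L_k$; the transitions between segments are labelled $a_{l+1}$ and $b_i$.)
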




We now turn to show that conversely, dimension can sometimes compensate for nondeterminism. 
Moreover, we show that there is a strict hierarchy of expressiveness with respect to dimension. Specifically, for $k\in \bbN$ consider the language 
$H_k=\{a_1^{m_1}a_2^{m_2}\cdots a_k^{m_k}b_1^{n_1}b_2^{n_2}\cdots b_k^{n_k}\mid \forall 1\le i\le k,\ m_i\ge n_i\}$.
\begin{theorem}
\label{thm:dimension hierarchy language}
    $H_k$ is recognisable by a $k$-DCN, but not by a $(k-1)$-CN.
\end{theorem}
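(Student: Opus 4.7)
My plan is the following. For the upper bound, I would construct a $k$-DCN $\cA_k$ whose state graph traverses the $2k$ phases (reading $a_i^*$ for $i=1,\ldots,k$ and then $b_i^*$ for $i=1,\ldots,k$) in order, using self-loops to count within each phase. Each transition reading $a_i$ increments counter $i$, each transition reading $b_i$ decrements counter $i$, and the unique state of the final phase is accepting (empty $a$- or $b$-segments are handled by short-circuit chain transitions into the next phase). A straightforward invariant shows that counter $i$ equals $m_i$ minus the number of $b_i$ letters read so far, so the non-negativity condition precisely encodes $m_i \geq n_i$; hence $\lang(\cA_k) = H_k$.

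For the lower bound, I would assume for contradiction that $\cV$ is a $(k-1)$-CN recognising $H_k$, with state set $Q$ and maximum absolute transition weight $W$. Fix $N$ very large (as a function of $|Q|, k, W$) and consider $w_N = a_1^N \cdots a_k^N b_1^N \cdots b_k^N \in H_k$; let $\rho$ be an accepting $\bbN$-run of $\cV$ on $w_N$. The aim is to pump $\rho$ to obtain an accepting run on some $w' \notin H_k$. The strategy is to locate a cycle $\tau$ inside some $b_j^N$ segment of $\rho$ whose pumping, together with compensating pumpings inside the $a_i^N$ segments for $i \neq j$, increases $n_j$ strictly above $N = m_j$ while leaving $m_j$ unchanged and preserving counter non-negativity throughout. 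Lemma~\ref{lem:largeenoughwordnonnegativecycle}, applied projectively to each of the $k-1$ coordinates, guarantees that within each segment of $\rho$ there exists, for every coordinate $\ell \in [k-1]$, a cycle whose $\ell$-th effect component is non-negative (possibly negative in the remaining coordinates). A linear-algebraic argument on the resulting cycle-effect vectors in $\bbZ^{k-1}$, combined with a pigeonhole over the $k$ segments versus the $k-1$ coordinates, should then produce a non-negative integer combination of cycles drawn from $a_i^N$ (for $i \neq j$) and from $b_j^N$ whose net counter effect is coordinate-wise non-negative, whose $b_j$-letter count effect is strictly positive, and whose $a_j$-letter count effect is zero. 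Executing this combination as a pumping of $\rho$ yields a run on $w'$ with $n_j > m_j$, hence $w' \in \lang(\cV) \setminus H_k$, a contradiction.

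The hard part will be this final linear-algebraic/pumping step: showing that the cross-segment compensation exists, and that the pumped run remains valid at every intermediate configuration, not merely in aggregate effect. In a $(k-1)$-CN with $k-1 \geq 2$ one cannot directly invoke a multi-dimensional analogue of Lemma~\ref{lem:largeenoughwordnonnegativecycle} to obtain a single-segment cycle whose effect is non-negative on \emph{all} coordinates simultaneously---bad sequences in $\bbN^{k-1}$ over bounded values can be arbitrarily long---so the pumping argument is genuinely multi-segment. This is the ``nontrivial pumping with several counters'' anticipated by the paper. The central technical challenge is to formalise the required combination of cycles and to fix $N$ large enough (likely via an iterated pigeonhole on states and coordinates) so that every cycle used by the combination is guaranteed to exist, and so that the pumped run's counter trajectory never dips below zero at any intermediate step of any segment.
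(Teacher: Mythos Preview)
Your upper-bound construction matches the paper and is fine.

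For the lower bound there is a genuine gap, rooted in a misconception. You assert that in a $(k-1)$-CN one cannot find, within a single long segment, a cycle whose effect is non-negative on \emph{all} coordinates simultaneously, because ``bad sequences in $\bbN^{k-1}$ over bounded values can be arbitrarily long''. This is false in the relevant setting: the sequence of $\bbN$-configurations along a run from a fixed initial configuration is a \emph{controlled} sequence (each step changes the vector by a bounded amount), and the effective Dickson's lemma of Figueira et al.\ gives a uniform bound on the length of bad controlled sequences. Hence for large enough $m_1$ every run on $a_1^{m_1}$ contains an increasing chain of length $|Q|+1$ in $Q\times\bbN^{k-1}$, and two chain elements sharing a state yield a cycle with effect $\vec{u_1}\ge \vec 0$ in every coordinate. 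Iterating over the $a$ segments produces fully non-negative cycle effects $\vec{u_1},\ldots,\vec{u_k}\in\bbN^{k-1}$. This is exactly the tool the paper uses, and it collapses the ``multi-segment linear-algebraic compensation'' you anticipate as the hard part.

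With these $k$ non-negative vectors in hand, the paper's pigeonhole is simply that some $\vec{u_\ell}$ has its support already covered by $\{\vec{u_j}:j<\ell\}$ (otherwise each $\vec{u_j}$ would add a fresh positive coordinate, impossible with only $k-1$ coordinates). One then pumps the earlier cycles $M$ times and \emph{removes} one iteration of the $a_\ell$-cycle; for $M$ large enough the net change at the removal point is $\ge \vec 0$ on every coordinate, so the entire suffix of the original run remains valid. The resulting word has $m_\ell$ strictly smaller while $n_\ell$ is unchanged, hence lies outside $H_k$.

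By contrast, your plan pumps \emph{up} in a $b_j$ segment and compensates using per-coordinate (not globally non-negative) cycles in the $a_i$'s. Even granting the existence of the combination you describe, you would still have to argue validity both between the $a_i$ segments (where your cycles may decrease some coordinates) and in the tail $b_{j+1}\cdots b_k$ (where the $b_j$-cycle may have decreased counters below what the original suffix needed). The paper's route sidesteps all of this precisely because every pumped cycle is globally non-negative and all modifications occur in the $a$-prefix.
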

\begin{proof}[sketch]
     Constructing a $k$-DCN for $H_k$ is straightforward, by using the $i$-th counter to check that $m_i\ge n_i$, for each $i\in [k]$.

    We turn to argue that $H_k$ is not recognisable by a $(k-1)$-CN. See~\cref{apx:dimension hierarchy language} for the full proof. 
    Assume by way of contradiction that $\cA=\tup{\Sigma,Q,Q_0,\delta,F}$ is a $(k-1)$-CN with $L(\cA)=H_k$. We first observe that there exists $m_1\in \bbN$ large enough so that every run of $\cA$ on $a_1^{m_1}$ must traverse a non-negative cycle, i.e., a cycle whose overall effect is $\vec{u_1}\in \bbZ^{k-1}$ such that $\vec{u_1}[i]\ge 0$ for all $i\in [k-1]$. 
    Indeed, this is immediate by a ``uniformly bounded'' version of Dickson's lemma~\cite{figueira2011ackermannian}; any long-enough ``controlled'' sequence of vectors in $\bbN^{k-1}$ must contain an $r$-increasing chain, for any $r\in \bbN$. 

    By repeating this argument we can ultimately find $m_1,\ldots,m_k$ such that any run of $\cA$ on $a_1^{m_1}a_2^{m_2}\cdots a_k^{m_k}$ traverses a non-negative cycle in each $a_j$ segment for $j\in [k]$.
    Consider now the word $w=a_1^{m_1}a_2^{m_2}\cdots a_k^{m_k}b_1^{m_1}b_2^{m_2}\cdots b_k^{m_k}\in H_k$, then there exists an accepting run $\rho$ of $\cA$ on $w$ such that for each $j\in [k]$, the run $\rho$ traverses a non-negative cycle in segment $a_j$, with effect $\vec{u_j}\in \bbN^{k-1}$. 

    Consider the vectors $\vec{u_1},\ldots,\vec{u_k}$. We claim that there exists $\ell\in [k]$ such that the support of $\vec{u_\ell}$ is \emph{covered} by $\vec{u_1},\ldots,\vec{u_{\ell-1}}$ in the following sense: for every counter $i\in [k-1]$, if $\vec{u_\ell}[i]>0$, then there exists $j<\ell$ such that $\vec{u_j}[i]>0$. 
    Indeed, this holds since otherwise every $\vec{u_j}$ must contribute a fresh positive coordinate to the union of supports of the previous vectors, but there are $k$ vectors and only $k-1$ coordinates. 

    Next, observe that since each $\vec{u_j}$ is a non-negative cycle taken in $\rho$, then it can be pumped without decreasing any following counters, and hence induce an accepting run on a pumped word.
    Intuitively, we now proceed by pumping all the $\vec{u_j}$ cycles for $j<\ell$ for some large-enough number of times $M$, which enables us to remove one iteration of the cycle with effect $\vec{u_\ell}$ while maintaining an accepting run on a word of the form: 
    \[
    w'=a_1^{m_1+Md_1}a_2^{m_2+Md_2}\cdots a_{\ell-1}^{m_{\ell-1}+Md_{\ell-1}} a_\ell^{m_\ell-d_\ell}a_{\ell+1}^{m_{\ell+1}}\cdots a_k^{m_k}b_1^{m_1}b_2^{m_2}\cdots b_k^{m_k}.
    \]
    Since $m_\ell>m_\ell-d_\ell$, the $b_\ell$ segment is longer than the $a_\ell$ segment.
    Thus $w'\notin H_k$, this yields a contradiction.
 %
%
    \hfill\qed
\end{proof}
Apart from showing that nondeterminism cannot always compensate for increased dimension, \cref{thm:dimension hierarchy language} also shows that for every dimension $k$, there are languages recognisable by a $(k+1)$-DCN (and in particular by a $(k+1)$-CN), but not by any $k$-CN (and in particular not by any $k$-DCN). 
Thus, we obtain the following hierarchy.
\begin{corollary}
    \label{cor:expressiveness_hierarchy_CNs}
    For every $k\in \bbN$, $k$-CNs (resp. $k$-DCNs) are strictly less expressive than $(k+1)$-CNs (resp. $(k+1)$-DCNs).
\end{corollary}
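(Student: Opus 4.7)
The plan is to derive this corollary directly from \cref{thm:dimension hierarchy language} by instantiating it at dimension $k+1$. That theorem exhibits a specific witness language $H_{k+1}$ which is recognisable by a $(k+1)$-DCN but not by any $k$-CN, and this single witness will suffice to separate all four classes that appear in the corollary.

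First, I would record the trivial inclusions: every $k$-DCN is syntactically a $k$-CN (by viewing the deterministic transition function as a one-element nondeterministic choice), and likewise at dimension $k+1$. Consequently, the class of languages recognised by $k$-DCNs is contained in the class recognised by $k$-CNs, and similarly at dimension $k+1$. Thus, to obtain strict containments in both the deterministic and the nondeterministic hierarchies simultaneously, it is enough to exhibit a single language that lies in the $(k+1)$-DCN class but outside the $k$-CN class.

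Second, I would apply \cref{thm:dimension hierarchy language} with dimension $k+1$, which hands us exactly such a witness: $H_{k+1}$ is recognised by a $(k+1)$-DCN, and in particular by a $(k+1)$-CN; and $H_{k+1}$ is not recognised by any $k$-CN, and in particular not by any $k$-DCN. Combined with the trivial inclusions above, this gives strict inclusion of the $k$-CN class inside the $(k+1)$-CN class, and strict inclusion of the $k$-DCN class inside the $(k+1)$-DCN class, for every $k \in \bbN$.

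There is essentially no obstacle here: the corollary is a bookkeeping consequence of the preceding theorem. The only subtlety worth spelling out is the two-directional use of the DCN-to-CN inclusion — upward at dimension $k+1$ (to transport the positive result from $k$-DCNs to $k$-CNs, well, $(k+1)$-DCNs to $(k+1)$-CNs) and contrapositively at dimension $k$ (to transport the negative result from $k$-CNs down to $k$-DCNs). The boundary case $k=0$ is covered since $0$-CNs are NFAs and $H_1$ is plainly non-regular.
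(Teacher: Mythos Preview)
Your proposal is correct and matches the paper's own argument essentially verbatim: the paper derives the corollary directly from \cref{thm:dimension hierarchy language} by noting that $H_{k+1}$ is recognisable by a $(k+1)$-DCN (hence by a $(k+1)$-CN) but not by any $k$-CN (hence not by any $k$-DCN). Your additional remarks on the two-directional use of the DCN-to-CN inclusion and the $k=0$ boundary case are accurate elaborations of what the paper leaves implicit.
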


\section{Discussion}
\label{sec:discussion}
Broadly, this work explores the interplay between the dimension of a CN and its expressive power. This is done by studying the \emph{dimension-minimality} problem, where we ask whether the dimension of a given CN can be decreased while preserving its language, and by the more involved \emph{primality} problem, which allows a decomposition to multiple CNs of lower dimension. 
We show that both primality and dimension-minimality are undecidable. Moreover, they remain undecidable even when we discard the degenerate dimension 0 case, which corresponds to finite memory, i.e., regular languages. On the other hand, this degenerate case is one where we can show decidability for DCNs.

This work also highlights a technical shortcoming of current understanding of high-dimensional CNs: pumping arguments in the presence of $k$ dimensions and nondeterminism are very involved, and are (to our best efforts) insufficient to prove \cref{conj:prime kCN}.
To this end, we present novel pumping arguments in the proof of~\cref{thm:2CN_prime} and to some extent in the proof of~\cref{thm:dimension hierarchy language}, which make progress towards pumping in the presence of $k$ dimensions and nondeterminism.





\bibliography{main_FoSSaCS}

\appendix
    \section{Missing Proofs}
    \subsection{Proof of~\cref{prop:DCN_dimension_minimal_iff_composite}}
    \label{apx:DCN_dimension_minimal_iff_composite}
        If $\cD$ is $(1,k-1)$-composite, then there exist $1$-DCNs $\cB_1,\ldots,\cB_{k-1}$ such that $\lang(\cD)=\bigcap_{i=1}^{k-1} \lang(\cB_i)$. Define $\cB$ to be the product $(k-1)$-CN of $\cB_1,\ldots,\cB_{k-1}$, then $\lang(\cB)=\lang(\cD)$, so $\cD$ is not dimension-minimal.

    Conversely, if $\cD$ is not dimension-minimal, there exists w.l.o.g., a $k-1$-CN $\cB$ such that $\lang(\cB)=\lang(\cD)$. Then, we have $\lang(\cB)=\bigcap_{i=1}^{k-1} \lang(\cB|_i)$, so $\cD$ is $(1,k-1)$-composite.\qed 

     \subsection{Proof of~\cref{lem:maxcounternopositivecycles}}
        \label{apx:lem:maxcounternopositivecycles}
        We prove the contra-positive: assume $\rho$ has counter value $n+W|Q|$, w.l.o.g. we can assume this occurs at the end. Since the maximal counter increase along a transition is $W$, it follows that there are at least $|Q|$ indices where the counter increases beyond any previous level. Thus, we can find indices $i_0,i_1,\ldots,i_{|Q|}$ such that the counter increases in the infix of $\rho$ between $i_j$ and $i_{j+1}$. Since these are $|Q|+1$ states, then there exists a state that is visited twice, which forms a $\positive$ cycle in $\rho$.
        \qed

\subsection{Proof of~\cref{prop:k_DCN_no_k-1_DCN}}
\label{apx:k_DCN_no_k-1_DCN}
Assume by way of contradiction that there exists a $(k-1)$-DCN $\cD = \tup{ \Sigma, Q, Q_0, \delta, F}$ such that $\lang(\cD) = L_k$.
Let $n > |Q|$ and for every $i \in [k]$ consider the word $w_i = a_1^n a_2^n \cdots a_k^n b_i c^n \in L_k$. 
Since $\cD$ is deterministic and $n > |Q|$, all of the accepting runs on the $w_i$ coincide up to the $b_i$ part and have cycles in each $a_i^n$ segment as well as in the $c^n$ segment (the latter may differ according to $i$). 
Let $M$ be the product of the lengths of all these cycles.

First, observe that the cycles in all of the $a_i^n$ segments cannot decrease any counter. 
Indeed, otherwise by pumping such a cycle for large enough $t>0$ times, there would not exist an $\bbN$-run on words with the prefix $a_1^n \cdots a_{i-1}^n a_i^{n+tM}$.
This creates a contradiction since, with an appropriate suffix, such words can be accepted.

Thus, all $a_i$ cycles have non-negative effects for all counters. 
Indeed, for each counter $i$ -- associate with $i$ the minimal segment index whose cycle strictly increases $i$. 
Since there are $k-1$ counters and $k$ segments this map is not surjective, in other words, there is a segment (without loss of generality, the $a_k$ segment) such that every counter that is increased in the $a_k$ cycle is also increased in a previous segment. 
Therefore, there exist $s,t>0$ such that the word 
\begin{equation*}
    a_1^{n+sM}a_2^{n+sM}\cdots a_{k-1}^{s+sM}a_k^nb_kc^{n+tM} \notin L_k
\end{equation*} 
is accepted by $\cD$, which is a contradiction. 
\qed

\subsection{Proof of~\cref{thm:dimension hierarchy language}}
\label{apx:dimension hierarchy language}
 Constructing a $k$-DCN for $H_k$ is straightforward, by using the $i$-th counter to check that $m_i\ge n_i$, for each $i\in [k]$.

    We turn to prove that $H_k$ is not recognisable by a $(k-1)$-CN. Assume by way of contradiction that $\cA=\tup{\Sigma,Q,Q_0,\delta,F}$ is a $(k-1)$-CN with $L(\cA)=H_k$. We first observe that there exists $m_1\in \bbN$ large enough so that every run of $\cA$ on $a_1^{m_1}$ must traverse a non-negative cycle, i.e., a cycle whose overall effect is $\vec{u_1}\in \bbZ^{k-1}$ such that $\vec{u_1}[i]\ge 0$ for all $i\in [k-1]$. 
    Indeed, this is immediate by a ``uniformly bounded'' version of Dickson's lemma~\cite{figueira2011ackermannian}; any long-enough ``controlled'' sequence of vectors in $\bbN^{k-1}$ must contain an $r$-increasing chain, for any $r\in \bbN$.  
    By choosing $r=|Q|+1$, we obtain a cycle whose overall effect is non-negative.

    By repeating this argument, taking the maximal possible counter value attained after $a_1^{m_1}$ as the initial point of the controlled sequence, we can also compute $m_2$ such that any run of $\cA$ on $a_1^{m_1}a_2^{m_2}$ must traverse a non-negative cycle also in the $a_2$ segment of the run. Similarly, we can ultimately find $m_1,\ldots,m_k$ such that any run of $\cA$ on $a_1^{m_1}a_2^{m_2}\cdots a_k^{m_k}$ traverses a non-negative cycle in each $a_j$ segment for $j\in [k]$.

    Consider now the word $w=a_1^{m_1}a_2^{m_2}\cdots a_k^{m_k}b_1^{m_1}b_2^{m_2}\cdots b_k^{m_k}\in H_k$, then there exists an accepting run $\rho$ of $\cA$ on $w$ such that for each $j\in [k]$, the run $\rho$ traverses a non-negative cycle in segment $a_j$, with effect $\vec{u_j}\in \bbN^{k-1}$. 

    Consider the vectors $\vec{u_1},\ldots,\vec{u_k}$. We claim that there exists $\ell\in [k]$ such that the support of $\vec{u_\ell}$ is \emph{covered} by $\vec{u_1},\ldots,\vec{u_{\ell-1}}$ in the following sense: for every counter $i\in [k-1]$, if $\vec{u_\ell}[i]>0$, then there exists $j<\ell$ such that $\vec{u_j}[i]>0$. 
    Indeed, this holds since otherwise every $\vec{u_{j}}$ must contribute a fresh positive coordinate to the union of supports of the previous vectors, but there are $k$ vectors and only $k-1$ coordinates. 

    Next, observe that since each $\vec{u_j}$ is a non-negative cycle taken in $\rho$, then it can be pumped without decreasing any following counters, and hence induce an accepting run on a pumped word.
    Intuitively, we now proceed by pumping all the $\vec{u_j}$ cycles for $j<\ell$ enough times to enable us to remove one iteration of the cycle $\vec{u_\ell}$ while maintaining an accepting run. Since the number of `$b_\ell$'s remains $m_\ell$, this is yields a contradiction.
    
    Formally, for every $j\in [k]$ denote by $d_j$ the length of a cycle in the $a_j^{m_j}$ segment whose effect is $\vec{u_j}$. For every counter $i$ for which $\vec{u_{\ell}}[i]>0$, choose $f(i)<\ell$ with $\vec{u_{f(i)}}[i]>0$, such an $f(i)$ exists given that $\vec{u_{\ell}}$ is covered, as detailed above. 
    Now, let $M\in \bbN$ be large enough such that $M \cdot \vec{u_{f(i)}}[i]>\vec{u_\ell}[i]$  for all $i\in [k-1]$. We then obtain an accepting run of $\cA$ on the word
    \[
    w'=a_1^{m_1+Md_1}a_2^{m_2+Md_2}\cdots a_{\ell-1}^{m_{\ell-1}+Md_{\ell-1}} a_\ell^{m_\ell-d_\ell}a_{\ell+1}^{m_{\ell+1}}\cdots a_k^{m_k}b_1^{m_1}b_2^{m_2}\cdots b_k^{m_k}
    \]
    by repeating the cycle inducing $\vec{u_j}$ additional $M$ times (hence the additional $Md_j$ length) for $j<\ell$, and removing one iteration of the cycle inducing $\vec{u_\ell}$. 
    Since the counter values leading up to the removal of $\vec{u_\ell}$ were increased by at least their value after adding $\vec{u_\ell}$, the remainder of the run can be taken without change.

    Finally, since $m_\ell-d_\ell<m_\ell$, the former being the length of the $a_\ell$ segment and the latter the length of the $b_\ell$ segment, we know that $w'\notin H_k$.
\hfill\qed

\section{Greyscale-Friendly Version of~\cref{sec:prime-2vass}}
    \label{apx:grey}
    Here, we repeat~\cref{sec:bad_segments,sec:proof_of_2Prime} with augmented symbols for positive and non-negative cycles.
    \renewcommand{\posCol}[1]{{\color{cbOne}{\widehat{#1}}}}
    \renewcommand{\nonnegCol}[1]{{\color{cbTwo}{\widetilde{#1}}}}
    
    \subsection{Good and Bad Segments}
    \label{apx:sec:bad_segments}
    We lift the colour scheme of $\positive$ and $\nonnegative$ to words and runs as follows. For a word $w=uv$ and a run $\rho$, we write e.g., $\posCol{u}\nonnegCol{v}$ to denote that $\rho$ traverses a $\positive$ cycle when reading $u$, then a $\nonnegative$ cycle when reading $v$. Note that this does not preclude other cycles, e.g., there could also be negative cycles in the $u$ part, etc. That is, the colouring is not unique, but represents elements of the run.

Recall our assumption that $\lang(\cP)=\bigcap_{1 \leq j \leq k}\lang(\cV_i)$, and denote $\cV_j=\tup{\Sigma,Q^j,I^j,\delta^j,F^j}$ for all $j\in [k]$. Let $Q_{\max}=\max\{|Q_j|\}_{j=1}^k$ and denote  $\alpha=Q_{\max}!$. 
Further recall that we focus on words of the form $a^{m_1}\#a^{m_2}\#\cdots\#a^{m_{k+1}}\#b^{m_b}c^{m_c}$ for integers $\left\{m_i\right\}_{i=1}^{k+1},m_b,m_c \in \bbN$, and that we refer to the infix $a^{m_i}$ as Segment $i$, for $1 \leq i \leq k$.
We proceed to formally define good and bad segments.
\begin{definition}[Good and Bad Segments]
\label{apx:def:good_bad}
Segment $i$ is \emph{bad} in $\cV_j$ if there exist constants $\left\{m_i\right\}_{i=1}^{k+1},m_b,m_c \in \bbN$ such that all the following hold.
\begin{itemize}
    \item $\left\{m_i\right\}_{i=1}^{k+1},m_b,m_c$ are multiples of $\alpha$. 
    \item There is an accepting $\bbN$-run $\rho$ of $\cV_j$ on $w=a^{m_1}\#a^{m_2}\#\cdots\#a^{m_{k+1}}\#b^{m_b}c^{m_c}$ that adheres to one of the following three \emph{forms}:
    \begin{enumerate}
        \item $\nonnegCol{a^{m_1}}\#\nonnegCol{a^{m_2}}\#\cdots \nonnegCol{a^{m_{i-1}}}\#\posCol{a^{m_i}}\#a^{m_{i+1}}\#\cdots\#a^{m_{k+1}}\#b^{m_b}c^{m_c}$
        \item $\nonnegCol{a^{m_1}}\#\nonnegCol{a^{m_2}}\#\cdots \nonnegCol{a^{m_{i-1}}}\#\nonnegCol{a^{m_i}}\#\nonnegCol{a^{m_{i+1}}}\#\cdots\#\nonnegCol{a^{m_{k+1}}}\#\nonnegCol{b^{m_b}}\nonnegCol{c^{m_c}}$
        \item $\nonnegCol{a^{m_1}}\#\nonnegCol{a^{m_2}}\#\cdots \nonnegCol{a^{m_{i-1}}}\#\nonnegCol{a^{m_i}}\#\nonnegCol{a^{m_{i+1}}}\#\cdots\#\nonnegCol{a^{m_{k+1}}}\#\posCol{b^{m_b}}c^{m_c}$.
    \end{enumerate}
\end{itemize}

Segment $i$ is \emph{good} in $\cV_j$ if it is not bad.
\end{definition}

\cref{apx:lem:bad_segements_are_bad} below formalises the intuition that a bad segment can be pumped simultaneously with both the $b$ and $c$ segments, eventually generating a word that is not in $\lang(\cP)$, but is accepted by $\cV_j$.

Intuitively, Forms 2 and 3 mean that all segments are bad. Indeed, Segment $i$ has a $\nonnegative$ cycle, so it can be pumped safely, and in Form 2 both $b$ and $c$ can be pumped using $\nonnegative$ cycles, whereas in Form 3 we can pump $b$ using a $\positive$ cycle, and use it to compensate for pumping $c$, even if the latter requires pumping a negative cycle.

Form 1 is the interesting case, where we use a $\positive$ cycle in Segment $i$ to compensate for pumping both $b$ and $c$. The requirement that all segments up to $i$ are $\nonnegative$ is at the heart of our proof, and is explained in~\cref{apx:sec:proof_of_2Prime}.
Formally, we have the following.

\begin{lemma}
    \label{apx:lem:bad_segements_are_bad}
    Let $l$ be a bad segment in $\cV_j$, then there exist $x,y,z\in \bbN$ multiples of $\alpha$ such that for every $n\in \bbN$ the following word $w$ is accepted by $\cV_j$.
    \[w_n=a^{m_1}\#a^{m_2}\#\cdots\#a^{m_l+xn}\#a^{m_{l+1}}\#\cdots\#a^{m_{k+1}}\#b^{m_b+yn}c^{m_c+zn}\]
\end{lemma}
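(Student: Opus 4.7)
The plan is to case-split on which of Forms (i), (ii), (iii) in~\cref{def:good_bad} is witnessed by the accepting $\bbN$-run $\rho$ on $w_0$ that testifies to the $l$-th segment being bad in $\cV_j$, and in each case exhibit positive $x,y,z$ that are multiples of $\alpha = Q_{\max}!$, together with, for each $n\in\bbN$, a modification of $\rho$ into an accepting $\bbN$-run on $w_n$. The two workhorses are~\cref{lem:cyclethensimplecycle}, which extracts a simple, pumpable sub-cycle of matching sign from any given cycle, and~\cref{obs:nonnegativepumptofactorial}, which lets us iterate such a simple cycle to any multiple of $\alpha$ in length without decreasing any counter downstream when the cycle is $\nonnegative$. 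Because $\alpha$ is divisible by every simple-cycle length of $\cV_j$, any length that arises from pumping can be rounded to a multiple of $\alpha$.

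Form (ii) is immediate: every segment of $\rho$ contains a $\nonnegative$ cycle, so I set $x=y=z=\alpha$ and, for each $n$, independently pump the simple $\nonnegative$ cycles in the $l$-th, $b$, and $c$ segments an additional $n$ times via~\cref{obs:nonnegativepumptofactorial}. No counter decreases relative to $\rho$, so the result is an accepting $\bbN$-run on $w_n$.

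For Form (iii), I take $x=z=\alpha$ and must handle a potentially negative cycle in the $c$ segment. Since $m_c$ is a positive multiple of $\alpha\ge|Q_j|$, pigeonhole guarantees a simple cycle of some length $d_c\mid \alpha$ and effect $e_c\in\bbZ$ in the $c$ segment; iterating it $zn/d_c$ times creates a per-$n$ deficit of at most $|zn\,e_c/d_c|$, which is linear in $n$. I then choose $y$ to be a multiple of $\alpha$ so large that iterating the $\positive$ simple $b$-cycle (of positive effect $e_b$ and length $d_b\mid\alpha$) $yn/d_b$ times produces a per-$n$ surplus $yn\,e_b/d_b$ that dominates this deficit. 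Since the $b$ segment precedes the $c$ segment, the surplus is present before any negative $c$-pump can drain the counter; in segment $l$ I simply pump the $\nonnegative$ cycle.

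Form (i) is the main obstacle: only segments $1,\dots,l-1$ carry $\nonnegative$ cycles and segment $l$ carries a $\positive$ cycle, while the $b$ and $c$ segments offer only cycles of unknown sign (existence again from pigeonhole against $\alpha\ge|Q_j|$). I retain $y$ and $z$ from the Form (iii) analysis, and choose $x$ to be a multiple of $\alpha$ large enough that iterating the $\positive$ $l$-cycle $xn/d_l$ times yields a per-$n$ surplus $xn\,e_l/d_l$ that outweighs the worst-case combined per-$n$ deficit $|yn\,e_b/d_b| + |zn\,e_c/d_c|$ from the $b$- and $c$-segment pumps. The critical structural fact that makes this work is that segment $l$ strictly precedes both $b$ and $c$, so the surplus is in place before any deficit is incurred. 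The bookkeeping is the delicate point: I must verify non-negativity at every intermediate configuration of the pumped run and for every $n$ simultaneously. That all cycle-effect contributions are exactly linear in $n$ (so a single linear choice of $x,y,z$ works uniformly) is precisely why $\alpha=Q_{\max}!$ is the right modulus.
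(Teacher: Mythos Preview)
Your proposal is correct and takes essentially the same approach as the paper: a case analysis on the three forms, in which a $\positive$ cycle occurring earlier in the word is pumped enough to absorb the worst-case deficit of arbitrary cycles pumped in later segments, with all pump lengths taken as multiples of $\alpha$ so that every simple-cycle length divides them. The paper's proof is just a terser rendering of the same argument---it fixes $z=\alpha$, then $y$ large enough for Form~(iii), then $x$ large enough for Form~(i), observing that each choice is harmless for the remaining forms because those carry $\nonnegative$ cycles in the relevant segment; your ``retain $y,z$ from Form~(iii)'' in the Form~(i) case is an artifact of this layering and is unnecessary in a pure case split (you could take $y=z=\alpha$ there), but it does no harm.
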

\begin{proof}
    We can choose $z=\alpha$, then take $y$ to be large enough so that Form 3 runs can compensate for negative cycles in $c^z$ using $\positive$ cycles in $b^y$, while not decreasing the counters in Form $2$ runs (we can find such $y$ that is a multiple of $\alpha$, since $\alpha$ is divisible by all lengths of simple cycles). 
    Finally, we choose $x$ so that Form 1 runs can compensate for $c^z$ and $b^y$ using $\positive$ cycles in $a^x$ in Segment $l$, again while not decreasing the counters in Forms $2$ and $3$.
\hfill\qed\end{proof}

Recall that our goal is to show that there is a segment $l \in [k+1]$ that is bad in \emph{every} $V_j$, for $j \in [k]$.~\cref{apx:lem:two_segments_one_bad} below shows that each $V_j$ has at most one good segment. Therefore, there are at most $k$ good segments in total, leaving at least one bad segment as desired.

\begin{lemma}
\label{apx:lem:two_segments_one_bad}
    Let $j\in [k]$ and $r< s\in [k+1]$. Then either Segment $r$ or $s$ is bad in $\cV_j$.
\end{lemma}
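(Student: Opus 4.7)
\textbf{Proof proposal for Lemma~\ref{apx:lem:two_segments_one_bad}.}
The plan is to fix $j \in [k]$, construct a carefully calibrated word $w \in \lang(\cP)$ with $k+1$ $a$-segments, and then simultaneously pump the $r$-th and $s$-th $a$-segments together with the $b$ and $c$ segments. Since the pumped word remains in $\lang(\cP)$, it must be accepted by $\cV_j$. I will then argue that, for a sufficiently large pumping parameter, any accepting run on the pumped word must exhibit one of the three forms in Definition~\ref{apx:def:good_bad} with either $r$ or $s$ playing the role of the bad segment.

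First, I would build the seed word $w=a^{n_1}\#\cdots\#a^{n_{k+1}}\#b^{n_b}c^{n_c}$ by choosing the $n_i$ inductively as multiples of $\alpha$. Using Lemma~\ref{lem:largeenoughwordnonnegativecycle} and an a priori bound on the counter values reachable after reading any prefix $a^{n_1}\#\cdots\#a^{n_{l-1}}\#$ (obtained as length-of-prefix times the maximal update $W$), I can guarantee that every accepting run of $\cV_j$ on $w$ contains a $\nonnegative$ cycle in each of the first $l-1$ segments, for every $l$. Taking $n_b=n_c=\alpha$ makes $w\in\lang(\cP)$, witnessed by the partition that has $r$ pay for $b$ and $s$ pay for $c$.

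Next, I would pump segments $r$, $s$, $b$, and $c$ all by $x\alpha$, producing $w_x$; clearly $w_x\in\lang(\cP)$ and hence is accepted by $\cV_j$ via some $\bbN$-run $\rho_x$. The case analysis hinges on whether $\rho_x$ uses $\positive$ cycles in the pumped segments. If no $\positive$ cycle appears in the $r$-th or $s$-th segment, Lemma~\ref{lem:maxcounternopositivecycles} caps the counter value entering the $b$ segment by some $M_b$ that is \emph{independent of $x$}; similarly, absent a $\positive$ cycle on $b$, the counter entering the $c$ segment is bounded by an $x$-independent $M_c$. Invoking Lemma~\ref{lem:largeenoughwordnonnegativecycle} twice more, I can choose $x$ large enough so that, from every configuration with counter at most $M_b$ (resp.\ $M_c$), reading $b^{n_b+x\alpha}$ (resp.\ $c^{n_c+x\alpha}$) forces a $\nonnegative$ cycle. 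With $x$ fixed, I walk through the cases: a $\positive$ cycle in segment $r$ yields Form (i) for $r$; otherwise the $\nonnegative$ cycles propagate past segment $r$ (and the construction of $n_{r+1},\ldots,n_{s-1}$ ensures $\nonnegative$ cycles there too); a $\positive$ cycle in segment $s$ yields Form (i) for $s$; otherwise $\nonnegative$ cycles fill the remaining $a$-segments and a $\positive$ cycle on $b$ produces Form (iii) (so both $r$ and $s$ are bad), while its absence produces Form (ii) (again both $r$ and $s$ are bad).

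The main obstacle I anticipate is the circular dependency between the choice of pumping parameter $x$ and the bound on counter values: the threshold from Lemma~\ref{lem:largeenoughwordnonnegativecycle} depends on the initial counter, so I must first show these counter bounds ($M_b$, $M_c$) are $x$-independent before using them to pick $x$. This is exactly why Lemma~\ref{lem:maxcounternopositivecycles} is crucial, as it rules out $x$-dependent blow-up precisely in the ``no $\positive$ cycle'' branches that matter. A secondary subtlety is propagating the existence of $\nonnegative$ cycles from segment $r$ into segments $r+1,\ldots,s-1$: because absence of $\positive$ cycles in segment $r$ caps its net effect by $|Q|\cdot W$, the inductive construction of $n_{r+1}$ (which was sized to swallow any counter value attainable from the preceding prefix) continues to guarantee a $\nonnegative$ cycle, and this argument iterates up to $s-1$.
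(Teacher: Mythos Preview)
Your proposal is correct and follows essentially the same approach as the paper: inductively calibrate the $n_i$ using Lemma~\ref{lem:largeenoughwordnonnegativecycle}, set $n_b=n_c=\alpha$, pump segments $r,s,b,c$ by $x\alpha$, use Lemma~\ref{lem:maxcounternopositivecycles} to get $x$-independent bounds $M_b,M_c$ in the ``no $\positive$ cycle'' branches, fix $x$ accordingly, and then run the same case analysis into Forms~(i)--(iii). You have also correctly identified the two delicate points---breaking the apparent circularity between $x$ and the counter bounds, and propagating $\nonnegative$ cycles through segments $r+1,\ldots,s-1$ via the $|Q|\cdot W$ cap on the net effect of segment $r$---exactly as the paper handles them.
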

\begin{proof}
    Since $j$ is fixed, denote $\cV_j=\tup{\Sigma,Q,Q_0,\delta,F}$. We inductively define constants $\left\{n_i\right\}_{i=1}^{k+1},n_b,n_c \in \bbN$ as follows. $n_1$ is a large-enough multiple of $\alpha$ so that~\cref{lem:largeenoughwordnonnegativecycle} guarantees a $\nonnegative$ cycle in any accepting run of $\cV_j$ on $a^{n_1}$ from some $(q_0,0)$ with $q_0\in Q_0$. 
    Assume we have defined $n_1,\ldots n_{l-1}$, and consider the word $u=a^{n_1}\#a^{n_2}\#\cdots\#a^{n_{l-1}}\#$. Define $n=|u|W$ where $W$ is the maximal value in any transition of $\cV_j$.
    Since $u$ consists of $\frac{n}{W}$ letters, $n+1$ is greater than any counter value that can be reached by $\cV_j$ by reading $u$. We define $n_l$ to be a multiple of $\alpha$ large enough so that~\cref{lem:largeenoughwordnonnegativecycle} guarantees a $\nonnegative$ cycle when reading $a^{n_l}$ from any configuration of the form $\{(q,n') \mid q\in Q,\ n\le n+1\}$. We set $n_b=n_c=\alpha$ (the choice of $n_b,n_c$ is slightly arbitrary). Finally, we set $w=a^{n_1}\#\cdots\#a^{n_{k+1}}\#b^{n_b}c^{n_c}$.

    Now, for every $x\in \bbN$, we obtain from $w$ a word $w_x$ by pumping $x\alpha$ $a$'s to segments $r,s$ and to the $b$ and $c$ segments. That is, let $n'_i=n_i+x\alpha$ for $i\in \{r,s\}$ and $n'_i=n_i$ for $i\notin \{r,s\}$, and let $n'_b=n_b+x \alpha$ and $n'_c=n_c+x \alpha$, then $w_x=a^{n'_1}\#\cdots\#a^{n'_{k+1}}\#b^{n'_b}c^{n'_c}$.
    Observe that $w_x\in \lang(\cP)$. Indeed, since $n_{r} \geq n_b=\alpha$ and $n_{s} \geq n_c=\alpha$ we have that $n_{r}+x\alpha \geq n_b+x\alpha$ and $n_{s}+x\alpha \geq n_c+x\alpha$, so segments $r$ and $s$ can already pay for the $b$'s and $c$'s, respectively.
    In particular, $w_x\in \lang(\cV_j)$ with some accepting $\bbN$-run $\rho_x$. 

    We choose a particular value of $x$, as follows. Consider $x$ and suppose some accepting $\bbN$-run $\rho_x$ as above does not traverse a $\positive$ cycle neither in segment $r$ nor in $s$. By~\cref{lem:maxcounternopositivecycles}, the maximal possible counter value of $\rho_x$ after reading \[a^{n_1}\# \cdots\# a^{n_{r}+ x\alpha}\#\cdots\#a^{n_{s}+x\alpha}\#\cdots\#a^{n_{k+1}}\#\] is $M_b=(k+1+\sum_{z\in [k+1]\setminus\{r,s\}}n_z)W+2|Q|W$. Crucially, this value does not depend on $x$. Further, if there is no $\positive$ cycle in the segment of $b$'s as well, again the maximal counter value of $\rho$ up to the $c$ segment is bounded by $M_c=(k+2+\sum_{z\in [k+1]\setminus\{r,s\}}n_z)W+3|Q|W$, independent of $x$ and $M_b$. 
    
    By~\cref{lem:largeenoughwordnonnegativecycle}, we can now choose $x$ large enough such that for every accepting $\bbN$-run $\rho_x$ on $w_x$:
    \begin{enumerate}
        \item If $\rho_x$ does not traverse any $\positive$ cycle in segments $r,s$, then $\rho_x$ has a $\nonnegative$ cycle reading $b^{(n_b+x\alpha)}$ from any configuration of the form $\{(q,M') \mid q\in Q,\ M'\le M_b\}$.
        \item If $\rho_x$ does not traverse any $\positive$ cycle in segment $r$ nor $s$, nor in the $b$ segment, $\rho_x$ has a $\nonnegative$ cycle reading $c^{(n_c+x\alpha)}$ from any configuration of the form $\{(q,M') | q\in Q,\ M'\le M_c\}$. 
    \end{enumerate}

    Having fixed $x$, we claim that one of $r,s$ is bad for the constants in $w_x$.     

    By construction,~\cref{lem:largeenoughwordnonnegativecycle} guarantees that $\rho_x$ has a $\nonnegative$ cycles in segments $1,\ldots r-1$. If $\rho_x$ has a $\positive$ cycle in segment $r$, then $\rho_x$ is of Form 1: 
    \[\nonnegCol{a^{n_1}}\#\nonnegCol{a^{n_2}}\#\cdots\nonnegCol{a^{n_{i-1}}}\#\posCol{a^{n_{r}+x\alpha}}\#\cdots \#a^{n_{s}+x\alpha}\#\cdots\#a^{n_{k+1}}\#b^{n_b+x\alpha}c^{n_c+x\alpha}\]
    so $r$ is bad in $\cV_j$.
  
    If $\rho_x$ does not have a $\positive$ cycles in segment $r$, then again by construction~\cref{lem:largeenoughwordnonnegativecycle} guarantees $\nonnegative$ cycles in segments $r,r+1,\ldots, s-1$. Indeed, for $r$ -- we are guaranteed a $\nonnegative$ cycle reading $a^{n_{r}}$, all the more so for $a^{n_{r}+x\alpha}$. 
    As for $r+1,\ldots s-1$ -- if $\rho_x$ does not have a $\positive$ cycle in segment $r$, then the maximal effect of segment $r$ is $W|Q|$. However, $n_{r+1}$ was constructed to guarantee a $\nonnegative$ cycle even in case the effect of segment $r$ is $Wn_{r} \geq W\alpha \geq W|Q|$. 
    
    If there is a $\positive$ cycle in segment $s$ - then $\rho_x$ adheres to Form 1: 
    \[\nonnegCol{a^{n_1}}\#\nonnegCol{a^{n_2}}\#\cdots \nonnegCol{a^{n_{s-1}}}\#\posCol{a^{n_{s}+x\alpha}}\#a^{n_{s+1}}\#\cdots\#a^{n_{k+1}}\#b^{n_b+x\alpha}c^{m_c+x\alpha}\]
    and $s$ is bad in $\cV_j$. 
    
    Otherwise, using the same arguments as for segment $r$, we have that segments $s+1,\ldots,i_{k+1}$ contain $\nonnegative$ cycles. 
    In this case we are left with the $b$ and $c$ segments. The choice of $x$ guarantees a $\nonnegative$ cycle in the segment of $b$'s. If $\rho_x$ traverses a $\positive$ cycle in the $b$ segment, then $w_x$ is of Form 3: 
    \[\nonnegCol{a^{n_1}}\#\nonnegCol{a^{n_2}}\#\cdots \nonnegCol{a^{n_{k+1}}}\#\posCol{b^{n_b+x\alpha}}c^{n_c+x\alpha}\] 
    Finally, if there are no $\positive$ cycles in the $b$ segment, then the choice of $x$ again guarantees a $\nonnegative$ cycle in the $c$ segment, so $w_x$ is of Form 2: \[\nonnegCol{a^{n_1}}\#\nonnegCol{a^{n_2}}\#\cdots\nonnegCol{a^{n_{k+1}}}\#\nonnegCol{b^{n_b+x\alpha}}\nonnegCol{c^{n_c+x\alpha}}\]
    In the two latter cases both $r$ and $s$ are bad in $\cV_j$.
\hfill\qed\end{proof}
\subsection{Proof of~\cref{thm:2CN_prime}}
\label{apx:sec:proof_of_2Prime}
We now have that each $\cV_j$ has at most one good segment (otherwise we would have two good segments $r$ and $s$, contradicting~\cref{apx:lem:two_segments_one_bad}). 
Therefore, all $1$-CNs $\cV_1,\ldots,\cV_k$ have at most $k$ good segments combined. Recall that our words have $k+1$ segments, and therefore there is at least one segment $l$ that is bad in every $\cV_j$. 
Note, however, that this segment may correspond to different constants in each $\cV_j$. That is, there exists constants $\{m_i^j,m_b^j,m_c^j\mid i\in [k+1],j\in [k]\}$ witnessing that segment $l$ is bad for each $\cV_j$. 
We group the $\cV_j$ according to the Form of their accepting runs $\rho_j$, as follows.
\begin{itemize} 
    \item Form 1 are $\nonnegCol{a^{m^j_1}}\#\nonnegCol{a^{m^j_2}}\#\cdots\#\posCol{a^{m^j_l}}\#a^{m^j_{l+1}}\#\cdots\#a^{m^j_{k+1}}\#b^{m^j_b}c^{m^j_c}$. 
    \item Form 2 are $\nonnegCol{a^{m^j_1}}\#\nonnegCol{a^{m^j_2}}\#\cdots\#\nonnegCol{a^{m^j_l}}\#\nonnegCol{a^{m^j_{l+1}}}\#\cdots\#\nonnegCol{a^{m^j_{k+1}}}\#\nonnegCol{b^{m^j_b}}\nonnegCol{c^{m^j_c}}$. 
    \item Form 3 are $\nonnegCol{a^{m^j_1}}\#\nonnegCol{a^{m^j_2}}\#\cdots\#\nonnegCol{a^{m^j_l}}\#\nonnegCol{a^{m^j_{l+1}}}\#\cdots\#\nonnegCol{a^{m^j_{k+1}}}\#\posCol{b^{m^j_b}}c^{m^j_c}$. 
\end{itemize}
We now find constants such that the resulting (single) word has $l$ as a bad segment in all $\cV_j$.
First, for $i\in [k+1]\setminus\{l\}$, define $M_i=\max\{m^j_i\}_{1\leq j \leq k}$ (note that these are still multiples of $\alpha$). Similarly, define $M_c=\max\{m^j_c\}_{1\leq j \leq k}$. It remains to fix constants $\newl$ and $\newb$, which we do in phases in the following. The resulting word is then
\[
w=a^{M_1}\#\cdots\#a^{{\newl}}\#a^{M_{l+1}}\#\cdots\#a^{M_{k+1}}\#b^{{\newb}}c^{M_c}
\]
Most steps in the flow of the analysis below are based on~\cref{lem:cyclethensimplecycle,obs:nonnegativepumptofactorial}. 
We first (partially) handle Form 3. For such $\cV_j$, there is an accepting $\bbN$-run $\rho_j$ on \[\nonnegCol{a^{m^j_1}}\#\cdots\#\nonnegCol{a^{m^j_l}}\#\nonnegCol{a^{m^j_{l+1}}}\#\cdots\#\nonnegCol{a^{m^j_{k+1}}}\#\posCol{b^{m^j_b}}c^{m^j_c}\] 
By pumping $\nonnegative$ cycles as per~\cref{obs:nonnegativepumptofactorial} in all segments by $l$ we obtain an accepting $\bbN$-run $\rho'_j$ on \[\nonnegCol{a^{M_1}}\#\cdots\#\nonnegCol{a^{m^j_l}}\#\nonnegCol{a^{M_{l+1}}}\#\cdots\#\nonnegCol{a^{M_{k+1}}}\#\posCol{b^{m^j_b}}c^{m^j_c}\]
We now pump arbitrary cycles in the $c$ segment to construct a $\bbZ$-run $\rho''_j$ on: \[\nonnegCol{a^{M_1}}\#\cdots\#\nonnegCol{a^{m^j_l}}\#\nonnegCol{a^{M_{l+1}}}\#\cdots\#\nonnegCol{a^{M_{k+1}}}\#\posCol{b^{m^j_b}}c^{M_c}\] 
Next, we compensate for possible negative cycles in the $c$ segment by pumping a $\positive$ cycle in the $b$ segment. Thus, we construct an $\bbN$-run $\rho'''_j$ on \[\nonnegCol{a^{M_1}}\#\cdots\#\nonnegCol{a^{m^j_l}}\#\nonnegCol{a^{M_{l+1}}}\#\cdots\#\nonnegCol{a^{M_{k+1}}}\#\posCol{b^{\newb}}c^{M_c}\] where $\newb$ is chosen to be large enough such that $\rho'''_j$ is indeed an $\bbN$-run for all $1\leq j \leq k$.
Note that it remains to fix $\newl$.

We now turn to Form 1 with a similar process. 
We start with an accepting $\bbN$-run $\rho_j$ on \[\nonnegCol{a^{m^j_1}}\#\cdots\#\posCol{a^{m^j_l}}\#a^{m^j_{l+1}}\#\cdots\#a^{m^j_{k+1}}\#b^{m^j_b}c^{m^j_c}\]
Pump $\nonnegative$ cycles to obtain an accepting $\bbN$-run $\rho'_j$ on  \[\nonnegCol{a^{M_1}}\#\cdots\#\posCol{a^{m^j_l}}\#a^{m^j_{l+1}}\#\cdots\#a^{m^j_{k+1}}\#b^{m^j_b}c^{m^j_c}\] 
obtain a $\bbZ$-run $\rho''_j$ by pumping arbitrary cycles in the remaining segments, including the $b$ segment:
\[\nonnegCol{a^{M_1}}\#\cdots\#\posCol{a^{m^j_l}} \#a^{M_{l+1}}\#\cdots\#a^{M_{k+1}}\#b^{\newb}c^{M_c}\] 
and compensate for negative cycles by taking $\newl$ large enough so that pumping $\positive$ cycles in segment $l$ yields an accepting $\bbN$-run $\rho'''_j$ on \[\nonnegCol{a^{M_1}}\#\cdots\#\posCol{a^{\newl}}\#a^{M_{l+1}}\#\cdots\#a^{M_{k+1}}\#b^{\newb}c^{M_c}\]

We now return to Form 3 and fix the $l$-th segment by pumping $\nonnegative$ cycles to construct an accepting $\bbN$-run on \[\nonnegCol{a^{M_1}}\#\cdots\#\nonnegCol{a^{\newl}}\#\nonnegCol{a^{M_{l+1}}}\#\cdots\#\nonnegCol{a^{M_{k+1}}}\#\posCol{b^{\newb}}c^{M_c}\]

We are left with Form 2, which are the easiest to handle. We simply pump $\nonnegative$ cycles in all segments to construct an accepting $\bbN$-run $\rho'_j$ on  \[\nonnegCol{a^{M_1}}\#\cdots\#\nonnegCol{a^{\newl}}\#\nonnegCol{a^{M_{l+1}}}\#\cdots\#\nonnegCol{a^{M_{k+1}}}\#\nonnegCol{b^{\newb}}\nonnegCol{c^{M_c}}\] 

Note that the requirement for all segments leading up to $l$ to be $\nonnegative$ is crucial, otherwise we would not have been able to pump all the Forms simultaneously.

We now have that $w$ 
is accepted by every $\cV_j$, and segment $l$ is bad for all $\cV_j$ for it.

By applying~\cref{apx:lem:bad_segements_are_bad} for each of the $\cV_j$ and taking global constants to be the products of the respective constants $x,y,z$ for each $\cV_j$, We can now find $X,Y,Z\in \bbN$ multiples of $\alpha$ such that for every $n\in \bbN$ the word 
\[
w_n=a^{M_1}\#\cdots\#a^{{\newl+Xn}}\#a^{M_{l+1}}\#\cdots\#a^{M_{k+1}}\#b^{{\newb+Yn}}c^{M_c+Zn}
\]
is accepted by every $\cV_j$. 
We then choose $n$ large enough to satisfy $\sum_{i\in [k+1]\setminus\{l\}} M_i < \min\{\newb+Yn,M_c+Zn\}$, so that $w_n\notin \lang(\cP)$, since segment $l$ can only pay for $b$ or for $c$, and the remaining segments cannot pay for the remaining segment.

This contradicts the assumption that $\lang(\cP)=\bigcap_{j\in [k]}\lang(\cV_j)$, concluding the proof of~\cref{thm:2CN_prime}.
\hfill\qed

\begin{remark}[Unbounded Compositeness]
\label{apx:rmk:unbounded_compositeness}
The proof of~\cref{thm:2CN_prime} shows that if words with $k+1$ segments are allowed, then the language is not $(1,k)$-composite (and we use it to establish primality). 
By intersecting $\lang(\cP)$ with words that allow at most $k+1$ segments, we obtain a language that is not $(1,k)$-composite, but it is not hard to show that it is $(1,2^{k+1})$-composite. 
This shows that a $2$-CN can be composite, but require unboundedly many factors.
\end{remark}

\section{Regularity of Deterministic Counter Nets is Decidable}
\label{apx:DCN regularity}

To show that regularity in $k$-DCNs is decidable in \class{EXPSPACE}, we reduce the problem to regularity of Vector Addition Systems (VAS).
In our context, a VAS is a single-state $k$-DCN. 
Given the single state and determinism, we can then associate the letter corresponding to each transition with the transition itself.
Regularity of VAS\footnote{In~{\cite[Equation (11)]{BlockeletS11}}, the authors demonstrate the VAS non-regularity can be expressed in decidable fragment of existential computational tree logic specifically designed for describing properties of VAS coverability graphs.} was shown to be decidable, and is in fact in \class{EXPSPACE}~\cite[Theorem 4.5]{BlockeletS11}. 

\begin{remark}
\label{rmk:regularity_VASS}
Regularity of VAS with states (VASS) was shown to be decidable in~\cite{Demri13}. 
However, the model there does not have accepting states. 
It appears to be possible to extend the techniques used there to our setting, but would this require re-proving several results. 
Therefore, we take the following approach through VAS.
\end{remark}

Throughout this section, we consider a $k$-DCN $\cD=\tup{\Sigma,Q,Q_0,\delta,F}$. 
Our construction proceeds as follows. 
First, we obtain from $\cD$ a $k$-DCN $\cC$ with the same structure, but such that its transitions are distinctly labelled. 
Next, we invoke a technique of Hopcroft and Pansiot~\cite{HopcroftP79}; we will construct a single-state $(k+3)$-DCN (namely a VAS) $\cU$ whose language ``approximates'' that of $\cC$, and in particular preserves regularity.	
Finally, we invoke the deciability of regularity for deterministic VAS~\cite{BlockeletS11}.	
The crux of the argument is to maintain regularity throughout, even though the language changes in each step.	
In the following, a \emph{DFA} (resp. NFA) is a $0$-DCN (resp. $0$-CN).
\begin{lemma}
    \label{lem:DCN_to_relabelled}

    	Let $\cD$ be the above $k$-DCN and define $\cC=\tup{\Gamma,Q,Q_0,\delta',F}$ where $\Gamma=\{\gamma_t\mid t\in\delta\}$ with $(p,\gamma_t,\vec{v},q)\in \delta'$ if and only if $t=(p,\sigma,\vec{v},q)\in \delta$. Then $\lang(\cD)$ is regular if and only if $\lang(\cC)$ is regular.
\end{lemma}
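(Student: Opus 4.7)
The plan is to exploit the natural projection homomorphism $h : \Gamma^* \to \Sigma^*$ defined by $h(\gamma_t) = \sigma$ whenever $t = (p, \sigma, \vec{v}, q) \in \delta$. This relates runs of $\cC$ to runs of $\cD$, and since regular languages are closed under homomorphism, inverse homomorphism, and intersection with regular languages, the statement will follow once I express each of $\lang(\cC)$ and $\lang(\cD)$ in terms of the other via $h$ and a regular ``shape'' language.

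For the direction $\lang(\cC)$ regular $\Rightarrow$ $\lang(\cD)$ regular, I would prove $\lang(\cD) = h(\lang(\cC))$. Given an accepting $\bbN$-run of $\cD$ on $\sigma_1 \cdots \sigma_n$ that uses transitions $t_1, \ldots, t_n$, the same sequence of transitions, with the same configurations, is an accepting $\bbN$-run of $\cC$ on $\gamma_{t_1} \cdots \gamma_{t_n}$ (since $\cC$ differs from $\cD$ only in its labels). Conversely, applying $h$ to an accepting run of $\cC$ yields an accepting run of $\cD$. This direction does not use determinism.

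For the converse direction $\lang(\cD)$ regular $\Rightarrow$ $\lang(\cC)$ regular, I would introduce the ``transition-consistency'' regular language
\begin{equation*}
    R \;=\; \{\gamma_{t_1}\cdots\gamma_{t_n} \in \Gamma^* \mid t_i \text{ and } t_{i+1} \text{ are composable in } \cD,\ t_1 \text{ starts in } Q_0,\ t_n \text{ ends in } F\},
\end{equation*}
which is recognised by an NFA that simply tracks the current state of $\cD$. I would then claim $\lang(\cC) = h^{-1}(\lang(\cD)) \cap R$. The containment $\subseteq$ is immediate. For $\supseteq$, take $w = \gamma_{t_1}\cdots\gamma_{t_n} \in h^{-1}(\lang(\cD)) \cap R$. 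Since $h(w) \in \lang(\cD)$ and $\cD$ is deterministic (so $|Q_0|=1$ and transitions are uniquely determined by the current state and letter), there is a \emph{unique} run of $\cD$ starting at the initial state and reading $h(w)$; by hypothesis this run is an accepting $\bbN$-run. But $t_1,\ldots,t_n$ is also a transition sequence starting at the unique initial state and reading $h(w)$, so it must coincide with that unique run. Hence its counters stay non-negative and it ends in $F$, giving $w \in \lang(\cC)$. Closure of regular languages under $h^{-1}$ and intersection then yields regularity of $\lang(\cC)$.

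The only subtle point is the use of determinism in the identification $\lang(\cC) = h^{-1}(\lang(\cD)) \cap R$: without it, a word $w \in h^{-1}(\lang(\cD)) \cap R$ could encode a specific transition sequence $t_1,\ldots,t_n$ that is \emph{not} counter-valid, even though $h(w)$ is accepted by $\cD$ through some \emph{other} path of transitions, and the equality would break. Determinism rules this out, and the rest of the argument is a routine application of closure properties.
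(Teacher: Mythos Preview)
Your proof is correct and follows essentially the same approach as the paper: the paper's explicit NFA $\cA'$ in the first direction is precisely an automaton for the homomorphic image $h(\lang(\cA))$, and its product $\cA\times\cD$ in the second direction is precisely an automaton for $h^{-1}(\lang(\cA))\cap R$. Your packaging via closure properties is a bit cleaner, but the underlying argument---in particular, the use of determinism to identify the transition sequence encoded by $w$ with the unique run of $\cD$ on $h(w)$---is identical.
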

\begin{proof}
    For the first direction, assume $\lang(\cC)$ is regular, we show that $\lang(\cD)$ is regular.
    Let $\cA$ be a DFA such that $\lang(\cA) = \lang(\cC)$. 
    Now consider the process of reverting the labelling, to construct an NFA $\cA'$ with the same state space and same transition structure, but different labels.
    For a transition $(s, \gamma_t, t)$ in $\cA$, let $t = (p, \sigma, \vec{v}, q)$ be the transition in $\cD$ corresponding to $\gamma_t$, then we introduce in $\cA'$ the transition $(s, \sigma, t)$. Note that $\cA'$ may be nondeterministic.
    
    Now, since $\lang(\cA) = \lang(\cC)$, we claim that $\lang(\cA') = \lang(\cD)$. 
    Consider $w \in \lang(\cA')$ with $w=\sigma_1\cdots \sigma_n$, then there exist transitions $t_1,\ldots, t_n$ where each $t_i$ is labelled $\sigma_i$ such that $\gamma_{t_1}\cdots \gamma_{t_n}\in \lang(\cA)=\lang(\cC)$. The sequence $t_1,\ldots, t_n$ represents an accepting run of $\cD$, which is labelled by $w = \sigma_1 \cdots \sigma_n$, so $w\in \lang(\cD)$.
    Conversely, consider $w\in \lang(\cD)$ with $w = \sigma_1 \cdots \sigma_n$, then the sequence of transitions in the accepting run of $\cD$ on $w$ is $t_1,\ldots, t_n$ such that $\gamma_{t_1}\cdots \gamma_{t_n}\in \lang(\cC)=\lang(\cA)$, so by the construction of $\cA'$ we have that $w=\sigma_1\cdots \sigma_n\in \lang(\cA')$. We conclude that if $\lang(\cC)$ is regular, then so is $\lang(\cD)$.

    For the second direction, assume $\lang(\cD)$ is regular, we show that $\lang(\cC)$ is regular. 
    Let $\cA$ be a DFA such that $\lang(\cA) = \lang(\cD)$. 
    We consider a product DFA $\cA \times \cD$ over the alphabet $\Gamma$ as follows.
    When reading letter $\gamma_t$, for $t = (p, \sigma, \vec{v}, q)$, it transitions from state $(r, p)$ to $(r',q)$, where $r'$ is reached when reading $\sigma$ from $r$ in $\cA$ and where $q \in Q$ is reached when taking transition $t$ from $p \in Q$.  
    Acceptance is determined by $\cA$. 
    Intuitively, $\cA \times \cD$ simulates $\cA$ while ensuring that the transitions being read actually form a $\bbZ$-run of $\cD$. 

    We claim that $\lang(\cA\times \cD)=\lang(\cC)$. 
    Indeed, $\cC$ accepts a sequence $\gamma_{t_1} \cdots \gamma_{t_n}$ if and only if $t_1 \cdots t_n$ forms an accepting run in $\cD$, if and only if $t_1 \cdots t_n$ forms a $\bbZ$-run of $\cD$ and the letters on the transitions are $\sigma_1\cdots\sigma_n\in \lang(\cD)=\lang(\cA)$. 
    Note that this last double implication relies on the fact that $\cD$ is deterministic, hence there is a single run of $\cD$ on any given word.
\hfill\qed\end{proof}

We now employ a well-known technique of Hopcroft and Pansiot~{\cite[Lemma 2.1]{HopcroftP79}} that converts a $k$-VASS to a $(k+3)$-VAS by simulating the finite control states using three extra dimensions.
The first $k$ dimensions just mirror the $k$ dimensions in the original VASS and the last three dimensions operate the state simulation; one of the three extra dimensions explicitly maintains a value corresponding to the current state.
In the unlabelled (VASS) setting, the construction takes a transition $(p, \vec{x}, q)$ and spawns three transitions $\vec{v}_1$, $\vec{v}_2$, and $\vec{v}_3$ in the corresponding VAS.
Crucially, these vectors are chosen in such a way that if $(p, \vec{x}, q)$ is taken in the original VASS, then $\vec{v}_1, \vec{v}_2, \vec{v}_3$ can be taken in sequence.
Moreover, if $\vec{v}_1$ is taken, then the only transition that could follow is $\vec{v}_2$, and then the only transition that could follow is $\vec{v}_3$; ending with the last three dimensions taking values corresponding to $q$.

Following, in Lemma~\ref{lem:relabelled_to_VAS}, we show that one can replicate this construction starting with the distinctly-labelled $k$-DCN $\cC$ and obtaining a single-state $(k+3)$-DCN $\cU$.
In particular, if a $\sigma$ labelled transition, $(p, \sigma, \vec{v}, q)$, is present in the $k$-DCN $\cC$, then labelled transitions $(\sigma_1, \vec{v}_1)$, $(\sigma_2,\vec{v}_2)$, and $(\sigma_3,\vec{v}_3)$ are present in the constructed $(k+3)$-DCN $\cU$.

\begin{lemma}[Corollary of~{\cite[Lemma 2.1]{HopcroftP79}}]
    \label{lem:relabelled_to_VAS}
    Let $\cC$ be a distinctly-labelled $k$-DCN over alphabet $\Gamma$ (obtained as per~\cref{lem:DCN_to_relabelled}), then there exists a single-state $(k+3)$-DCN $\cU$ over \mbox{alphabet} $\Upsilon = \set{ \gamma_1, \gamma_2, \gamma_3 \mid \gamma \in \Gamma }$ such that 
    $\lang(\cU) = \{
        a_1 a_2 a_3 \, b_1 b_2 b_3 \cdots c_1,$ 
        $a_1 a_2 a_3 \, b_1 b_2 b_3 \cdots c_1c_2,$ 
        $a_1 a_2 a_3 \, b_1 b_2 b_3  \cdots c_1c_2c_3$ 
        $\mid  a \, b \cdots c \in \lang(\cC) \wedge a, b, \ldots, c \in \Gamma
    \}$.
\end{lemma}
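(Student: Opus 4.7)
The plan is to follow Hopcroft and Pansiot's classical simulation of a VASS by a single-state VAS using three extra counters~\cite{HopcroftP79}, and to carry the transition labels of $\cC$ through the construction. Enumerate $Q = \{q_1, \ldots, q_n\}$ and assign to each state $q_i$ a non-negative encoding $\vec{e}_i \in \bbN^3$ on the three additional dimensions. For every transition $t = (q_i, \gamma, \vec{v}, q_j)$ of $\cC$ (each $\gamma \in \Gamma$ corresponds to a unique such $t$, by~\cref{lem:DCN_to_relabelled}), introduce three sub-transitions of $\cU$ labelled $\gamma_1, \gamma_2, \gamma_3$, whose effects are engineered so that firing them in this order from $(\vec{x}, \vec{e}_i)$ takes $\cU$ to $(\vec{x}+\vec{v}, \vec{e}_j)$ exactly as in the original Hopcroft--Pansiot triple. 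Start $\cU$ from $\vec{0}$, shifting the base encoding so that $\vec{e}_{q_0} = \vec{0}$ for the unique initial state $q_0$ of $\cC$.

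The core of the argument is choosing the effects so that the \emph{only} $\bbN$-firing sequences correspond to properly ordered triples $\gamma_1 \gamma_2 \gamma_3$, all three pertaining to the \emph{same} transition $t$. To achieve this with only three extra counters, I would incorporate a transition-identifier into the intermediate encoding: enumerate the transitions $t_1, \ldots, t_m$ and arrange for $\gamma_1$ of $t_k$ to leave the counters in a configuration uniquely tied to the pair $(t_k, \text{phase 1})$, so that $\gamma_2$ of any $t_{k'} \neq t_k$ forces some counter negative, and similarly for $\gamma_3$. Non-negativity then plays the role of a check that each sub-transition matches both the previous one and the correct phase, in the spirit of Hopcroft--Pansiot but strengthened to enforce labelled ordering.

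For the accepting-state condition of $\cC$, add a residual perturbation to the encoding of non-accepting states so that stopping at a configuration encoding $q_j \notin F$, or stopping mid-triple on a transition whose target is $q_j \notin F$, forces some counter negative, while accepting targets leave everything non-negative. Declaring the single state of $\cU$ accepting, $\lang(\cU)$ is the set of words labelling $\bbN$-runs, which by the above is precisely the set of triple-expansions $a_1 a_2 a_3 \, b_1 b_2 b_3 \cdots c_1 c_2 c_3$---possibly truncated at the last triple to $\cdots c_1$ or $\cdots c_1 c_2$, since the intermediate configurations after $\gamma_1$ and $\gamma_1 \gamma_2$ remain non-negative---of accepting runs of $\cC$.

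The main obstacle is the joint design of the three-counter encoding: it must simultaneously enforce the sub-transition ordering, the pairing of the three sub-transitions to the \emph{same} underlying $\gamma$, and the acceptance discrimination, all using non-negativity alone. Once the encoding is fixed, soundness and completeness of the simulation follow by a (somewhat lengthy but) routine case analysis of every possible in- and out-of-order sub-transition firing, precisely analogous to~\cite[Lemma 2.1]{HopcroftP79}.
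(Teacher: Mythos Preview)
Your overall plan---carry the transition labels through Hopcroft--Pansiot's three-extra-counter simulation---is exactly what the paper does, but the paper is far more direct. It simply writes down the three sub-transition vectors using the standard state encoding $a_i = i$, $b_i = (n+1)(n+1-i)$, attaches the labels $\sigma_1, \sigma_2, \sigma_3$, notes that distinct labels in $\cC$ give distinct labels in $\cU$ (hence determinism), and then exhibits only the forward inclusion. It makes no attempt to encode transition identifiers or to treat accepting states specially via the counters.

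Your two additions are where you diverge, and one of them is a genuine gap. The transition-identifier concern is legitimate---in the paper's vectors, $t_1$ and $t_2$ depend only on the source state $q_i$, so ``mixed'' triples $\sigma_1\tau_2\rho_3$ built from three different transitions out of the same $q_i$ are fireable; the paper simply does not address this, and your idea of encoding the transition index rather than the state index in the intermediate phases is a plausible strengthening. But the accepting-state ``perturbation'' cannot work as you describe it: you want stopping at a configuration encoding $q_j\notin F$ to ``force some counter negative'', yet an $\bbN$-configuration by definition has all counters $\ge 0$. If the encoding of a non-accepting state had a negative coordinate, no $\bbN$-run could ever \emph{reach} it, so you could not even pass through non-accepting states during the simulation. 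Counter non-negativity constrains which transitions may be \emph{fired}, not where a run may \emph{end}; with a single (accepting) state and the alphabet $\Upsilon$ fixed as in the statement, there is no mechanism to reject a prefix that halts at a non-accepting simulated state. The paper sidesteps this entirely, whereas your proposed fix does not actually close the gap.
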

    \begin{proof}
    We give the construction of the VAS, this proof only differs from the proof given by Hopcroft and Pansoit~\cite{HopcroftP79} by the inclusion of transition labels.
    Assume that $\cC$ has $n$ states $Q = \{q_1, \ldots, q_n\}$.
    For each $i \in \set{1, \ldots, n}$, let $a_i = i$ and $b_i = (n+1)(n+1-i)$.
    For each transition $t = (q_i, \sigma, \vec{x}, q_j)$ in $\cC$, there are three transitions $t_1 = \sigma_1,(\vec{0}, -a_i, a_{n+1-i}-b_i, b_{n+1-i})$, $t_2 = \sigma_2,(\vec{0}, b_i, -a_{n+1-i})$, and $t_3 = \sigma_3,(\vec{x}, a_j - b_i, b_j, -a_i)$ in $\cU$ (we omit the states from the transitions, since there is only one state). 
    
    Observe that since each transition in $\cC$ has a distinct label, then each of the transitions in $\cU$ also has a distinct label, making $\cU$ a DCN.
    It remains to show that $\cU$ recognises the language given in the statement of this lemma.
    Suppose the word $\sigma_1\,\sigma_2\,\cdots\,\sigma_k \in \lang(\cC)$, then there is a path $\pi = (t_i)_{i=1}^k$ in $\cC$, so $t_i = (q_i, \sigma_i, \vec{x}_i, q_{i+1})$.
    Since $\Uu$ faithfully simulates $\cC$, the path $\pi' = (t_{i,1}\, t_{i,2}\, t_{i,3})_{i=1}^k$ in $\Uu$ witnesses the word $\sigma_{1,1} \,\,\! \sigma_{1,2} \,\,\!\sigma_{1,3} \; \sigma_{2,1} \,\,\! \sigma_{2,2} \,\,\! \sigma_{2,3} \,\cdots\, \sigma_{k,1} \,\,\! \sigma_{k,2} \,\,\! \sigma_{k,3}$. 
    In fact, the last three transitions need not all be executed; indeed $\sigma_{1,1} \,\,\! \sigma_{1,2} \,\,\! \sigma_{1,3} \,\cdots\, \sigma_{k,1} \in \lang(\Uu)$, $\sigma_{1,1} \sigma_{1,2} \sigma_{1,3} \,\cdots\, \sigma_{k,1} \,\,\! \sigma_{k,2} \in \lang(\Uu)$, and $\sigma_{1,1} \,\,\! \sigma_{1,2} \,\,\! \sigma_{1,3} \,\cdots\, \sigma_{k,1} \,\,\! \sigma_{k,2} \,\,\! \sigma_{k,3} \in \lang(\cU)$.
    \hfill\qed\end{proof}
\begin{proposition}
\label{pro:triple-letter-regularity}
    Let $L \subseteq \set{ a, b, \ldots, c }^*$ and $L' \subseteq \set{ a_1, a_2, a_3, b_1, b_2, b_3, \ldots, c_1, c_2, c_3 }^*$ such that 
    $L' = \{
        a_1 a_2 a_3 \, b_1 b_2 b_3 \cdots c_1,$ 
        $a_1 a_2 a_3 \, b_1 b_2 b_3 \cdots c_1c_2,$ 
        $a_1 a_2 a_3 \, b_1 b_2 b_3  \cdots c_1c_2c_3$ 
        $\mid  a \, b \,\cdots\, c \in L
    \}$,
    then $L$ is regular if and only if $L'$ is regular.
\end{proposition}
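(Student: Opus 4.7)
The plan is to establish each direction using standard closure properties of regular languages; no new machinery specific to counter nets is needed.

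For the forward direction, I would start with a DFA $M = (\Sigma, Q, q_0, \delta, F)$ recognising $L$ and construct an automaton $M'$ for $L'$ by ``triplifying'' each transition of $M$. Concretely, $M'$ would have state set $Q \cup \set{ (q, \sigma, i) \mid q \in Q,\ \sigma \in \Sigma,\ i \in \set{1,2} }$, with the pairs acting as fresh intermediate states, and each transition $\delta(q, \sigma) = q'$ of $M$ would be replaced by the three-edge path $q \xrightarrow{\sigma_1} (q, \sigma, 1) \xrightarrow{\sigma_2} (q, \sigma, 2) \xrightarrow{\sigma_3} q'$. To capture the three permitted truncations of the final triple, I would declare as accepting in $M'$ both every $q' \in F$ and every intermediate state $(q, \sigma, i)$ with $\delta(q, \sigma) \in F$. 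A straightforward induction on word length then yields $\lang(M') = L'$.

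For the backward direction, I would appeal to closure of regular languages under inverse homomorphism. Let $h : \Sigma^* \to (\Sigma \times \set{1,2,3})^*$ be the homomorphism sending $\sigma \mapsto \sigma_1 \sigma_2 \sigma_3$. The defining description of $L'$ immediately gives that the fully-tripled image $h(w)$ lies in $L'$ if and only if $w \in L$; the reverse implication relies on the fact that $h(w)$ determines $w$ uniquely, so no ``stray'' fully-tripled word can enter $L'$ except from a genuine element of $L$. Hence $L = h^{-1}(L')$, and regularity is preserved under inverse homomorphism, so $L'$ regular implies $L$ regular.

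No real obstacle arises here: both directions reduce to textbook constructions. The only subtleties to verify are that the intermediate-state acceptance condition in the forward construction exactly captures the truncations ending in $c_1$, $c_1 c_2$, and $c_1 c_2 c_3$ of the last triple (rather than allowing, say, truncation of a non-final triple), and that the homomorphism argument in the converse direction correctly identifies $L$ with $h^{-1}(L')$; both are immediate from the defining formula for $L'$.
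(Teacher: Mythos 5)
Your proof is correct and follows essentially the same route as the paper: the forward direction is the paper's triplet-reading DFA made explicit with intermediate states, and your inverse-homomorphism argument for the converse (noting $L = h^{-1}(L')$ since only fully-tripled words have length divisible by three) is just a cleaner packaging of the paper's NFA that simulates the $L'$-automaton on $\sigma_1\sigma_2\sigma_3$ for each input letter $\sigma$. Both directions check out, including the acceptance condition on intermediate states for the truncated final triple.
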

\begin{proof}
    This is a basic exercise in automata. The direction $L$ is regular $\implies$ $L'$ is regular is proved by constructing a DFA for $L'$ that verifies letters are read in triplets (i.e., $a_1a_2a_3$, possibly stopping in the middle), and after every triplet simulates the corresponding transition in a DFA for $L$.

    The direction $L'$ is regular $\implies$ $L$ is regular is proved by constructing an NFA for $L$ that given letter $a$ simulates the transition of a DFA for $L'$ on $a_1a_2a_3$.
\hfill\qed\end{proof}




By combining~\cref{lem:DCN_to_relabelled},~\cref{lem:relabelled_to_VAS},~\cref{pro:triple-letter-regularity}, and~\cite[Theorem 4.5]{BlockeletS11}, we can conclude with the proof of \cref{thm:DCN_regularity_decidable}.
   
    \paragraph{Proof of~\cref{thm:DCN_regularity_decidable}.}
    Given a $k$-CN $\cD$, construct a distinctly-labelled $k$-DCN $\cC$ as per~\cref{lem:DCN_to_relabelled}.
    Next, apply~\cref{lem:relabelled_to_VAS} to obtain a single-state, distinctly-labelled $k+3$-DVAS $\cU$.
    Finally, treat $\cU$ as a VAS. 
    Indeed, regularity of VAS is with respect to the transition ``names'', which are unique in $\cU$ due to the distinct labels. 
    We can then decide the regularity of $\cU$ using~{\cite[Theorem~4.5]{BlockeletS11}}. 

    We have that $\cU$ is regular if and only if $\lang(\cD)$ is regular by~\cref{lem:DCN_to_relabelled,lem:relabelled_to_VAS,pro:triple-letter-regularity}. Moreover, our construction can clearly be implemented in polynomial time, thus giving the same complexity bound.\qed 




\end{document}